\renewcommand{\paragraph}[1]{\noindent \textbf{#1\xspace}}
\newcommand{\sig}[1]{\langle #1 \rangle}
\newcommand{\floor}[1]{\lfloor #1 \rfloor}
\definecolor{yescolor}{HTML}{026378}
\def\yes{\textcolor{yescolor}{\checkmark}}
\def\no{\textcolor{red}{\ding{55}}}
\newtheorem{definition}{Definition}
\newtheorem{corollary}{Corollary}
\newtheorem{lemma}{Lemma}
\newtheorem{theorem}{Theorem}
\newtheorem{claim}{Claim}
\newtheorem{fact}{Fact}
\newcommand{\Propose}{\mathsf{propose}}
\newcommand{\OptPropose}{{\sf opt\text{-}propose}}
\newcommand{\FallbackPropose}{{\sf fb\text{-}propose}}
\newcommand{\Timeout}{\mathsf{timeout}}
\newcommand{\Vote}{\mathsf{vote}}
\newcommand{\OptVote}{\mathsf{opt\text{-}vote}}
\newcommand{\FallbackVote}{\mathsf{fb\text{-}vote}}
\newcommand{\Cert}{\mathcal{C}}
\newcommand{\TimeoutMessage}{\mathcal{T}}
\newcommand{\TimeoutCert}{\mathcal{TC}}
\newcommand{\ViewTimer}{{\sf view\text{-}timer}}
\newcommand{\network}{\mathcal{V}\xspace}
\newcommand{\node}[1]{\ensuremath{P_{#1}}\xspace}
\newcommand{\protocol}{\mathcal{P}\xspace}
\newcommand{\Lock}{{\sf lock}}
\newcommand{\Status}{\mathsf{status}}
\newcommand{\TimeoutView}{{\sf timeout\_view}}
\newcommand{\Commit}{\mathsf{commit}}
\newcommand{\mvbp}{$\omega$\xspace}
\newcommand{\mcl}{$\lambda$\xspace}
\newcommand{\vl}{$\tau$\xspace}
\newcommand{\moonshot}{Moonshot\xspace}
\newcommand{\MoonshotShort}{M\xspace}
\newcommand{\simple}{Simple \moonshot}
\newcommand{\pipelined}{Pipelined \moonshot}
\newcommand{\commit}{Commit \moonshot}
\newcommand{\SimpleShort}{S\MoonshotShort}
\newcommand{\PipelinedShort}{P\MoonshotShort}
\newcommand{\CommitShort}{C\MoonshotShort}
\newcommand{\Best}{$\mathcal{B}$\xspace}
\newcommand{\WorstJolteon}{$\mathcal{WJ}$\xspace}
\newcommand{\WorstMoonshot}{$\mathcal{WM}$\xspace}
\title{\moonshot: Optimizing Chain-Based Rotating Leader BFT via Optimistic Proposals}
\date{}
\author{\IEEEauthorblockN{Isaac Doidge, Raghavendra Ramesh, Nibesh Shrestha, and Joshua Tobkin}
\IEEEauthorblockA{Supra Research\\
\{i.doidge, r.ramesh, n.shrestha, j.tobkin\}@supraoracles.com}}
\begin{document}

\maketitle

\begin{abstract}
Existing chain-based rotating-leader BFT SMR protocols for the partially synchronous network model with constant commit latencies incur block periods of at least $2\delta$ (where $\delta$ is the message transmission latency). While a protocol with a block period of $\delta$ exists under the synchronous model, its commit latency is linear in the size of the system.

To close this gap, we present the first chain-based BFT SMR protocols with $\delta$ delay between the proposals of consecutive honest leaders and commit latencies of $3\delta$. We present three protocols for the partially synchronous model under different notions of optimistic responsiveness, two of which implement pipelining. All of our protocols achieve reorg resilience and two have short view lengths; properties that many existing chain-based BFT SMR protocols lack. We present an evaluation of our protocols in a wide-area network wherein they demonstrate significant increases in throughput and reductions in latency compared to the state-of-the-art, Jolteon. Our results also demonstrate that techniques commonly employed to reduce communication complexity---such as vote-pipelining and the use of designated vote-aggregators---actually reduce practical performance in many settings.
\end{abstract}

\section{Introduction}\label{sec:intro}
Blockchain networks have become increasingly popular as mechanisms for facilitating decentralised, immutable and verifiable computation and storage. These networks leverage Byzantine fault-tolerant (BFT) consensus protocols to ensure that their participants (called \emph{nodes}) execute the same sequence of operations (called \emph{transactions}), despite some of them exhibiting arbitrary failures. Many blockchain networks also prioritize \emph{fairness}; i.e., they strive to ensure that i) client transactions are processed promptly, without granting any client an unfair advantage over the others, and ii) nodes have an equal opportunity to be rewarded for the work that they do in the system. Public blockchain networks in particular also tend to be large, supporting hundreds (e.g. \cite{aptos-validators}) or thousands (e.g. \cite{eth-validators}) of nodes in the pursuit of decentralization, and aim to cater to many concurrent clients. Accordingly, the consensus protocols driving these networks need to be efficient, maximising transaction throughput and minimising end-to-end commit latency (i.e., the time between a client submitting a transaction and it being executed by the blockchain).

To these ends, prior works~\cite{shi2019streamlined,yin2019hotstuff,gelashvili2022jolteon,abraham2022optimal,camenisch2022internet,giridharan2023beegees,malkhi2023hotstuff} have leveraged two key strategies: i) \emph{block chaining}, and; ii) frequent \emph{leader rotation}. In the block chaining (or \emph{chained}) paradigm, transactions are grouped into \emph{blocks} that explicitly reference one or more existing blocks (called the \emph{parents} of the block), typically by including their hashes. This enables an optimization called \emph{pipelining}, wherein the \emph{vote} acknowledgement messages sent by the nodes in the course of agreeing upon a block can be counted towards the finalization of its parents, facilitating the removal of additional voting phases and thus reducing the communication and computational complexity of the protocol by a constant factor. Our work focuses on the \emph{chain-based} subcategory of chained protocols, wherein each block has exactly one parent, as opposed to DAG-based protocols in which a block may have many parents. In rotating-leader chain-based protocols, the leader responsible for proposing these blocks is changed at regular intervals, even when functioning correctly. This helps to fairly distribute the proposal workload and any related rewards. Additionally, the more frequently leaders are rotated the less amount of time a Byzantine (faulty) leader has to manipulate the ordering of pending transactions, improving censorship resistance. Accordingly, rotating-leader protocols often rotate the leader after every block proposal, an approach called \emph{leader-speaks-once} (LSO). This paper seeks to optimize chain-based BFT consensus performance in a modified version of the LSO setting, which we name \emph{leader-certifies-one} (LCO). Whereas an LSO protocol allows a leader to propose only a single block, an LCO protocol allows it to propose multiple but ensures that it produces no more than one certified block. Even as the previously cited works need not be implemented as LSO, our protocols need not be implemented as LCO, however, it is in this setting that they have the greatest advantage. We henceforth refer to chain-based BFT consensus protocols that implement leader rotation as \emph{CRL protocols}.

\begin{table*}[t]
    \scriptsize
    \centering
    \caption{\textbf{Theoretical comparison of chain-based rotating leader BFT SMR protocols}}
    \def\arraystretch{1}
    \renewcommand{\thefootnote}{\fnsymbol{footnote}}
    \setlength\tabcolsep{1mm}
    \begin{center}
    \begin{tabularx}{\textwidth}{lr c c c c c c c c c c c}
    \toprule
    \multicolumn{2}{c}{} &
    {\textbf{Model}} &
    \multirow{2}{*}{\makecell{\textbf{Minimum Commit} \\ \textbf{Latency}}} &
    \multirow{2}{*}{\makecell{\textbf{Minimum View Change} \\ \textbf{Block Period}}} &
    \multirow{2}{*}{\makecell{\textbf{Reorg} \\ \textbf{Resilience}}} &
    \multirow{2}{*}{\makecell{\textbf{View} \\ \textbf{Length}}} &
    {\textbf{Pipelined}} &
    \multicolumn{2}{c}{\textbf{Communication Complexity$^{(1)}$}}  &
    \multicolumn{2}{c}{\textbf{Optimistic Responsiveness}}  &
    \\
    \cmidrule(lr){9-10} \cmidrule(lr){11-12}
    & & &
    & & & & &
    {\textbf{steady-state}} &
    {\textbf{view-change}} &
    {\textbf{standard}} &
    {\textbf{consecutive honest}} &
    \\
    \midrule
    HotStuff &
    {\cite{yin2019hotstuff}} & 
    {psync.} &
    {$7\delta^{(2)}$} &
    {$2\delta$} &
    {\no} & 
    $4\Delta$ &
     \yes  &
    $O(n)$ & 
    $O(n)$ &
    $\yes$  &
    $\yes$  &
    \\
    Fast HotStuff & 
    {\cite{jalalzai2020fast}} &
     {psync.} &
    {$5\delta$} &
    {$2\delta$} &
    {\no} &
    $4\Delta$ &
    $\yes$  &
    $O(n)$ & 
    $O(n^2)$ &
    $\yes$  &
    $\yes$  &
    \\
     Jolteon &
    {\cite{gelashvili2022jolteon}} & 
     {psync.} &
     {$5\delta$} &
    {$2\delta$} &
    {\no} &
    $4\Delta$ &
     $\yes$  &
    $O(n)$ &
    $O(n^2)$ &
    $\yes$ &
    $\yes$  &
    \\
    HotStuff-2 &
    {\cite{malkhi2023hotstuff}} & 
     {psync.} &
     {$5\delta$} &
    {$2\delta$} &
    {\yes} &
    {$7\Delta$} &
     $\yes$  &
    $O(n)$ &
    $O(n)$ &
    \no &
    $\yes$  &
    \\
    PaLa &
    {\cite{chan2018pala}} &
     {psync.} &
    {$4\delta$} &
    {$2\delta$} &
    {\yes} &
    {$5\Delta$} &
     $\yes$  &
    $O(n^2)$ &
    $O(n^2)$ &
    \no &
    \yes  &
    \\
    ICC &
    {\cite{camenisch2022internet}} &
     {psync.} &
    {$3\delta$} &
    {$2\delta$} &
    \no  &
    {$4\Delta$} &
     \no  &
    $O(n^2)$ &
    $O(n^2)$ &
    \no &
    \yes  &
    \\
    Simplex &
    {\cite{chan2023simplex}} &
     {psync.} &
    {$3\delta$} &
    {$2\delta$} &
    \yes  &
    {$3\Delta$} &
      \no  &
    {Unbounded}$^{(3)}$ &
    $O(n^2)$ &
    \no$^{(4)}$&
    \yes &
    \\
    \midrule
    Apollo &
    {\cite{bhat2023unique}} &
     {sync.} &
    {$(f+1)\delta$} &
    {$\delta$} &
    \yes  &
    {$4\Delta$} &
      \no  &
    {$O(n)$} &
    $O(n^2)$ &
    \no  &
    \yes  &
    \\
    \midrule
    \multicolumn{2}{l}{\textbf{This work ($\S\ref{sec:moonshot_v1}$)}} & 
      {psync.} &
    {$3\delta$} &
    {$\delta$} &
    \yes  &
    {$5\Delta$} &
      \yes  &
    $O(n^2)$ & 
    $O(n^2)$ & 
    \no & 
    $\yes$  &
    \\
    \multicolumn{2}{l}{\textbf{This work ($\S\ref{sec:moonshot_v2}$)}} & 
      {psync.} &
    {$3\delta$} &
    {$\delta$} &
    \yes  &
    {$3\Delta$} &
     \yes  &
    $O(n^2)$ & 
    $O(n^2)$ &
    $\yes$ & 
    $\yes$  &
    \\
    \multicolumn{2}{l}{\textbf{This work ($\S\ref{sec:commit-moonshot}$)}} & 
      {psync.} &
    {$3\delta$} &
    {$\delta$} &
    \yes  &
    {$3\Delta$} &
     \no  &
    $O(n^2)$ & 
    $O(n^2)$ &
   
    $\yes$ & 
    $\yes$  &
    \\
    \bottomrule
    \end{tabularx}
    \end{center}
    (1) Assuming the use of threshold signatures.
    (2) HotStuff has a minimum commit latency of $7\delta$ if the next leader aggregates the votes for the current leader's proposal. In the original HotStuff specification, leaders aggregate the votes for their own proposals and then forward the resultant QC to the next leader, incurring an additional $3\delta$.
    (3) Simplex~\cite{chan2023simplex} requires each proposal to include its notarized parent blockchain, making the size of each proposal proportional to the size of the blockchain itself.
    (4) Simplex~\cite{chan2023simplex} claims responsiveness only when all nodes are honest.
    \label{tbl:related_work}
\end{table*}

Our work targets the \emph{partially synchronous} network model~\cite{dwork1988consensus} wherein there exists a time called the \emph{Global Stabilization Time} (GST) after which message delivery takes at most $\Delta$ time. We use $\delta$ to denote the actual delivery time, which naturally satisfies $\delta \le \Delta$ after GST. Many recent CRL protocols for this setting have focused on reducing communication complexity. Some have achieved linear communication complexity in their \textit{steady state} phases~\cite{jalalzai2020fast,gelashvili2022jolteon} (i.e. when the protocol makes progress under a fixed leader), while others obtain this result in their \emph{view-change} phases~\cite{yin2019hotstuff,malkhi2023hotstuff} (i.e. when the protocol elects a new leader) as well. However, these protocols sacrifice efficiency in several important metrics in their pursuit of linearity, including i) \emph{minimum commit latency} (i.e., the minimum delay between a block being proposed and it being committed by all honest---i.e., non-faulty---nodes), ii) \emph{minimum view change block period} (i.e., the minimum delay between the proposals of different honest leaders), and iii) \emph{view length} (i.e., the duration a node waits in a view before it considers the current leader to have failed). In particular, these works require at least $5\delta$ to commit a new block, at least $2\delta$ between honest proposals in the LSO setting, and view lengths of at least $4\Delta$. Moreover, since these protocols all rely on a designated node to aggregate vote messages and forward the resulting certificates, they grant the adversary the power to censor certificates for honest proposals when this aggregator is Byzantine---even after GST. Accordingly, any implementation of these protocols that uses any node other than the original proposer as the vote aggregator is not \emph{reorg resilient}; i.e., it cannot guarantee that an honest leader that proposes after GST will produce a block that becomes a part of the committed blockchain.

A recent line of work~\cite{camenisch2022internet, chan2023simplex} designed CRL protocols with minimum commit latencies of $3\delta$. However, these protocols are in the non-pipelined setting, have minimum view change block periods of $2\delta$ and either have long view lengths~\cite{camenisch2022internet} or are less practical in nature~\cite{chan2023simplex}. To the best of our knowledge, Apollo~\cite{bhat2023unique} is the only existing CRL protocol with a minimum view change block period of $\delta$. However, it incurs a minimum commit latency of $(f+1)\delta$ even during failure-free executions and assumes a synchronous network. As far as we know, no chain-based consensus protocol has simultaneously achieved a minimum view change block period of $\delta$ and a constant commit latency. To close this gap, our paper explores the design of such protocols, which we collectively refer to as \emph{Moonshot} protocols.

\subsection{Contributions}
\paragraph{Pipelined \moonshot protocols.}
We first present two state machine replication (SMR) protocols for the pipelined setting, each of which satisfies a different notion of \emph{responsiveness}: i) \emph{optimistic responsiveness}~\cite{yin2019hotstuff} (\Cref{def:responsiveness}) and ii) \emph{optimistic responsiveness under consecutive honest leaders}~\cite{giridharan2023beegees} (\Cref{def:responsiveness_ch}). Informally, the former requires an honest leader to make progress in $O(\delta)$ time after GST (i.e., without waiting for $\Omega(\Delta)$ time) while the latter requires an honest leader to make progress in $O(\delta)$ time only when the previous leader is also honest. Our first protocol satisfies the former definition and is simpler to reason about, but has a longer view length. The second satisfies the latter definition and has a shorter view length, but is more complex.

Both of our protocols require only two consecutive honest leaders after GST to commit a new block, and achieve reorg resilience through vote-multicasting. This strategy, together with an optimization that we call \emph{optimistic proposal}, also enables them to achieve both a minimum view change block period of $\delta$ and a minimum commit latency of $3\delta$. We say that a protocol implements optimistic proposal if a leader is allowed to ``optimistically'' extend a block proposed by its predecessor without waiting to observe its certification. We implement this in our protocols by allowing the leader of the next view to propose a new block when it votes for a block made by the leader of the current view.

\paragraph{Non-pipelined \moonshot protocols.}
As mentioned, pipelining can reduce the communication and computational overhead of a protocol. However, while this gives pipelined protocols good latency when all messages require a similar amount of time to propagate and process, pipelining actually increases commit latency when blocks take sufficiently longer to propagate or process than votes. Accordingly, we also present a non-pipelined variant of our second protocol in~\Cref{sec:commit-moonshot}. This final protocol retains standard optimistic responsiveness and requires only a single honest leader to commit a new block after GST.

\paragraph{Evaluation.} Subsequently, we present an evaluation of LCO implementations of our protocols against an LSO implementation of Jolteon. Our protocols outperformed Jolteon in failure-free wide-area networks (WANs) of up to $200$ nodes, committing approximately $1.5$x as many blocks at around half the latency, on average. Our protocols also outperformed under failures, with our non-pipelined protocol committing $8$x as many blocks with a reduction in latency by more than two orders of magnitude under Jolteon's worst-case leader schedule.

\paragraph{Organization.} The rest of the paper is organized as follows: In~\Cref{sec:preliminaries}, we present the system model and preliminaries for our work.  \Cref{sec:moonshot_v1} presents a pipelined CRL protocol with a minimum commit latency of $3\delta$, minimum view change block period of $\delta$, reorg resilience and optimistic responsiveness under consecutive honest leaders. We then modify this protocol in~\Cref{sec:moonshot_v2} to obtain a protocol with standard optimistic responsiveness and improved view length. In~\Cref{sec:commit-moonshot}, we give a non-pipelined version of our second protocol, which offers improved commit latency when blocks take sufficiently longer to propagate or process than votes. We present an evaluation of our protocols in~\Cref{sec:eval}, and conclude with a more detailed discussion of related works in~\Cref{sec:related_work}.

\section{Preliminaries}\label{sec:preliminaries}
We consider a system comprised of a set $\network = (\node{1}, \ldots, \node{n})$ of $n$ nodes running a protocol $\protocol$ in a reliable, authenticated all-to-all network. We assume the existence of a static, computationally bounded adversary that cannot break cryptographic primitives but may corrupt up to $f < n/3$ of the nodes when $\protocol$ begins, which it may then cause to behave arbitrarily. We refer to all nodes under the control of the adversary as being \emph{Byzantine}, while we refer to those that adhere to $\protocol$ as being \emph{honest}. We define a \emph{quorum} as a set of $\floor{\frac{n}{2}} + f + 1$ nodes. Henceforth, for the sake of simplicity, we assume that $n = 3f+1$ and that a quorum therefore contains $2f+1$ nodes.

We assume that each node has access to a local clock and that these clocks collectively have \emph{no drift} and \emph{arbitrary skew}. We also assume the \emph{partially synchronous communication} model of Dwork et al.~\cite{dwork1988consensus}. Under this model, the network starts in an initial state of asynchrony during which the adversary may arbitrarily delay messages sent by honest nodes. However, after an unknown time called the \emph{Global Stabilization Time} (GST), the adversary must ensure that all messages exchanged between honest nodes are delivered within $\Delta$ time of being sent (from the perspective of the sender). In our initial analyses, we denote the range of the actual transmission latencies of messages of all types with $\delta$, and observe that $\delta = [0, \Delta]$ after GST. Moreover, when we measure latency in terms of $\delta$, e.g. $x = y\delta$, we are denoting that $x$ requires the propagation of $y$ sequential messages (i.e. $x$ requires $y$ network hops). In our later analyses we base our communication model on the \emph{modified partially synchronous model} of Blum et al.~\cite{blum2023analyzing}. Under this model, we denote the range of the actual delivery times of small messages (such as votes) with $\rho$ and that of large messages (such as block proposals) with $\beta$, such that $\rho = [0, \min(\beta))$ and $\beta = (\max(\rho), \Delta]$, after GST. We follow a similar convention as with $\delta$ when measuring latency in terms of $\beta$ and $\rho$, with $x = y\beta + z\rho$ denoting that $x$ requires the sequential propagation of $y$ large and $z$ small messages.

We make use of digital signatures and a public-key infrastructure (PKI) to prevent spoofing and replay attacks and to validate messages. We use $\sig{x}_i$ to denote a message $x$ digitally signed by node $\node{i}$ using its private key. In addition, we use $\sig{x}$ to denote an unsigned message $x$ sent via an authenticated channel. We use $H(x)$ to denote the invocation of the hash function $H$ with input $x$.

\subsection{Property Definitions}

\paragraph{State Machine Replication.}
A state machine replication (SMR) protocol run by a network $\network$ of $n$ nodes receives requests (transactions) from external parties, called \textit{clients}, as input, and outputs a totally ordered log of these requests. We recall the definition of SMR given in~\cite{abraham2020sync}, below.

\begin{definition}[Byzantine Fault-Tolerant State Machine Replication~\cite{abraham2020sync}]
\label{def:smr}
A Byzantine fault-tolerant state machine replication protocol commits client requests as a linearizable log to provide a consistent view of the log akin to a single non-faulty node, providing the following two guarantees.
   \begin{itemize}[leftmargin=*]
   \itemsep0em 
        \item \textbf{Safety}. Honest nodes do not commit different values at the same log position.
        \item \textbf{Liveness}. Each client request is eventually committed by all honest nodes.
   \end{itemize}
\end{definition}

We clarify that the liveness SMR property only applies to transactions that are received by honest nodes. The protocols that we present in this paper guarantee that honest nodes continue to add new blocks proposed by honest leaders to their local blockchains. Therefore, they satisfy SMR liveness as long as their implementations ensure that transactions that are included in or referenced by failed blocks are resubmitted by honest leaders until they are included in a block that becomes committed. They also guarantee that if any two honest nodes commit a block at the same position in their local blockchains, then they commit the same block. Accordingly, they therefore satisfy SMR safety assuming that their implementations use a deterministic function that is consistent across all nodes to commit transactions to their transaction logs. These assumptions make our protocols agnostic to the manner in which transactions are distributed throughout the network, enabling optimizations like \emph{transaction batching}~\cite{danezis2022narwhal}.

\begin{definition}[Minimum View Change Block Period (\mvbp)]\label{def:minimum_view_change_block_period}
The minimum view change block period $\omega$ of a chained consensus protocol $\protocol$ is the minimum latency between the proposal of a block $B$ by an honest node $P_i$ and its extension (directly or indirectly) by any honest node $P_j$ such that $P_j \ne P_i$.
\end{definition}

\begin{definition}[Minimum Commit Latency (\mcl)]
\label{def:minimum_commit_latency}
A consensus protocol has a minimum commit latency of $\lambda$ if all honest nodes that commit a block proposed at time $t$, do so no earlier than $t + \lambda$.
\end{definition}

In this paper, we measure the above two metrics in relation to message transmission latency and assume that message processing time is relatively negligible.

\begin{definition}[View Length (\vl)]
\label{def:view_length}
A consensus protocol has a view length of $\tau$ if an honest node that enters view $v$ at time $t$ considers the view to have failed if it remains in $v$ until $t + \tau$.
\end{definition}

\begin{definition}[Reorg Resilience]\label{def:reorg_resilience}
We say that a consensus protocol is reorg resilient if it ensures that when an honest leader proposes after GST, one of its proposals becomes certified and this proposal is extended by every subsequently certified proposal.
\end{definition}

\paragraph{Optimistic Responsiveness.} Responsiveness requires a consensus protocol to make progress in time proportional to the actual network delay ($\delta$) and independent of any known upper bound delay ($\Delta$) when a leader is honest~\cite{pass2017hybrid}. Optimistic responsiveness requires this same guarantee, but only when certain optimistic conditions hold. Several variations~\cite{yin2019hotstuff,abraham2022optimal,giridharan2023beegees,chan2023simplex} have been formulated in the literature, two of which we make use of in this paper and recall below. 

\begin{definition}[Optimistic Responsiveness~\cite{yin2019hotstuff}]\label{def:responsiveness}
After GST, any correct leader, once designated, needs to wait just for the first $n-f$ responses to guarantee that it can create a proposal that will make progress. This includes the case where a leader is replaced.
\end{definition}

We note that in~\cite{yin2019hotstuff}, the term ``make progress'' means that all honest nodes will vote for the correct (honest) leader's proposal, not that all honest nodes observe a certificate for the included block; i.e. optimistic responsiveness does not imply reorg resilience. We also clarify that for LSO/LCO protocols, these (at most) $n-f$ responses should be messages from the previous view.

\begin{definition}[Optimistic Responsiveness (Consecutive Honest)~\cite{abraham2022optimal}]\label{def:responsiveness_ch}
We say that a protocol is optimistically responsive (consecutive honest) if after GST, for any two consecutive honest leaders $L_v$ and $L_{v+1}$, $L_{v+1}$ sends its proposal within $O(\delta)$ time of receiving $L_v$'s proposal.
\end{definition}

Importantly, this variant of optimistic responsiveness allows the protocol to wait for $\Omega(\Delta)$ time before proposing in the new view when the leader of the previous view is Byzantine.

\subsection{Protocol Definitions}
We now establish some general definitions that we make use of in all of our protocols.

\paragraph{View-based execution.}
Our protocols progress through a sequence of numbered \emph{views}, with all nodes starting in view $1$ and progressing to higher views as the protocol continues. Each view $v$ is coordinated by a designated leader node $L_v$ that is responsible for proposing a new block for addition to the blockchain. For the sake of liveness, we require that the leader election function $L$ continually elects sequences of leaders that contain at least two consecutive (not necessarily distinct) honest leaders after GST for our pipelined protocols, and only one such leader for our non-pipelined protocol. We note that $L$ must additionally change the leader every view for LCO implementations, and must elect each node with equal probability in fair implementations.

\paragraph{Blocks.} The blockchains of each of our protocols are initialized with a \emph{genesis block} $B_0$ that is known to all nodes at the beginning of the protocol. Each block references its immediate predecessor in the chain, which we refer to as its \emph{parent}, with the parent of the genesis block being $\bot$. We say that a block \emph{directly extends} its parent and \emph{indirectly extends} its other predecessors in the chain. For simplicity when reasoning, we also say that a block extends itself. We refer to the predecessors of a given block as its \emph{ancestors} and measure its \emph{height} by counting its ancestors. A block $B_k$ with height $k$ has the format, $B_k := (b_v, H(B_{k-1}))$ where $b_v$ is a fixed payload for the view $v$ for which $B_k$ is proposed, $B_{k-1}$ is the parent of $B_k$, and $H(B_{k-1})$ is the hash digest of $B_{k-1}$. We allow the implementation to dictate the contents of $b_v$ (e.g. transactions or hashes of batches of transactions). Accordingly, $B_k$ is \textit{valid} if i) its parent is valid, or if $k=0$ and its parent is $\bot$, and ii) $b_v$ satisfies the implementation-specific validity conditions. Finally, we say that two blocks $B_k$ and $B'_{k'}$ proposed for the same view \textit{equivocate} one another if they do not both have the same parent and payload.

\paragraph{Block certificates.} In our protocols, a node sends a signed $\Vote$ message to indicate its acceptance of a block. A block certificate $\Cert_v(B_k)$ for view $v$ consists of a quorum of distinct signed $\Vote$ messages for $B_k$ for $v$. We use $\Cert_v$ to denote a block certificate for view $v$ when knowledge of the related block is irrelevant to the context. We rank block certificates by their views such that $\Cert_v \le \Cert_{v'}$ if $v \le v'$. We provide more detailed definitions in the following sections where necessary.

\paragraph{Timeout messages and timeout certificates.} 
Our protocols maintain the liveness SMR property by requiring nodes to request a new leader when they fail to observe progress in their current views after a certain amount of time. They do so by sending signed timeout messages for the view, the contents of which are protocol-specific. A view $v$ timeout certificate, denoted $\TimeoutCert_{v}$, consists of a quorum of distinct signed timeout messages for $v$, denoted $\TimeoutMessage_v$.

\section{\simple}\label{sec:moonshot_v1}
We now present \simple (\Cref{fig:simple-moonshot}), the first of our CRL protocols for the pipelined setting. \simple achieves \mvbp $= \delta$, \mcl $= 3\delta$, reorg-resilience and responsiveness under consecutive honest leaders. We first discuss how our protocols obtain the former properties before elaborating on \simple itself.

\begin{figure*}[!htbp]
\small
    \begin{boxedminipage}[t]{\textwidth}
    
    A \simple node $\node{i}$ runs the following protocol whilst in view $v$:
    \begin{enumerate}[leftmargin=*]
        \item \textbf{Propose.} If $\node{i}$ is $L_v$ and enters $v$ at time $t_i$, propose: (i) upon receiving $\Cert_{v-1}(B_{k-1})$ before $t_i + 2\Delta$, or; (ii) at $t_i + 2\Delta$. Do so by multicasting $\sig{\Propose, B_k, \Cert_{v'}(B_{k-1}), v}$, where $\Cert_{v'}(B_{k-1})$ is the highest ranked block certificate known to $L_v$ and $B_k$ extends $B_{k-1}$.
        
        \item \textbf{Vote.}\label{step:vote} $\node{i}$ votes once using one of the following rules:
        \begin{enumerate}[leftmargin=*]
            \item Upon receiving $\sig{\OptPropose, B_{k}, v}$ such that $B_k$ extends $B_{k-1}$, if $\Lock_i = \Cert_{v-1}(B_{k-1})$ then multicast $\sig{\Vote, H(B_k), v}_i$.
            
            \item Upon receiving $\sig{\Propose, B_{k}, \Cert_{v'}(B_{h}), v}$, if $\Cert_{v'}(B_{h}) \ge \Lock_i$ and $B_k$ extends $B_h$ then multicast $\sig{\Vote, H(B_k), v}_i$.
        \end{enumerate}
    
        \item \textbf{Optimistic Propose.} Upon voting for $B_{k}$ in $v$, if $\node{i}$ is $L_{v+1}$, multicast $\sig{\OptPropose, B_{k+1}, v+1}$ such that $B_{k+1}$ extends $B_{k}$.
        
        \item \textbf{Timeout.} Upon receiving $f+1$ distinct $\sig{\Timeout, v}_*$  or when $\ViewTimer_i$ expires, stop voting in $v$ and multicast $\sig{\Timeout, v}_{i}$.
        
        \item \textbf{Advance View.} Upon receiving $\Cert_{v'-1}(B_h)$ or $\TimeoutCert_{v'-1}$, where $v' > v$, and before executing any other rule, do the following: i) multicast the certificate; ii) update $\Lock_i$ to the highest ranked block certificate received so far; iii) unicast a status message $\sig{\Status, v', \Lock_i}$ to $L_{v'}$ if $\Lock_i$ has a view less than $v'-1$, iv) enter $v'$, and; v) reset $\ViewTimer_i$ to $5\Delta$ and start counting down.
    \end{enumerate}

    $\node{i}$ additionally performs the following action in any view:
    \begin{enumerate}[leftmargin=*]
        \item \textbf{Direct Commit.}\label{step:direct-commit} Upon receiving $\Cert_{v-1}(B_{k-1})$ and $\Cert_{v}(B_{k})$ such that $B_{k}$ extends $B_{k-1}$, commit $B_{k-1}$. 
        \item \textbf{Indirect Commit.}\label{step:indirect-commit} Upon directly committing $B_{k-1}$, commit all of its uncommitted ancestors. 
    \end{enumerate}
    \end{boxedminipage}
    \caption{The \simple Protocol}
    \label{fig:simple-moonshot}
\end{figure*}

\paragraph{Towards achieving \mvbp $= \delta$ and \mcl $= 3\delta$.}
Prior CRL protocols require $L_v$ to observe $C_{v-1}$ before proposing during their happy paths (i.e. when views progress without any honest node sending a timeout message---as opposed to the \emph{fallback path}). This is intended to help honest leaders create blocks that will become committed, but is unnecessarily strict for this purpose and naturally affects \mvbp $\ge 2\delta$ and \mcl $\ge 4\delta$ in the pipelined setting. Our protocols improve upon these results by requiring i) the leader of view $v$ to propose a block for $v$, say $B_k$, upon voting for a block in $v-1$, say $B_{k-1}$, and; ii) nodes to multicast their votes. Allowing leaders to propose optimistically in this way enables voting for $B_{k-1}$ to proceed in parallel with the proposal of $B_k$. Moreover, when the dissemination times of vote and proposal messages are equal (see~\Cref{fig:pipelined-moonshot-communication}), having nodes multicast their votes ensures that if all honest nodes vote for $B_{k-1}$ then they will all receive $B_k$ at the time that they construct $\Cert_{v-1}(B_{k-1})$, allowing them to vote for $B_k$ and $L_{v+1}$ to propose immediately upon entering $v$. Hence, in the happy path, $L_{v+1}$ proposes as soon as it receives $L_v$'s proposal, giving our protocols an \mvbp of $\delta$. Furthermore, since our pipelined protocols require two consecutive views to produce certified blocks before a new block can be committed, requirements (i) and (ii) also give our protocols a \mcl of $3\delta$.

\begin{figure}
    \includegraphics[width=0.48\textwidth]{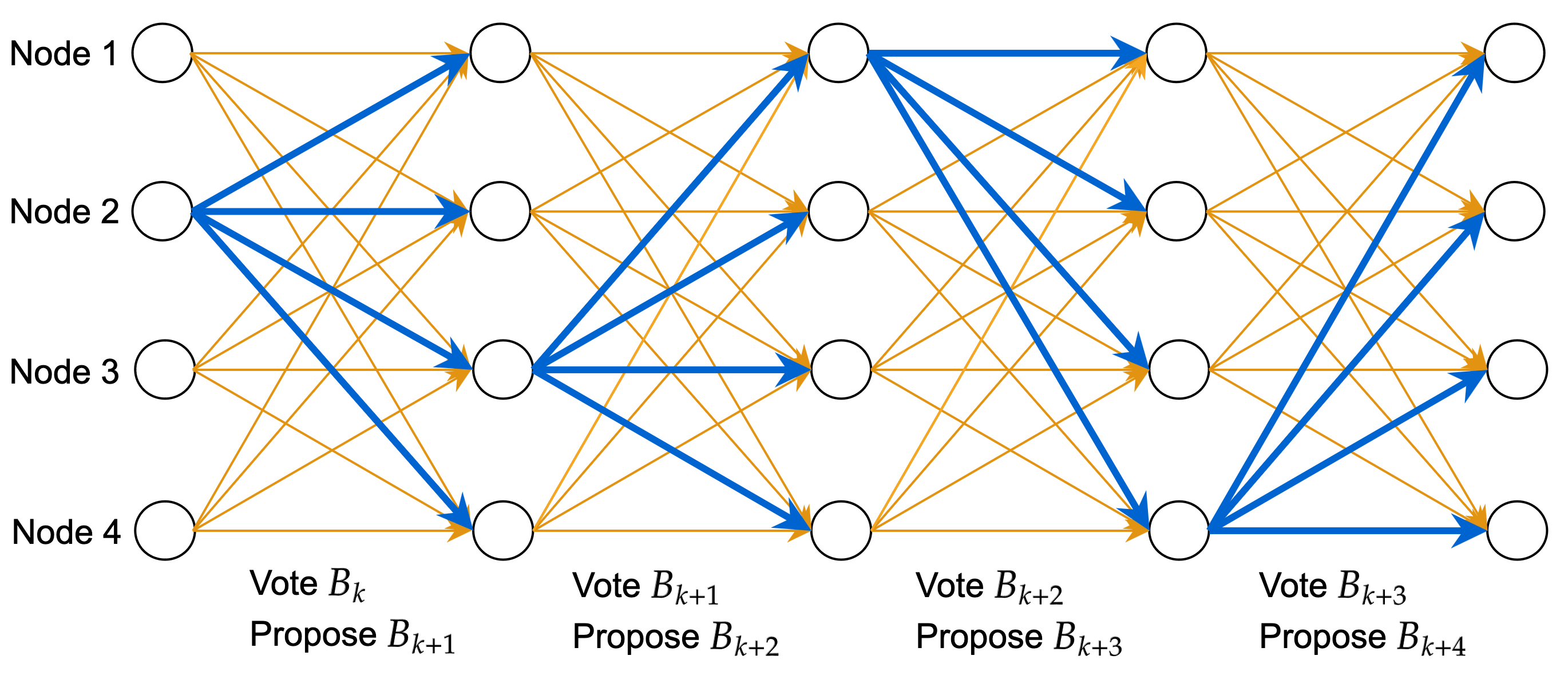}
    \caption{Optimistic proposal (pictured in blue) and vote multicasting (pictured in orange) enable \simple and \pipelined to propose new blocks at the same rate that they become certified when proposals and votes take equal time to propagate and process.}
    \label{fig:pipelined-moonshot-communication}
\end{figure}

\subsection{Protocol Details}
We define \simple in~\Cref{fig:simple-moonshot} as a series of event handlers to be run by each node $\node{i} \in \network$. We elaborate below.

\paragraph{Advance View and Timeout.} $\node{i}$ enters view $v$ from some view $v' < v$ upon receiving a view $v-1$ block certificate or a view $v-1$ timeout certificate (i.e. $\node{i}$ never decreases its local view). Before doing so, it first multicasts this certificate. This ensures that if the first honest node enters $v$ after GST then all honest nodes will enter $v$ or higher within $\Delta$ thereafter, helping our protocol to obtain liveness and reorg resilience. Subsequently, $\node{i}$ updates $\Lock_i$ to the highest ranked block certificate that it has received so far and if $\Lock_i$ is not $\Cert_{v-1}$ then $\node{i}$ unicasts a $\Status$ message containing $\Lock_i$ to $L_v$. We note that $\node{i}$ only updates $\Lock_i$ during the view transition process and does not do so after entering the new view, even if it receives a higher ranked block certificate. This ensures that the block certificate reported in a $\Status$ message corresponds to its honest sender's $\Lock_i$ for the duration of $v$, meaning that if $L_v$ waits to receive status messages from all honest nodes before proposing, then it is guaranteed to extend the block certified by the highest ranked block certificate locked by any honest node. Finally, $\node{i}$ enters $v$, resets $\ViewTimer_i$ to $5\Delta$ and starts counting down. If $L_v$ is honest and the network is synchronous then $\node{i}$ should enter $v+1$ within $5\Delta$ of entering $v$. If it does not, then it considers the current leader to have failed and so multicasts $\sig{\Timeout, v}_{i}$ to request a view change and prevent the protocol from halting. $\node{i}$ also does this whenever it observes that at least one other honest node has requested a view change for $v$.

\paragraph{Propose.} \simple allows two proposals to be created during view $v$: i) an optimistic proposal for view $v+1$, and; ii) a normal proposal for $v$. In the former case, $L_{v+1}$ multicasts $\sig{\OptPropose, B_{k+1}, v+1}$, where $B_{k+1}$ extends $B_k$, upon voting for $B_k$ in $v$, hoping that $B_k$ will become certified. When the protocol is operating in its happy path after GST, $B_k$ will indeed become certified, enabling voting for consecutive honest proposals to proceed without delay. In the latter case, $L_v$ multicasts $\sig{\Propose, B_h, \Cert_{v'}(B_{h-1}), v}$, where $B_h$ extends $B_{h-1}$, either upon receiving $\Cert_{v-1}(B_{h-1})$ within $2\Delta$ time of entering $v$, or after having $\Cert_{v'}(B_{h-1})$ as its highest block certificate after waiting for $2\Delta$ after entering $v$. Since messages are delivered within $\Delta$ time after GST, this $2\Delta$ wait ensures that $L_v$ will extend the highest certified block locked by any honest node when it proposes after GST, assisting with liveness and reorg resilience. We require $L_v$ to multicast a normal proposal even when it has already multicasted an optimistic proposal to ensure that it always produces a certified block when it proposes after GST. We note that this requirement can be removed from each of our protocols to obtain the corresponding leader-speaks-once variant, but doing so naturally sacrifices reorg resilience because the adversary can cause optimistic proposals to fail, even after GST. We discuss how after introducing the remaining protocol rules.

\paragraph{Vote.} \simple has two rules for voting, at most one of which each node may invoke at most once per view. Firstly, $\node{i}$ may $\Vote$ for an optimistic proposal containing $B_k$ proposed for view $v$ and extending $B_{k-1}$, when locked on $\Cert_{v-1}(B_{k-1})$. In the best case, $\node{i}$ receives the optimistic proposal containing $B_k$ and $\Cert_{v-1}(B_{k-1})$ simultaneously and so votes for $B_k$ immediately upon entering view $v$. Alternatively, if $\node{i}$ receives $\sig{\Propose, B_h, \Cert_{v'}(B_{h-1}), v}$ and $\Cert_{v'}(B_{h-1})$ ranks higher than or equal to $\Lock_i$ and $B_h$ extends $B_{h-1}$, then it votes for $B_h$. Importantly, if $L_v$ creates both an optimistic proposal and a normal proposal with the same parent, then since payloads are fixed for a given view, both proposals will contain the same block. This ensures that all honest nodes will vote for the same block, even if they use different vote rules.

\paragraph{Commit.} Finally, at any time during protocol execution, when an honest node $\node{i}$ receives $\Cert_{v-1}(B_{k-1})$ and $\Cert_{v}(B_{k})$, it commits $B_{k-1}$ and all of its uncommitted ancestors. We say that a node \textit{directly commits} $B_k$ and \textit{indirectly commits} any ancestors that it commits as a result of committing $B_k$.

\subsection{Analysis}
We now provide some discussion on the properties of \simple, including brief intuitions for its safety, liveness and reorg resilience. We give rigorous proofs for each of these properties in Appendix~\ref{sec:simple-proof}.

\paragraph{How does our protocol achieve safety?} The vote and commit rules together ensure that \simple satisfies the safety property of SMR. Specifically, if an honest node commits $B_k$ for view $v$ after receiving $\Cert_v(B_k)$ and $\Cert_{v+1}(B_{k+1})$, then a majority of the honest nodes must have voted for $B_{k+1}$ in view $v+1$. Therefore, since honest nodes vote at most once per view, an equivocating $\Cert_v$ cannot exist so no honest node will be able to commit any block other than $B_k$ for view $v$. Moreover, since the set of honest nodes that voted for $B_{k+1}$, say $H$, must have had $\Cert_v(B_k)$ when they voted for $B_{k+1}$, they will either lock this certificate or one of a higher rank upon transitioning from $v+1$ to a higher view. Once again then, since block certificates must contain votes from a majority of the honest nodes, every block certificate for every view greater than $v$ must contain a vote from at least one member of $H$. Suppose that $v'$ is the first view greater than $v+1$ to produce a block certificate and let $\node{i}$ be a member of $H$ that votes towards $\Cert_{v'}(B_l)$. Importantly, since $\node{i}$ must lock $\Cert_v(B_k)$ before voting for $B_l$ and since no higher ranked block certificate than $\Cert_{v+1}(B_{k+1})$ can exist before it does so, by the vote rules, $B_l$ must directly extend either $B_k$ or $B_{k+1}$. By extension then, every block certified for a higher view than $v$ must extend $B_k$. This is sufficient to ensure safety.

\paragraph{Why propose twice?} As previously mentioned, we require our leaders to make normal proposals even if they have already made an optimistic proposal because the adversary can cause optimistic proposals to fail even after GST. Suppose that an honest leader $L_v$ proposes $B_k$ extending $B_{k-1}$ in an optimistic proposal. Per the optimistic vote rule, the adversary can cause $B_k$ to fail by preventing some honest node from locking $\Cert_{v-1}(B_{k-1})$. This could happen either due to some other block, say $B_l$, becoming certified for view $v-1$, or due to the node entering $v$ via $\TimeoutCert_{v-1}$. In either case, since $L_v$ is guaranteed to observe the highest ranked block certificate locked by any honest node upon entering view $v$, say $\Cert_{v'}(B_h)$, before it multicasts its normal proposal, it will be able to multicast a new block, say $B_{h+1}$, that extends $B_h$. Therefore, since honest nodes only update their locks when entering a new view, those that receive $B_{h+1}$ whilst in view $v$ will all have $\Lock_i \le \Cert_{v'}(B_h)$ and hence will vote for it. Thus, this requirement yields two important properties: i) that honest leaders are able to correct themselves when they initially extend a block that fails to become certified, and; ii) that if this block does become certified and its certificate is locked by any honest node, then the block included in the optimistic proposal will become certified even if some honest nodes initially fail to lock this certificate. This ensures that every honest leader that proposes after GST produces exactly one certified block.

\paragraph{How does our protocol achieve reorg resilience and liveness?}
As we have just explained, \simple guarantees that every honest leader that proposes after GST produces exactly one certified block. Suppose that $L_v$ is such an honest leader and produces $\Cert_v(B_k)$, and let $t$ denote the time that the first honest node enters $v$. Since the multicasting of block certificates and timeout certificates ensures that all honest nodes will enter view $v$ or higher within $t + \Delta$, if $L_v$ is honest then it will send its last proposal by $t + 3\Delta$, so all honest nodes will finish voting before $t + 4\Delta$ and thus before any honest node can have sent $\TimeoutMessage_v$ or higher. Therefore, either all honest nodes vote for $B_k$ or some honest node must have entered $v+1$ via $\Cert_v(B_k)$ first. In either case, all honest nodes will receive $\Cert_v(B_k)$ within $5\Delta$ of the first honest node entering $v$, so at least $f+1$ honest nodes will lock this certificate. Therefore, since these $f+1$ nodes will not vote for any optimistic proposal that does not directly extend their lock, and since no higher ranked block certificate can exist before they lock $\Cert_v(B_k)$, every certified block for every view greater than $v$ must extend $B_k$ satisfying \Cref{def:reorg_resilience}. Moreover, when $L_{v+1}$ is also honest, it will necessarily be among the $f+1$ honest nodes that lock $\Cert_v(B_k)$ and will therefore multicast a proposal that extends $B_k$ no later than the time that it enters $v+1$. Consequently, by the prior reasoning, all honest nodes will also receive $\Cert_{v+1}(B_{k+1})$ and thus will commit $B_k$. Accordingly, \simple commits a new block whenever there are two consecutive honest leaders after GST, which is sufficient to ensure liveness. 

\paragraph{Communication complexity.}
Per~\Cref{fig:simple-moonshot}, \simple requires nodes to multicast votes and timeout messages. Each of these messages are $O(1)$ words in size, giving these actions a network-level communication complexity of $O(n^2)$ words per view. Optimistic proposals are likewise $O(1)$ in size, as are normal proposals when threshold signatures are used to compress the vote signatures used to construct block certificates. Accordingly, since the proposal actions are only performed by the leader of the view, they incur $O(n)$ words per view. The forwarding of $\Lock_i$ to the next leader upon advancing to a new view incurs a similar cost, assuming threshold signatures, while the multicasting of certificates incurs $O(n^2)$ communication. Overall then, \simple exhibits a network-level communication complexity of $O(n^2)$ words per view, assuming threshold signatures.

\section{\pipelined}\label{sec:moonshot_v2}
Although \simple has \mvbp $= \delta$, \mcl $= 3\delta$ and reorg resilience, it only provides responsiveness under consecutive honest leaders. If the leader of the current view fails then the next leader has to wait for $\Omega(\Delta)$ time to ensure that it can create a block that will become certified, naturally increasing \vl. We now present \pipelined (\Cref{fig:pipelined-moonshot}), a CRL protocol that improves on \simple in both of these areas to achieve full optimistic responsiveness and a \vl of $3\Delta$.

\begin{figure*}[!ht]
\small
    \begin{boxedminipage}[t]{\textwidth}
    A \pipelined node $\node{i}$ runs the following protocol whilst in view $v$:
    \begin{enumerate}[leftmargin=*] 
        \item \textbf{Propose.} Upon entering $v$ and after executing \textit{Advance View} and \textit{Lock}, if $\node{i}$ is $L_v$, propose using one of the following rules:
        \begin{enumerate}
            \item \textbf{Normal Propose.} If $L_{v}$ entered $v$ by receiving $\Cert_{v-1}(B_{k-1})$, multicast $\sig{\Propose, B_{k}, \Cert_{v-1}(B_{k-1}), v}$ such that $B_k$ extends $B_{k-1}$.

            \item \textbf{Fallback Propose.} If $L_{v}$ entered $v$ by receiving $\TimeoutCert_{v-1}$, multicast $\sig{\FallbackPropose, B_k, \Cert_{v'}(B_{k-1}), \TimeoutCert_{v-1}, v}$ such that $\Cert_{v'}(B_{k-1})$ is $\Lock_i$ and $B_k$ extends $B_{k-1}$.
        \end{enumerate}

        \item \textbf{Vote.}
        $\node{i}$ votes at most twice in view $v$ when the following conditions are met:
        \begin{enumerate}
            \item \textbf{Optimistic Vote.} Upon receiving $\sig{\OptPropose, B_{k}, v}$ such that $B_k$ extends $B_{k-1}$, if (i) $\TimeoutView_i < v-1$, (ii) $\Lock_{i} = \Cert_{v-1}(B_{k-1})$ and (iii) $\node{i}$ has not voted in $v$, multicast $\sig{\OptVote, H(B_k), v}_i$.

            \item After executing \textit{Advance View} and \textit{Lock} with all embedded certificates, vote once when one of the following conditions are satisfied:
            \begin{enumerate}
                \item  \textbf{Normal Vote.} Upon receiving $\sig{\Propose, B_{k}, \Cert_{v-1}(B_h), v}$, if (i) $\TimeoutView_i < v$, (ii) $B_k$ directly extends $B_h$ and (iii) $\node{i}$ has not sent an optimistic vote for an equivocating block $B'_{k'}$ in $v$, multicast $\sig{\Vote, H(B_k), v}_i$.

                \item \textbf{Fallback Vote.} Upon receiving $\sig{\FallbackPropose, B_{k}, \Cert_{v'}(B_h), \TimeoutCert_{v-1},v}$ if (i) $\TimeoutView_i < v$, (ii) $B_k$ directly extends $B_h$ and (iii) $\Cert_{v'}(B_h)$ has an equal or greater rank than the highest ranked certificate in $\TimeoutCert_{v-1}$, multicast $\sig{\FallbackVote, H(B_k), v}_i$.
            \end{enumerate}
        \end{enumerate}

        \item \textbf{Optimistic Propose.} Upon voting for $B_{k}$ in view $v$, if $\node{i}$ is $L_{v+1}$, multicast $\sig{\OptPropose, B_{k+1}, v+1}$ such that $B_{k+1}$ extends $B_{k}$.

        \item \textbf{Timeout.} Upon the expiration of $\ViewTimer_i$, if $\node{i}$ has not already sent $\TimeoutMessage_v$, multicast $\sig{\Timeout, v, \Lock_i}_{i}$ and set $\TimeoutView_i = \max(\TimeoutView_i, v)$. Additionally, upon receiving $f+1$ distinct $\sig{\Timeout, v', \_}_*$ messages or $\TimeoutCert_{v'}$ such that $v' \ge v$ and not having sent $\TimeoutMessage_{v'}$, multicast $\sig{\Timeout, v', \Lock_i}_{i}$ and set $\TimeoutView_i = \max(\TimeoutView_i, v')$.

        \item \textbf{Advance View.} $\node{i}$ enters $v'$ where $v' > v$ using one of the following rules:
        \begin{itemize}
            \item[-] Upon receiving $\Cert_{v'-1}(B_h)$. Also, multicast $\Cert_{v'-1}(B_h)$.
            \item[-] Upon receiving $\TimeoutCert_{v'-1}$. Also, unicast $\TimeoutCert_{v'-1}$ to $L_{v'}$.
        \end{itemize}
        Finally, reset $\ViewTimer_i$ to $3\Delta$ and start counting down.
    \end{enumerate}

    $\node{i}$ additionally performs the following actions in any view:
    \begin{enumerate}[leftmargin=*]
        \item \textbf{Lock.}\label{step:lock2} Upon receiving $\Cert_v(B_k)$ in any protocol message whilst having $\Lock_i = \Cert_{v'}(B_{k'})$ such that $v > v'$, set $\Lock_i$ to $\Cert_v(B_k)$.

        \item \textbf{Direct Commit.}\label{step:direct-commit-2} Upon receiving $\Cert_{v-1}(B_{k-1})$ and $\Cert_{v}(B_{k})$ such that $B_{k}$ extends $B_{k-1}$, commit $B_{k-1}$.
        
        \item \textbf{Indirect Commit.}\label{step:indirect-commit-2} Upon directly committing $B_{k-1}$, commit all of its uncommitted ancestors.
    \end{enumerate}
\end{boxedminipage}
\caption{The \pipelined Protocol}
\label{fig:pipelined-moonshot}
\end{figure*}

\paragraph{Towards achieving optimistic responsiveness with \vl $=3\Delta$.}\label{sec:pipelined_reasoning}
In \pipelined, we separate the fallback case of \simple's normal proposal into its own proposal type by enabling $L_v$ to create a \emph{fallback proposal} extending its $\Lock$ upon entering $v$ via $\TimeoutCert_{v-1}$. While this means that $L_v$ no longer needs to wait $\Omega(\Delta)$ time before proposing in the fallback path---making \pipelined optimistically responsive per~\Cref{def:responsiveness}---it also means that $L_v$ may not receive the locks of all honest nodes before proposing. However, since $\TimeoutCert$s must be constructed from $2f+1$ timeout messages, which in turn must now include the sender's $\Lock$, $L_v$ must process the locks of at least $f+1$ honest nodes before creating its proposal. Consequently, $L_v$'s $\Lock$ is guaranteed to have a rank at least as great as the highest the highest ranked $\Lock$ among these nodes at the time that they sent their timeout messages. This, along with the rules for voting, guarantees that there cannot exist a committable block with a higher height than $L_v$'s $\Lock$. This helps to preserve safety in light of the additional modification that we make to preserve liveness, which we do by allowing $\node{i}$ to vote for $\sig{\FallbackPropose, B_{k}, \Cert_{v'}(B_h), \TimeoutCert_{v-1},v}$ even if it has $\Lock_i > \Cert_{v'}(B_h)$, given $B_{k}$ directly extends $B_h$ and $\Cert_{v'}(B_h)$ has a rank at least as great as the highest ranked block certificate included in $\TimeoutCert_{v-1}$.

Requiring timeout messages to include block certificates naturally increases their size. Similarly, since $\TimeoutCert$s must provably contain the highest ranked block certificate out of $2f+1$ timeout messages, they are necessarily linear in size even when using threshold signatures~\cite{boneh2004short}. Accordingly, to avoid cubic communication complexity even under threshold signatures, our protocol replaces the $\TimeoutCert$ multicast of \simple with a Bracha-style amplification step~\cite{bracha1987asynchronous}. In particular, $\node{i}$ multicasts a $\TimeoutMessage_v$ whilst in view $v'$ where $v' \le v$ when it first receives either $f+1$ $\TimeoutMessage_v$ or $\TimeoutCert_v$ from other nodes. This ensures that all honest nodes continue to enter new views after GST: In short, either all honest nodes will send view $v$ Timeout messages, or, since we still require nodes to multicast block certificates, either some honest node must have observed and multicasted a view $v$ or higher block certificate, or all honest nodes will send view $v''$ Timeout messages, where $v'' > v$.

\paragraph{Linear timeout certificates without threshold signatures.} Block certificates are necessarily linear in size without the use of threshold signatures. Consequently, to avoid $O(n^2)$-sized timeout certificates in this setting, a node may sign only the view number of the block certificate included in its timeout message instead of the full block certificate. This allows $\TimeoutCert_v$ to be constructed from $2f+1$ such signatures mapped to their corresponding block certificate view numbers, and the full highest-ranked block certificate. We observe that this block certificate must be included for the timeout certificate to be able to guarantee the existence of a block certificate for the highest reported view number.

\subsection{Protocol Details}
We now present the details of \pipelined. We start with refinements to the definition of a block certificate and the certificate ranking rules before elaborating on the steps outlined in~\Cref{fig:pipelined-moonshot} that differ from \simple.

\paragraph{Block certificates.} In \pipelined, we use three types of signed vote messages: an optimistic vote ($\OptVote$), a normal vote ($\Vote$) and a fallback vote ($\FallbackVote$). Importantly, vote messages with different types may not be aggregated together. Accordingly, we now distinguish between three different types of block certificates. An \emph{optimistic certificate} $\Cert^{o}_v(B_h)$ for a block $B_h$ consists of $2f+1$ distinct $\OptVote$ messages for $B_h$ for view $v$. Similarly, a \emph{normal certificate} $\Cert^{n}_v(B_h)$ consists of $2f+1$ distinct $\Vote$ messages for $B_h$ for view $v$. Finally, a \emph{fallback certificate} $\Cert^{f}_v(B_h)$ consists of $2f+1$ distinct $\FallbackVote$ messages for $B_h$ for view $v$. We denote a block certificate with $\Cert_v(B_h)$ whenever its type is not relevant.

\paragraph{Locking.} \simple only allowed $\node{i}$ to update $\Lock_i$ upon entering a new view. In contrast, \pipelined requires $\node{i}$ to update $\Lock_i$ upon receiving a higher ranked block certificate than its current $\Lock_i$, which may happen at any time during the protocol run.

\paragraph{Advance View and Timeout.} As in \simple, $\node{i}$ enters view $v$ from some view $v' < v$ upon receiving $\Cert_{v-1}$ or $\TimeoutCert_{v-1}$. In the former case, as before, it then multicasts $\Cert_{v-1}$ to assist with reorg resilience and view synchronization. Comparatively, in the latter case $\node{i}$ now unicasts $\TimeoutCert_{v-1}$ to $L_v$ instead of multicasting it. This helps to reduce the communication complexity of the protocol in light of its modified timeout messages, while still ensuring that $L_v$ enters $v$ within $\Delta$ of the first honest node doing so after GST. This in turn makes a $\ViewTimer$ of $3\Delta$ sufficient to guarantee the liveness of the protocol (which can be further optimized under crashed leaders, as explained in Appendix~\ref{sec:view_length_optimiztion}), which $\node{i}$ additionally resets regardless of how it enters $v$, and starts counting down. As before, if $\node{i}$ does not advance to a new view before its view timer expires then it multicasts $\sig{\Timeout, v, \Lock_i}_{i}$. It likewise multicasts the same message for $v''$ upon observing evidence of at least one honest node requesting a view change for $v''$ such that $v'' \ge v$. This latter rule differs from \simple and compensates for \pipelined's removal of $\TimeoutCert$ multicasting.

\paragraph{Propose.} \pipelined consists of three distinct ways to propose a new block in a view; i) an optimistic proposal, ii) a normal proposal, and iii) a fallback proposal. An honest node proposes using at most two of the three methods. The optimistic proposal rule remains the same as in \simple and serves the same purpose, allowing voting to proceed without delay when network conditions are favourable. Comparatively, the normal proposal rule now only captures the first case of the same rule in \simple: Namely, $L_v$ multicasts a normal proposal $\sig{\Propose, B_k, \Cert_{v-1}(B_{k-1}), v}$, where $B_k$ extends $B_{k-1}$, upon entering view $v$ via $\Cert_{v-1}(B_{k-1})$. As before, $L_v$ does this even if it has already sent an optimistic proposal extending $B_{k-1}$ (which, as before, will necessarily contain $B_k$). As in \simple, this helps \pipelined obtain reorg resilience by ensuring that, after GST, $L_v$ will create a proposal that all honest nodes will vote for. Finally, $L_v$ multicasts $\sig{\FallbackPropose, B_h, \Cert_{v'}(B_{h-1}), \TimeoutCert_{v-1}, v}$, where $B_h$ extends $B_{h-1}$ and $\Cert_{v'}(B_{h-1})$ is $\Lock_i$, upon entering $v$ via $\TimeoutCert_{v-1}$. Importantly, since $L_v$ only attempts this proposal after executing the \emph{Lock} rule, $\Cert_{v'}(B_{h-1})$ is guaranteed to have a rank greater than or equal to that of the highest ranked certificate included in $\TimeoutCert_{v-1}$. We note that in the case of fallback proposals, the requirement that blocks created for the same view must contain the same payload can be relaxed because the voting rules ensure that if $\Cert^{f}_v(B_h)$ exists then no other block can be certified for $v$.

\paragraph{Vote.} In \pipelined, $\node{i}$ may vote up to twice in a view; at most once for an optimistic proposal and at most once for either a normal proposal or a fallback proposal. More precisely, $\node{i}$ multicasts $\sig{\OptVote, H(B_k), v}_i$ for $\sig{\OptPropose, B_k, v}$, where $B_k$ extends $B_{k-1}$, when in view $v$ if it has not yet sent a vote for $v$, or a timeout message for $v-1$ or higher, and has locked $\Cert_{v-1}(B_{k-1})$. As before, this enables $\node{i}$ to vote for $B_k$ immediately upon entering $v$ in the best case. Additionally, $\node{i}$ sends $\sig{\Vote, H(B_k), v}_i$ for $\sig{\Propose, B_{k}, \Cert_{v-1}(B_{k-1}), v}$ when in $v$ if it has not sent either an $\OptVote$ for an equivocating block in view $v$ or a timeout message for view $v$ or higher, and $B_k$ extends $B_{k-1}$. Importantly, $\node{i}$ must send this vote if it has already sent an optimistic vote for $B_k$. This ensures that $B_k$ will be certified when $L_v$ is honest and proposes after GST in the case where some honest nodes are unable to send an optimistic vote for $B_k$. Otherwise, $\node{i}$ multicasts $\sig{\FallbackVote, H(B_h), v}_i$ for $\sig{\FallbackPropose, B_h, \Cert_{v'}(B_{h-1}), \TimeoutCert_{v-1},v}$ when in view $v$ if it has not sent a timeout message for view $v$ or higher, $B_h$ extends $B_{h-1}$ and $\Cert_{v'}(B_{h-1})$ has a rank greater than or equal to that of the highest ranked block certificate in $\TimeoutCert_{v-1}$. Notice that this rule allows $\node{i}$ to send a fallback vote for $B_h$ after having sent an optimistic vote for an equivocating block, say $B_k$. However, since the fallback proposal containing $B_h$ can only be valid if it contains $\TimeoutCert_{v-1}$, at least $f+1$ honest nodes must have sent $\TimeoutMessage_{v-1}$ before entering view $v$ and thus will not be able to trigger the optimistic vote rule for $B_k$, so $\Cert^o_v(B_k)$ will never exist.

\subsection{Analysis}
We now briefly analyze the safety and communication complexity of \pipelined. Detailed proofs of \pipelined's safety, liveness and reorg resilience can be found in Appendix~\ref{sec:pipelined-proof}.

\paragraph{Why is it safe to vote for a fallback proposal?} As we mentioned earlier, we require honest nodes to vote for valid fallback proposals even when they are locked on a higher ranked block certificate than that of the parent of the proposed block. This remains safe because a fallback proposal must be justified by a $\TimeoutCert$ for the previous view, which in turn contains information about the locks of a majority of the honest nodes. Specifically, $\TimeoutCert_v$ guarantees that at least $f+1$ nodes had yet to vote for a higher height than $h+1$ upon sending $\TimeoutMessage_v$, where $h$ is the height of $\Cert_{v'}(B_h)$, the highest ranked block certificate included in $\TimeoutCert_v$. Consequently, there cannot exist a committable block for any height greater than $h$ when $\TimeoutCert_v$ is constructed. Moreover, if any block can be committed at height $h$ then there can be only one such block. This is because the commit rule only allows a block at height $h$ proposed for view $v''$ to be committed if its child becomes certified in $v''+1$. Therefore, if $B_h$ can be committed then at least $f+1$ honest nodes must have voted for its child in $v'+1$, and since an honest node cannot vote for a block unless it possesses the block certificate for its parent, these nodes must have had $\Cert_{v'}(B_h)$ when they did so. Consequently, every $\TimeoutCert$ for $v'+1$ or higher will necessarily contain $\Cert_{v'}(B_h)$ or a block certificate for one of its descendants as its highest ranked block certificate, meaning that every fallback proposal for $v'+1$ or higher will necessarily extend $B_h$. Moreover, by extension, so will every subsequent optimistic or normal proposal.

\paragraph{Communication complexity.}
As previously mentioned, \pipelined requires nodes to include $\Lock_i$ in their timeout messages to ensure that $\TimeoutCert$s attest to the highest lock amongst at least $f+1$ honest nodes. This naturally makes timeout certificates and fallback proposals at least $O(n)$ words in size. Comparatively, as in \simple, optimistic proposals are $O(1)$ in size, as are normal proposals when threshold signatures are used, giving the proposal action a network-level communication complexity of $O(n^2)$ words per view. Similarly, the multicasting of $O(1)$ sized timeout messages (when threshold signatures are used), $\Vote$ messages and block certificates by all nodes, and the forwarding of timeout certificates by all nodes to the next leader also incur $O(n^2)$ communication. Accordingly, \pipelined has a network-level communication complexity of $O(n^2)$ words per view when threshold signatures are used.

\section{\commit}\label{sec:commit-moonshot}
Until now, we have measured \mcl in terms of $\delta$. However, this is imprecise because $\delta$ provides no way of differentiating between the performance of protocols that exchange one type of message for another. The pipelining technique fundamentally facilitates the removal of one or more phases from a protocol by granting another phase additional meaning. In existing pipelined consensus protocols, this technique replaces two (or more) consecutive phases of voting for one block proposal, with one phase of voting for two (or more) consecutive block proposals. This means that the commit latency of a block in the pipelined setting is proportional to the dissemination time of not only the block itself, but also its child (in the best case). More to the point, pipelining essentially exchanges the cost of disseminating additional votes for the cost of disseminating additional proposals and thus naturally increases commit latency when proposals take sufficiently longer to disseminate than votes.

\begin{figure}[!t]
\small
    \begin{boxedminipage}[ht]{\linewidth}
    \commit can be obtained by adding the following rules to the protocol for $\node{i}$ presented in~\Cref{fig:pipelined-moonshot}:
        
    \begin{enumerate}[leftmargin=*]        
        \item \textbf{Direct Pre-commit.} Upon receiving $\Cert_v(B_k)$ whilst in any view $v'$ such that $v' \le v$, if $\TimeoutView_i < v$, multicast $\sig{\Commit, H(B_k), v}_i$.

        \item \textbf{Indirect Pre-commit.} Upon receiving $\Cert_v(B_k)$ whilst in any view, having multicasted a commit vote for any descendant of $B_k$, having $\TimeoutView_i < v$ and having not yet multicasted $\sig{\Commit, H(B_k), v}_i$, multicast $\sig{\Commit, H(B_k), v}_i$.

        \item \textbf{Alternative Direct Commit.}\label{step:direct-commit-3} Upon receiving a quorum of distinct $\sig{\Commit, H(B_k), v}_*$ whilst in any view, commit $B_k$.
    \end{enumerate}
    \end{boxedminipage}
    \caption{\commit}
    \label{fig:commit_moonshot}
\end{figure}

We characterize this behavior using a communication model based on the \emph{modified partially synchronous model} of Blum et al.~\cite{blum2023analyzing} in which we assume that small messages (in this case, votes) are delivered within $\rho$ time while large messages (in this case, block proposals) are delivered within $\beta$ time. Moreover, we assume that after GST $\rho = [0, \min(\beta))$ and $\beta = (\max(\rho), \Delta]$. Under this model, \simple and \pipelined both incur \mcl $ = 2\beta + \rho$.

\begin{figure}
    \includegraphics[width=0.48\textwidth]{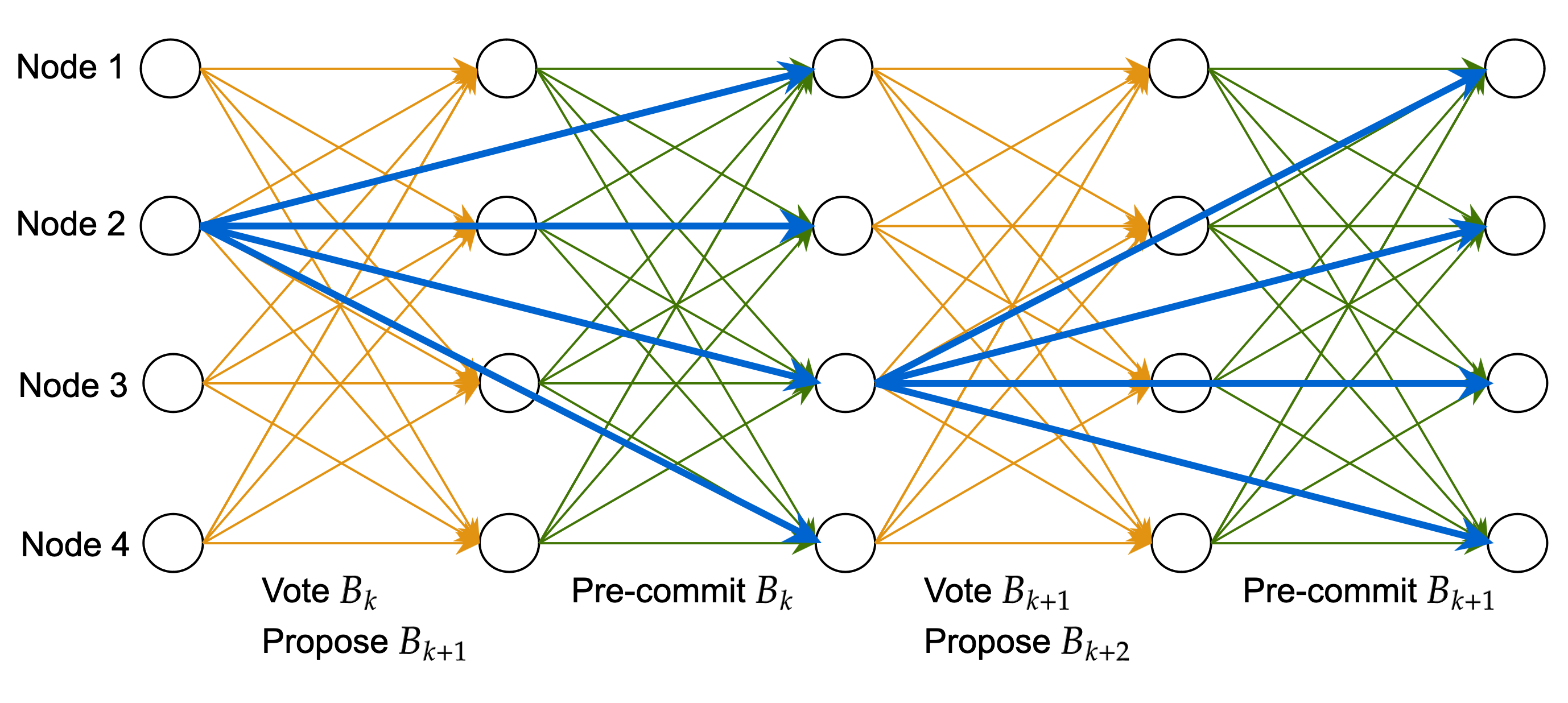}
    \caption{Explicit commit votes (pictured in green) enable \commit to commit blocks sooner than its pipelined counterparts when block proposals (pictured in blue) take sufficiently longer to disseminate than votes.}
    \label{fig:commit-moonshot-advantage}
\end{figure}

We now present a protocol with \mcl $= \beta + 2\rho$, which we call \commit. Accordingly, when $\rho < \beta$ (as in~\Cref{fig:commit-moonshot-advantage}), which we assume is typical in practice, this protocol provides improved commit latency over those previously presented by integrating an explicit \emph{pre-commit} phase. Like its counterparts, \commit also obtains \mvbp $= \delta$ ($\beta$) and provides both reorg resilience and optimistic responsiveness. Additionally, while \simple and \pipelined require two consecutive honest leaders to guarantee a commit after GST, \commit requires only one.

We present the modifications required to convert \pipelined to \commit in~\Cref{fig:commit_moonshot}. Since \commit retains the rules of \pipelined, the same liveness argument that can be made for the latter also applies to the former. However, the introduction of a secondary commit path demands additional reasoning about the safety of the protocol. We present a brief intuition to this end below and provide complete proofs in Appendix~\ref{sec:cm_proof}.

\paragraph{Safety intuition.} Per the alternative commit rule given in~\Cref{fig:commit_moonshot}, $\node{i}$ commits $B_k$ and all of its uncommitted ancestors upon receiving a quorum (i.e. $2f+1$ when $n = 3f+1$) of distinct $\sig{\Commit, H(B_k), v}_*$ messages. This remains safe because $2f+1$ such messages can only exist if at least $f+1$ honest nodes do not send $\TimeoutMessage_v$. Consequently, if any block becomes certified for $v+1$ then it must have been proposed in either an optimistic or normal proposal and thus must be a child of $B_k$. Otherwise, $\TimeoutCert_{v+1}$ will contain $\Cert_v(B_k)$ as its highest ranked block certificate and therefore every subsequently certified block will necessarily extend $B_k$.

\section{Implementation and Evaluation}\label{sec:eval}
As shown in~\Cref{tbl:related_work}, \pipelined and \commit equal or surpass the theoretical performance of prior $O(n^2)$ CRL protocols in all considered metrics. The primary question that remains, then, is whether their increased communication complexity relative to linear protocols is justified. Accordingly, we decided to implement our protocols and evaluate them against Jolteon, a linear protocol with state-of-the-art performance in most metrics and several high-quality open-source implementations.

\paragraph{Implementation.} We implemented all three of our protocols by modifying the code for Jolteon available in the Narwhal-HotStuff branch of the repository~\cite{narwhal-hs} created by Facebook Research for evaluating Narhwal and Tusk~\cite{danezis2022narwhal}. We decoupled our implementation from Narhwal and did the same for Jolteon so that we could compare the two consensus protocols in isolation. We replaced both the Narwhal mempool and the simulated-client process by having the leaders of each protocol create parametrically sized payloads during the block creation process, with individual payload items being 180 bytes in size. We used ED25519 signatures and constructed certificate proofs from an array of these signatures. We left the TCP-based network stack mostly intact and applied the few necessary changes to both implementations to ensure that any differences in performance were solely due to the differences between the consensus protocols themselves.

\begin{table}[t]
\footnotesize
    \caption{Observed Latencies (in ms) between AWS regions\label{tab:observed_latency}}
\begin{center}
    \begin{tabular}{| c | c | c | c | c | c | }
        \hline
         & \multicolumn{5}{|c|}{\textbf{Destination}$^*$} \\
        \hline
        \textbf{Source} & us-e-1 & us-w-1  & eu-n-1 & ap-ne-1 & ap-se-2 \\
        \hline
        us-east-1  & 5.23 & 61.87 & 113.78 & 167.6 & 197.42 \\ 
        us-west-1  & 62.88 & 3.69 & 172.17 & 109.89 & 141.54 \\ 
        eu-north-1 & 114.09 & 173.31 & 5.48 & 248.67 & 271.68 \\ 
        ap-northeast-1 & 168.04 & 109.94 & 251.63 & 5.99 & 111.67 \\ 
        ap-southeast-2 & 199.54 & 146.06 & 272.31 & 112.11 & 4.53 \\ 
        \hline
        \multicolumn{6}{c}{$^*$Region names are abbreviated versions of the Source regions.}
    \end{tabular}
\end{center}
\end{table}

\paragraph{Setting.} We chose to perform our evaluation in a setting typical of modern low-latency public blockchains such as Aptos~\cite{aptos} to demonstrate the efficacy of our protocols when network latency is the dominating performance factor. Accordingly, we constructed moderately-sized (up to 200 nodes) wide-area networks of nodes with high bandwidth capabilities and moderate computational capabilities. Specifically, we distributed the nodes evenly across the us-east-1 (N. Virginia), us-west-1 (N. California), eu-north-1 (Stockholm), ap-northeast-1 (Tokyo) and ap-southeast-2 (Sydney) AWS regions, with each node being allocated its own m5.large EC2 instance and connected to every other node via a separate point-to-point link. Each instance ran Ubuntu 20.04 and had a network bandwidth of up to 10Gbps\footnote{https://docs.aws.amazon.com/AWSEC2/latest/UserGuide/ec2-instance-network-bandwidth.html}, 8GB of memory and Intel Xeon Platinum 8000 series processors with 2 virtual cores. \Cref{tab:observed_latency} reports the typical (90th percentile) latencies observed between these regions around the time of our experiments.

\paragraph{Variables and Metrics.} We first evaluated the trade-off between \mcl, \mvbp and steady-state communication complexity in this setting by running all protocols with $f'=0$, where $f'$ denotes the number of actual failures in the system (i.e. $f' \le f = \floor{\frac{n-1}{3}}$), under varying network and payload sizes. Subsequently, we evaluated the impact of \vl, reorg resilience, pipelining and optimistic responsiveness by running all protocols in a fixed network with $f' = f$ and varying leader schedules. We measured these trade-offs by comparing the throughput and latency of each protocol and established two metrics for throughput: Firstly, the number of blocks committed by at least $2f+1$ nodes during a run, hereafter referred to as \textit{throughput}; and secondly, the average number of bytes of payload data from (subsequently) committed blocks transferred per second (i.e., throughput $\times$ payload size $\div$ runtime), hereafter referred to as \textit{transfer rate}. For latency, we measured the average time between the creation of a block and its commit by the $(2f+1)\text{-th}$ node. The plotted results are the averages of the related metrics across three five minute runs for each related configuration of the system.

We refer to \simple, \pipelined, \commit and Jolteon as \SimpleShort, \PipelinedShort, \CommitShort and J in the accompanying figures and tables.

\begin{table}[t]
    \caption{Performance vs Jolteon ($f'=0$)\label{tab:performance_vs_jolteon}}
\begin{center}
    \begin{tabular}{| c | c | c | c | c | c | c | c | c |}
        \hline
         & \multicolumn{4}{|c|}{ \textbf{Throughput Increase ($\%$)} } & \multicolumn{4}{|c|}{ \textbf{Latency Reduction ($\%$)} }\\
        \hline
        \textbf{Prot.} & \textbf{Max} & \textbf{$\Bar{x}$} & \textbf{$\Tilde{x}$} & \textbf{Min} & \textbf{Max} & \textbf{$\Bar{x}$} & \textbf{$\Tilde{x}$} & \textbf{Min}\\
        \hline
        \SimpleShort & $230$ & $70$ & $55$ & $33$ & $72$ & $46$ & $43$ & $37$ \\
        \PipelinedShort & $230$ & $68$ & $55$ & $24$ & $72$ & $45$ & $42$ & $32$ \\
        \CommitShort & $214$ & $66$ & $55$ & $25$ & $71$ & $56$ & $61$ & $38$ \\
        \hline
    \end{tabular}
\end{center}
\end{table}

\begin{table}[t]
    \caption{Performance vs Jolteon ($f'=0$, Outliers Removed)\label{tab:performance_vs_jolteon_no_outliers}}
\begin{center}
    \begin{tabular}{| c | c | c | c | c | c | c | c | c |}
        \hline
         & \multicolumn{4}{|c|}{ \textbf{Throughput Increase ($\%$)} } & \multicolumn{4}{|c|}{ \textbf{Latency Reduction ($\%$)} }\\
        \hline
        \textbf{Prot.} & \textbf{Max} & \textbf{$\Bar{x}$} & \textbf{$\Tilde{x}$} & \textbf{Min} & \textbf{Max} & \textbf{$\Bar{x}$} & \textbf{$\Tilde{x}$} & \textbf{Min}\\
        \hline
        \SimpleShort    & $72$ & $53$ & $55$ & $33$ & $56$ & $43$ & $42$ & $37$ \\
        \PipelinedShort & $70$ & $51$ & $54$ & $24$ & $56$ & $43$ & $42$ & $32$ \\
        \CommitShort    & $74$ & $52$ & $54$ & $25$ & $69$ & $54$ & $58$ & $38$ \\
        \hline
    \end{tabular}
\end{center}    
\end{table}

\begin{figure}
    \centering
    \includegraphics[width=0.42\textwidth]{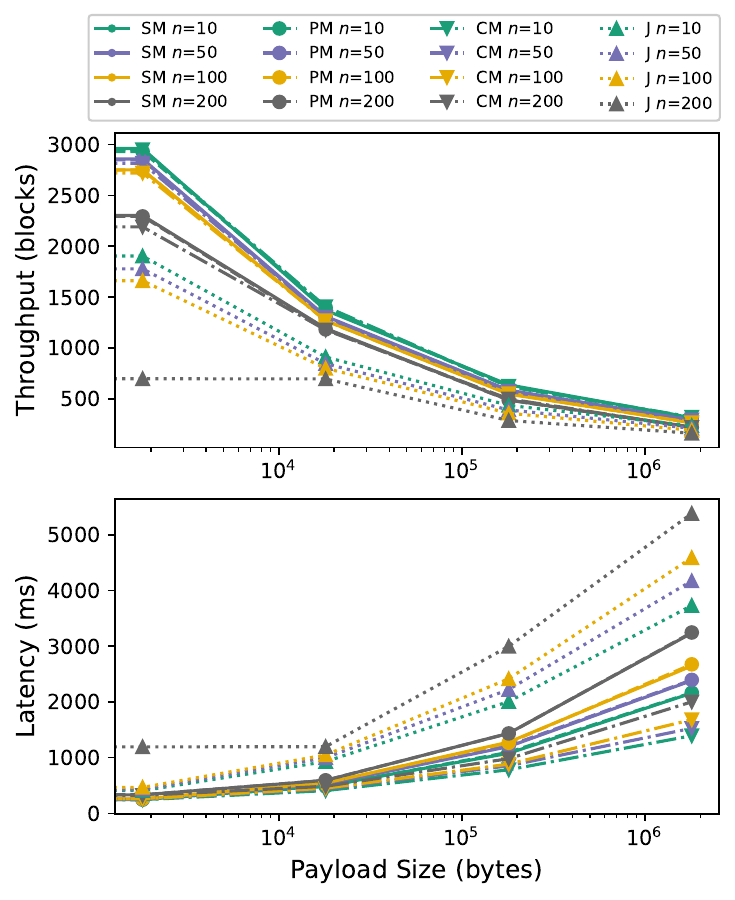}
    \caption{Performance Overview ($f'=0$, $p\le1.8$MB). Key trends: (1) Throughput approximately halves and latency roughly doubles for every order of magnitude increase in $p$. (2) Performance in both metrics decreases for all protocols as the network size increases. (3) Our protocols perform similarly in terms of throughput; Commit Moonshot achieves increasingly better latency as $p$ increases. (4) Our protocols outperform Jolteon in both metrics.}
    \label{fig:overview}
\end{figure}

\subsection{Happy Path Evaluation ($f'=0$)}
We initially tested networks of 10, 50, 100 and 200 honest nodes with block payload sizes ranging from empty to 1.8MB to understand how the tested protocols scale under an increasing communication load. \Cref{fig:overview} reports the results of these experiments. We subsequently tested additional payload sizes in the 200 node network to discover the approximate maximum transfer rate of each protocol in this setting, which can be seen in~\Cref{fig:throughput-v-latency}.

As shown in~\Cref{fig:overview}, \Cref{fig:improvement} and~\Cref{tab:performance_vs_jolteon_no_outliers}, all \moonshot protocols produced notably higher throughput than Jolteon in all tested configurations due to the more frequent block production afforded by their reduced \mvbp. Likewise, the reduction in \mcl achieved by multicasting votes (in conjunction with optimistic proposals, in the case of the pipelined protocols) also caused them to produce substantially decreased latency compared to Jolteon across all configurations. The 200 node network produced significant outliers under the empty and 1.8kB payload configurations, with all three protocols exhibiting about thrice the throughput and a quarter of the latency of Jolteon, compared to the approximately \emph{$50\%$ increase in throughput} and \emph{$40\%-50\%$ reduction in latency} seen on average across all other configurations. \simple and \pipelined produced near-identical performance in both metrics for most configurations due to the similarity of their happy-path protocols. Conversely, although \commit produced similar throughput to these protocols, it exhibited substantially reduced latency for payloads above 18kB due to its explicit commit messages, clearly showing the inefficiency of pipelining when blocks are large. Generally speaking, all three \moonshot protocols produced increasingly higher throughput and relatively consistent improvements to latency compared to Jolteon as the network size increased, showing that obtaining linear communication complexity is counter-productive in WANs of this scale if it comes at the cost of sequentializing network operations (i.e., reducing \mvbp and \mcl). Finally, per~\Cref{fig:throughput-v-latency}, all three \moonshot protocols achieved a \emph{higher maximum transfer rate with lower latency} than Jolteon in the 200 node network, with \commit producing the best results. Overall, these results show that the happy paths of our \moonshot protocols scale well and provide meaningfully decreased latency and increased throughput compared to Jolteon under the experimental conditions, with \commit being the most efficient option.

\begin{figure}
    \centering
    \includegraphics[width=0.42\textwidth]{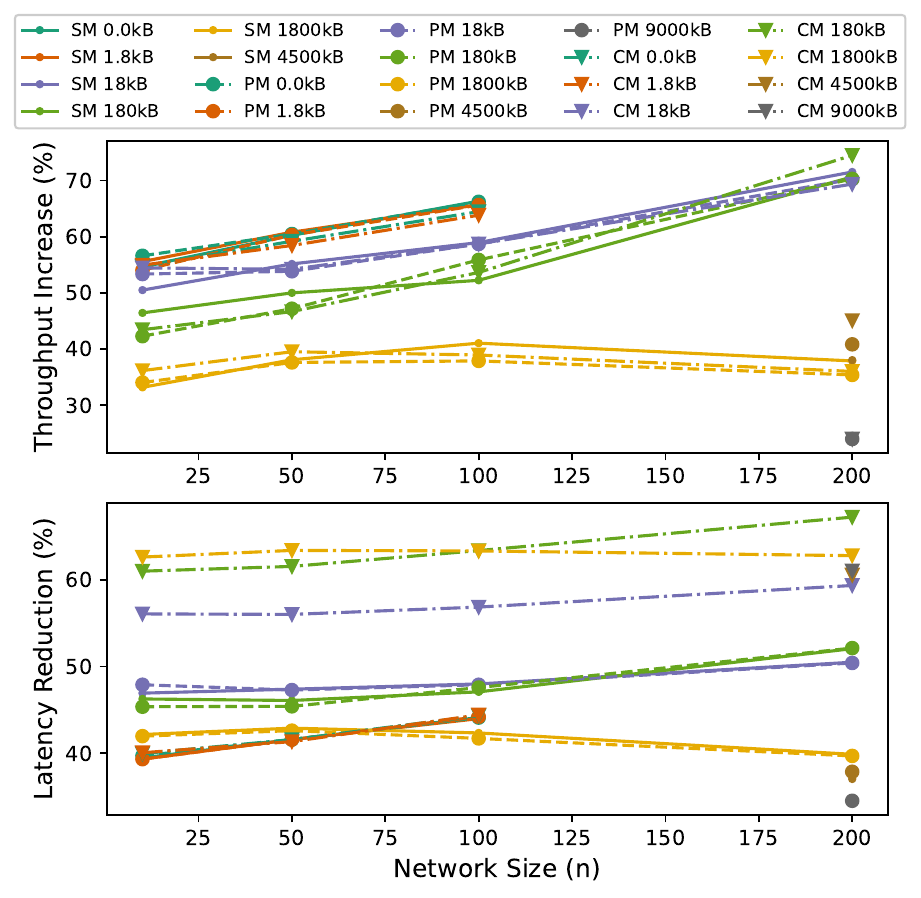}
    \caption{Performance vs. Jolteon ($f'=0$, No Outliers)} 
    \label{fig:improvement}
\end{figure}

\begin{figure}
    \centering
    \includegraphics[width=0.42\textwidth]{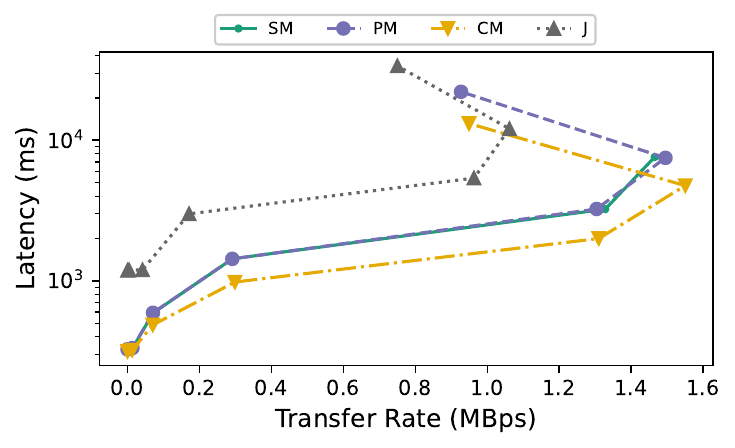}
    \caption{Throughput vs Latency ($n=200$, $f'=0$, $p\le9$MB)}
    \label{fig:throughput-v-latency}  
\end{figure}

\subsection{Fallback Path Evaluation}
We subsequently further evaluated the impact of pipelining along with \vl, reorg resilience and optimistic responsiveness, by running all protocols with a fixed $n$, $f'$, $p$ (i.e., block payload size) and $\Delta$ under three different fair LSO/LCO leader schedules. We chose $n = 100$, $f' = 33$ and $p = 0$ to maximize the impact of the quadratic steady-state complexity of our protocols without risking a repeat of the outliers seen in the $n = 200$, $f' = 0$ experiments. We also chose $\Delta = 500ms$, a somewhat-conservative value (per~\Cref{tab:observed_latency}) that still ensured that each protocol would make it through several iterations of the leader schedules within the five minute duration of each run. As for the leader schedules, the first (\Best) had all honest nodes followed by all byzantine nodes, representing the best case for non-reorg-resilient and pipelined protocols. The second (\WorstMoonshot) had honest-then-byzantine leaders for $2f'$ views, followed by honest leaders for the remaining $n-2f'$ views, representing the worst case for reorg resilient, pipelined protocols. The third (\WorstJolteon) repeated two-honest-then-byzantine for $3f'$ views, followed by the remaining $n-3f'$ honest, representing the worst case for non-reorg resilient, pipelined protocols.

\begin{figure*} 
    \centering
    \subfloat[Performance Overview\label{fig:fallback_overview}]{%
       \includegraphics[width=0.33\linewidth]{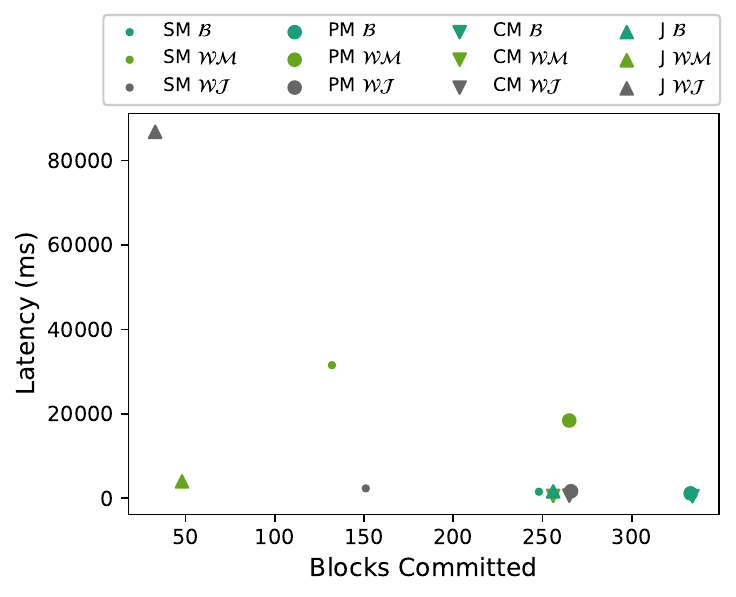}%
    }%
    \hfill
    \subfloat[Performance Overview (log scaled)\label{fig:fallback_overview_2}]{%
        \includegraphics[width=0.33\linewidth]{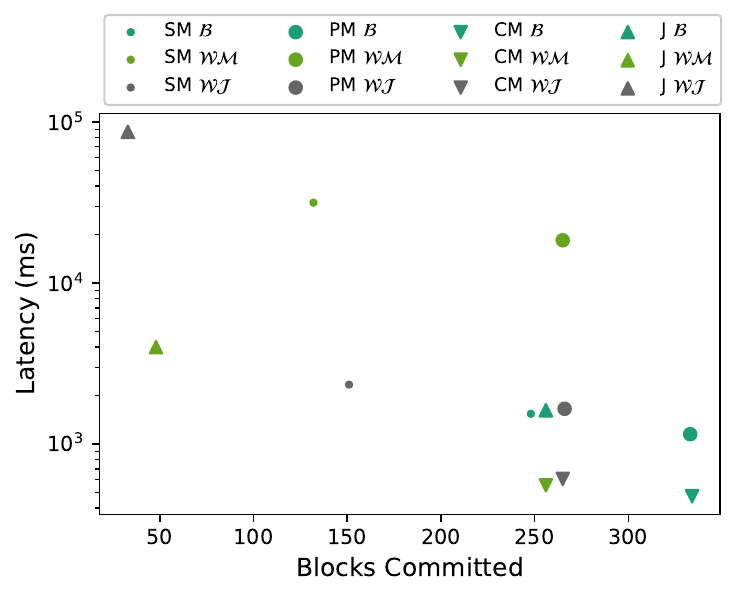}%
    }%
    \hfill
    \subfloat[Improvements vs. Jolteon\label{fig:fallback_improvement}]{%
        \includegraphics[width=0.33\linewidth]{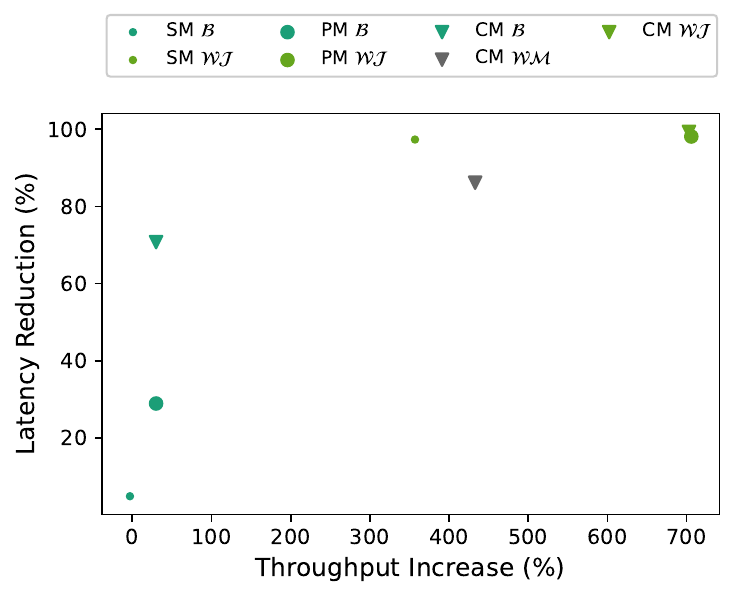}%
    }%
    \hfill
  \caption{Performance comparison at $n=100$, $f'=33$ and $p=0$}
  \label{fig1}
\end{figure*}

As shown in~\Cref{fig:fallback_overview,fig:fallback_overview_2}, Jolteon's performance degrades enormously in the presence of failures due to its lack of reorg resilience. This is evident by the difference in its results for \Best and \WorstJolteon, with the former producing approximately seven times higher throughput and fifty times lower latency than the latter. The pipelined nature of \simple and \pipelined likewise caused a significant reduction in latency between the worst (\WorstMoonshot) and best case (\Best) leader schedules for these protocols. \simple's $2\Delta$ wait after a failed leader (i.e. lack of Optimistic Responsiveness) caused its performance to vary more significantly than \pipelined, while its longer view length caused a substantial decrease in throughput.

As shown by their absence from~\Cref{fig:fallback_improvement}, both \simple and \pipelined failed to improve over Jolteon under \WorstMoonshot. More precisely, although they both produced a several-fold increase in throughput compared to Jolteon, Jolteon produced much lower latency. Both of these results were a side-effect of reorg resilience: Both \moonshot protocols committed all blocks proposed by honest leaders with Byzantine successors under this schedule, but only after a significant delay. Comparatively, Jolteon lost all such blocks due to lacking this property, with only the block of the final honest leader in the schedule being committed with a delay. Since Jolteon commits $n-2f'$ out of every $n$ blocks under this schedule, its relative improvement in block commit latency should increase proportionally to $n$, while its relative throughput should similarly decrease. However, we note that in this case reduced block commit latency at the cost of decreased throughput should be considered an undesirable trade-off as it does not imply a reduction in transaction commit latency.

Finally, \commit performed consistently well regardless of the leader schedule due to its explicit pre-commit phase, which denies the adversary any power to delay the commit of honest blocks. Notably, as shown in~\Cref{fig:fallback_improvement}, it produced around \emph{eight times higher throughput} and more than \emph{two orders of magnitude lower latency} than Jolteon under \WorstJolteon. Overall, then, \commit produced superior performance in both the happy path and in the presence of failures, making it a prime candidate for application in modern low-latency public blockchains.

\section{Related Work}
\label{sec:related_work}
There has been a long line of work towards designing efficient BFT SMR protocols for partially synchronous networks \cite{castro1999practical,abd2005fault,martin2006fast,kotla2007zyzzyva,buchman2016tendermint,yin2019hotstuff,chan2018pala,gueta2019sbft,jalalzai2020fast,abraham2021good,camenisch2022internet,gelashvili2022jolteon,danezis2022narwhal,civit2022byzantine,spiegelman2022bullshark,giridharan2023beegees,spiegelman2023shoal,malkhi2023hotstuff,malkhi2023bbca,keidar2023cordial,chan2023simplex}. Our work contributes to this effort by introducing the first CRL protocols to obtain both \mvbp $= \delta$ and \mcl $= 3\delta$. Our protocols further provide reorg resilience, improving their recovery time after a failed leader compared to prior chain-based works that fail to achieve this property. This is especially true of both \pipelined and \commit, which also have low \vl and are optimistically responsive. These properties come at the cost of $O(n^2)$ steady-state communication complexity, making our protocols less performant in this metric compared to vote-aggregator-based protocols like HotStuff. However, as shown in~\Cref{sec:eval}, this trade-off is worthwhile in many settings. We presented a brief comparison between our protocols and other recent works in~\Cref{sec:intro}. We now undertake a more thorough review.

\paragraph{Early works.} PBFT~\cite{castro1999practical} was the first practical BFT SMR protocol, achieving \mcl $= 3\delta$ at the cost of $O(n^2)$ steady-state communication. PBFT's slot-based nature complicated its view change, leading it to only rotate leaders after a failure---an approach that allows proposal frequency to be reduced below $\delta$, but precludes fairness. Much later, Tendermint~\cite{buchman2016tendermint} combined the steady-state and view-change sub-protocols into a unified protocol for the LSO setting, resulting in a simpler protocol than PBFT at the cost of an $\Omega(\Delta)$ wait before every new view at the same height, thus sacrificing optimistic responsiveness. HotStuff~\cite{yin2019hotstuff} formalized the notion of optimistic responsiveness and improved upon Tendermint both by implementing this property and being the first protocol to obtain linear ($O(n)$) communication complexity in both its steady-state and view-change phases (in the presence of an abstract pacemaker for view synchronization). To our knowledge, it was also the first protocol to implement block chaining. 

\paragraph{Linear protocols.} Like HotStuff, many other chain-based protocols~\cite{gueta2019sbft,jalalzai2020fast,gelashvili2022jolteon,sui2022marlin,malkhi2023hotstuff} have focused on minimising communication complexity, with some achieving linearity only in their steady states and others during their view-change phases as well. Recently, some~\cite{civit2022byzantine,malkhi2023hotstuff} have even achieved amortized-linear view synchronization. In all cases though, these protocols obtain steady-state linearity through the use of a designated vote-aggregator node. As we previously observed, this naturally increases their \mcl, \mvbp and \vl relative to our protocols, and precludes reorg resilience when the aggregator is not the original proposer. Moreover, while most nodes incur a steady-state complexity of $O(1)$ in these protocols, the proposer must still send and the aggregator must still receive, $O(n)$ messages. This imbalance means that these protocols under-utilize the available bandwidth in the point-to-point CRL setting (in which there should be no choke-points in the network and each node should have similar capabilities).

\paragraph{Non-linear pipelined chain-based protocols.} PaLa~\cite{chan2018pala} is a pipelined CRL protocol with \mcl $= 4\delta$ and \mvbp $= 2\delta$. While this improves upon the commit latencies of linear pipelined protocols with \mvbp $= 2\delta$, like~\cite{gelashvili2022jolteon}, PaLa achieves this result at the cost of $O(n^2)$ communication complexity in its steady state. Accordingly, PaLa is sub-optimal in all three properties.

\paragraph{Non-pipelined chain-based protocols.} Similar to PaLa, ICC~\cite{camenisch2022internet} incurs $O(n^2)$ steady-state communication complexity. However, this protocol eschews pipelining, allowing it to achieve \mcl $= 3\delta$ through the use of an explicit second round of voting for each block. Even so, it lacks reorg resilience and its \mvbp of $2\delta$ and \vl of $4\Delta$ make it less efficient in these metrics than our protocols. Simplex~\cite{chan2023simplex} obtains the same \mcl and \mvbp with \vl $= 3\Delta$, however, it claims responsiveness only when all nodes are honest. Additionally, its requirement that a leader must send the entire certified blockchain along with its proposal makes its communication complexity proportional to size of the blockchain and thus unbounded, rendering it impractical.

\paragraph{Apollo~\cite{bhat2023unique}.} Apollo obtains \mvbp $= \delta$ at the cost of a \mcl $= (f+1)\delta$ and assuming a synchronous communication model.

\paragraph{DAG-based protocols.} DAG-based consensus protocols like~\cite{spiegelman2023shoal,spiegelman2022bullshark,malkhi2023bbca} and \cite{keidar2023cordial} focus on improving block throughput. While they naturally produce and commit more blocks over a given interval than chain-based protocols by virtue of having all nodes propose in each step, they incur $O(n^3)$ communication in doing so. While recent protocols~\cite{keidar2023cordial,malkhi2023bbca} in this setting have achieved \mvbp $= \delta$, and \mcl $= 3\delta$ for blocks proposed by the leader, they require at least $4\delta$ to commit blocks proposed by other nodes. Consequently, since most blocks committed by these protocols are non-leader blocks, their average block commit latency is still higher than our protocols. Moreover, since these protocols use pipelining, each $\delta$ corresponds to one $\beta$ under our model from~\Cref{sec:commit-moonshot}, meaning that these latencies become even more significant relative to our protocols as blocks become larger.

\paragraph{Inspiration for future work.} 
\moonshot may be further optimized by applying insights from other works. For example, a related line of works~\cite{abd2005fault,kotla2007zyzzyva,martin2006fast,abraham2021good,gueta2019sbft} achieve $\lambda = 2\delta$ via optimistic commits when $n \ge 5f-1$. Similarly, works such as \cite{efficientBFT} and \cite{damysus} have leveraged trusted execution environments to limit the power of the adversary, enabling consensus when $n \ge 2f+1$. Giridharan et al. also recently proposed BeeGees~\cite{giridharan2023beegees}, a pipelined CRL protocol that is able to commit without requiring consecutive honest leaders. Defining variants of \moonshot that leverage these optimizations represents an interesting direction for future work.

\section{Conclusion}
We presented the first chain-based rotating-leader BFT protocols for the partially synchronous network model with \mvbp $= \delta$ and \mcl $= 3\delta$. All three of our protocols outperformed the previous state-of-the-art CRL protocol, Jolteon, both in the presence of failures and in failure-free scenarios. \pipelined consistently outperformed Jolteon, showing both the value of low \mvbp and reorg resilience, and that the cost of obtaining linear communication practically outweighs its benefits in many settings. Likewise, \commit equalled or outperformed \pipelined in all experiments, showing that pipelining is counter-productive in the presence of failures (without further optimization) and when blocks are large.

\section{Acknowledgments}
We thank Aniket Kate, Praveen Manjunatha, Kartik Nayak, Srivatsan Ravi and David Yang for helpful discussions related to this paper.

\Urlmuskip=0mu plus 1mu\relax
\bibliographystyle{plain}
\bibliography{references}

\appendices
\section{\simple Security Analysis}\label{sec:simple-proof}
We now present formal proofs that \simple satisfies the safety and liveness properties of SMR, and reorg resilience.

\begin{claim}[Quorum Intersection]\label{clm:quorum-intersection}
    Given any two quorums $Q_1$ and $Q_2$ drawn from $\mathcal{V}$, $Q_1$ and $Q_2$ have at least one honest node in common.
\end{claim}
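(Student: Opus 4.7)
The plan is to use a straightforward counting argument based on inclusion-exclusion. First, I would invoke the assumption established in the preliminaries that $n = 3f+1$ and that a quorum consists of $2f+1$ nodes, so $|Q_1| = |Q_2| = 2f+1$. Since $Q_1, Q_2 \subseteq \mathcal{V}$ and $|\mathcal{V}| = n = 3f+1$, I would compute
\[
|Q_1 \cap Q_2| \;\geq\; |Q_1| + |Q_2| - |\mathcal{V}| \;=\; (2f+1) + (2f+1) - (3f+1) \;=\; f+1.
\]
Next, I would observe that the adversary is assumed to corrupt at most $f$ nodes, so the total number of Byzantine nodes in $\mathcal{V}$ is at most $f$. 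Since $|Q_1 \cap Q_2| \geq f+1 > f$, the intersection cannot consist entirely of Byzantine nodes, and therefore must contain at least one honest node, which is exactly the claim.

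There is essentially no obstacle here; the only thing to be careful about is citing the preliminaries explicitly (in particular that $n = 3f+1$ and quorum size is $2f+1$) so the reader sees where the bound $f+1$ comes from. The proof will be two or three lines.
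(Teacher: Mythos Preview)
Your proposal is correct and follows essentially the same approach as the paper: the paper also invokes the preliminaries to get quorum size $2f+1$ when $n=3f+1$, notes that the two quorums must share at least $f+1$ members, and concludes from the bound of $f$ Byzantine nodes that at least one shared member is honest. Your version merely makes the inclusion-exclusion inequality explicit, which is a harmless elaboration of the same argument.
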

\begin{proof}
    According to the definition given in~\Cref{sec:preliminaries}, when $n = 3f+1$ a quorum contains $2f+1$ distinct members of $\mathcal{V}$.
    Therefore, $Q_1$ and $Q_2$ must have at least $f+1$ members in common. 
    Thus, because $\mathcal{V}$ contains only $f$ Byzantine nodes, at least one of these shared nodes must be honest.
\end{proof}

\begin{claim}[Honest Majority Intersection]\label{clm:honest-majority-intersection}
    Given any two sets $H_1$ and $H_2$ of at least $f+1$ honest nodes drawn from $\mathcal{V}$, $H_1$ and $H_2$ have at least one honest node in common.
\end{claim}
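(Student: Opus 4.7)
The plan is to mimic the pigeonhole argument used for~\Cref{clm:quorum-intersection}, but to restrict attention to the honest subpopulation of $\mathcal{V}$ rather than to $\mathcal{V}$ itself. Under the standard worst-case assumption that exactly $f$ of the $n = 3f+1$ nodes are Byzantine, there are at most $n - f = 2f+1$ honest nodes in $\mathcal{V}$. Since $H_1$ and $H_2$ consist only of honest nodes, they are both subsets of this at-most-$(2f+1)$-sized honest population, and so $|H_1 \cup H_2| \leq 2f+1$.

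Combining this bound with the hypothesis $|H_1|, |H_2| \geq f+1$ and applying inclusion-exclusion would give $|H_1 \cap H_2| \geq |H_1| + |H_2| - |H_1 \cup H_2| \geq 2(f+1) - (2f+1) = 1$. Hence $H_1 \cap H_2$ must contain at least one honest node, as required.

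There is no substantive obstacle: this is a one-line counting step and the claim is essentially a specialization of~\Cref{clm:quorum-intersection} from the full $n$-node universe (which forces the intersection to contain at least $f+1$ nodes, at least one of which is honest) to the honest subpopulation alone (which only forces the intersection to contain at least one honest node). The only minor point worth flagging is that the bound $|H_1 \cup H_2| \leq 2f+1$ assumes the adversary saturates its corruption budget; this is the standard convention under which BFT safety arguments are stated, and if fewer nodes are Byzantine then the honest population only grows, so the claim continues to hold whenever $H_1$ and $H_2$ are drawn as honest members of quorums (which is how it will be invoked in the safety proofs).
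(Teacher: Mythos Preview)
Your proposal is correct and takes essentially the same approach as the paper: both argue that the honest population has size $2f+1$ under $n=3f+1$, so two subsets of size at least $f+1$ must overlap by pigeonhole/inclusion--exclusion. The paper's version is just terser and omits your side discussion about under-saturated adversaries.
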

\begin{proof}
    According to the definition given in~\Cref{sec:preliminaries}, when $n = 3f+1$, $\mathcal{V}$ contains $2f+1$ honest nodes.
    Therefore, since $H_1$ and $H_2$ both contain at least $f+1$ honest nodes, they must have at least one node in common.
\end{proof}

\begin{lemma}\label{lem:view-safety}
If $\Cert_v(B_k)$ and $\Cert_v(B_l)$ exist then $B_k = B_l$.
\end{lemma}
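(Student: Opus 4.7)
The plan is to derive a contradiction by finding an honest node that would have to have voted twice in the same view. First I would suppose, for contradiction, that $\Cert_v(B_k)$ and $\Cert_v(B_l)$ both exist with $B_k \neq B_l$. By the definition of a block certificate in \simple, each of these consists of $2f+1$ distinct signed $\Vote$ messages for view $v$, one set endorsing $B_k$ and the other endorsing $B_l$. Viewing the signers of each certificate as quorums drawn from $\mathcal{V}$, I would apply~\Cref{clm:quorum-intersection} (Quorum Intersection) to conclude that there exists at least one honest node $\node{i}$ whose signature appears in both certificates.

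Next I would invoke the \textbf{Vote} rule of~\Cref{fig:simple-moonshot}, which explicitly states that an honest node ``votes once'' in a given view via one of its two sub-rules. In particular, regardless of whether $\node{i}$ voted via the optimistic vote rule or the normal vote rule (or one of each), \simple constrains it to multicast at most one $\Vote$ message of the form $\sig{\Vote, H(B), v}_i$ for the entire duration of view $v$. Therefore the two signatures from $\node{i}$ that appear across $\Cert_v(B_k)$ and $\Cert_v(B_l)$ must be on messages with identical hash components, i.e.\ $H(B_k) = H(B_l)$. Assuming collision resistance of $H$, this yields $B_k = B_l$, contradicting our assumption.

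The main obstacle, such as it is, is simply ensuring that the ``at most one vote per view'' restriction covers both vote sub-rules uniformly; once that is pinned down from the protocol description, the argument reduces to a single application of quorum intersection and is essentially immediate. No case analysis on how $B_k$ or $B_l$ was proposed (optimistic vs.\ normal) is required, since the uniqueness of $\node{i}$'s vote is a syntactic property of the protocol and independent of which rule produced it.
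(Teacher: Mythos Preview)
Your proposal is correct and follows essentially the same approach as the paper: assume two distinct certified blocks for the same view, apply quorum intersection to locate a common honest voter, and derive a contradiction with the one-vote-per-view rule. The paper's version is slightly terser (it does not explicitly mention collision resistance of $H$), but the argument is identical in substance.
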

\begin{proof}
Suppose, for the sake of contradiction, that $\Cert_v(B_k)$ and $\Cert_v(B_l)$ exist but $B_k \ne B_l$.
By the definition of a block certificate given in~\Cref{sec:preliminaries},
the existence of $\Cert_v(B_k)$ implies that at least $2f+1$ nodes voted for $B_k$ in view $v$. 
Likewise, the existence of $\Cert_v(B_l)$ implies that the same number of nodes also voted for $B_l$ in $v$. 
By~\Cref{clm:quorum-intersection}, this implies that at least one honest node voted for both $B_k$ and $B_l$ in $v$. 
However, this violates the rules for voting, which allow a node to vote only once per view, contradicting the original assumption.
\end{proof}

\begin{claim}\label{clm:honest-normal-proposal-cert}
    If an honest node, say $\node{i}$, votes for $\sig{\Propose, B_l, \Cert_{v'}(B_h), v}$, then $v' < v$.
\end{claim}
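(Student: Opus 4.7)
The plan is to argue by contradiction, assuming $v' \ge v$, and then invoke the \textit{Advance View} rule to derive a contradiction with the fact that $\node{i}$ votes for the proposal in view $v$.

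First I would note that in order for $\node{i}$ to execute the normal \textit{Vote} rule for a proposal tagged with view number $v$, $\node{i}$ must be in view $v$ at the moment it processes the proposal message $\sig{\Propose, B_l, \Cert_{v'}(B_h), v}$. Next, I would observe that processing this proposal entails receiving the embedded block certificate $\Cert_{v'}(B_h)$. By the \textit{Advance View} rule of~\Cref{fig:simple-moonshot}, upon receiving any $\Cert_{v''-1}$ or $\TimeoutCert_{v''-1}$ with $v'' > v$, and \emph{before executing any other rule}, $\node{i}$ must enter view $v''$.

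Now suppose, for contradiction, that $v' \ge v$. Setting $v'' = v' + 1$, we have $v'' > v$, so upon receiving $\Cert_{v'}(B_h)$, $\node{i}$ is required to enter view $v'' = v' + 1 \ge v + 1$ before executing any other rule --- in particular, before applying the \textit{Vote} rule to this proposal. Consequently $\node{i}$ would no longer be in view $v$ when the vote rule is evaluated, and so could not cast $\sig{\Vote, H(B_l), v}_i$. This contradicts the hypothesis that $\node{i}$ votes for the proposal in view $v$, and hence $v' < v$.

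I do not anticipate any real obstacle: the argument rests entirely on the priority ordering in the \textit{Advance View} rule (``before executing any other rule''), together with the observation that a normal proposal delivers its justifying certificate to the voter as part of its payload. The only minor subtlety is to make explicit that the honest voter's local view is an increasing quantity that, once it surpasses $v$, prevents voting in view $v$; this is immediate from the protocol specification since $\node{i}$ never decreases its local view.
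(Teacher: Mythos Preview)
Your proposal is correct and follows essentially the same argument as the paper's own proof: both invoke the priority of the \textit{Advance View} rule over the voting rule to conclude that receiving $\Cert_{v'}(B_h)$ with $v' \ge v$ would force $\node{i}$ into view $v'+1 > v$ before it could vote, contradicting the hypothesis. The paper's version is more terse, but the underlying reasoning is identical.
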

\begin{proof}
    Since the view advancement rule takes priority over the voting rule, if $v' \ge v$ then $\node{i}$ would have entered $v' + 1 > v$ before voting for $\sig{\Propose, B_l, \Cert_{v'}(B_h), v}$, making it ineligible to vote for this proposal.
\end{proof}

\begin{lemma}\label{lem:uniq-xtn-base}
If an honest node, say $\node{i}$, directly commits a block $B_k$ that was certified for view $v$ and $\Cert_{v'}(B_{k'})$ exists such that $v' = v$ or $v' = v+1$, then $B_{k'}$ extends $B_k$.
\end{lemma}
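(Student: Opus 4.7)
The plan is to derive the claim almost immediately from \Cref{lem:view-safety} (view-safety) together with the precondition of the \emph{Direct Commit} rule. Since $\node{i}$ directly commits $B_k$, the direct commit rule forces $\node{i}$ to have observed both $\Cert_v(B_k)$ and $\Cert_{v+1}(B_{k+1})$ for some block $B_{k+1}$ that directly extends $B_k$, before performing the commit. I would start the proof by recording these two certificates explicitly, as they are the only facts the argument needs.

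From there I would split on the value of $v'$. In the case $v' = v$, both $\Cert_v(B_k)$ and $\Cert_v(B_{k'})$ exist, so \Cref{lem:view-safety} yields $B_{k'} = B_k$, and then the convention that a block extends itself gives $B_{k'}$ extends $B_k$. In the case $v' = v+1$, both $\Cert_{v+1}(B_{k+1})$ and $\Cert_{v+1}(B_{k'})$ exist, so \Cref{lem:view-safety} again yields $B_{k'} = B_{k+1}$, and since $B_{k+1}$ extends $B_k$ by the precondition of the commit rule, we conclude that $B_{k'}$ extends $B_k$.

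There is essentially no obstacle: the lemma is the base case of what will almost certainly become an induction on views (covering $v' > v+1$), and at this level it is a two-line corollary of view-safety plus the commit rule. The only thing to be careful about is stating precisely which certificates the direct commit rule supplies, so that \Cref{lem:view-safety} is applied to certificates of the matching view in each case; this is also why the lemma is phrased only for $v' = v$ and $v' = v+1$, since those are exactly the two views for which a certificate is handed to us by the commit precondition without requiring any further propagation argument.
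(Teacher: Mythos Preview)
Your proposal is correct and follows essentially the same approach as the paper: invoke the Direct Commit rule to obtain $\Cert_v(B_k)$ and $\Cert_{v+1}(B_{k+1})$ with $B_{k+1}$ extending $B_k$, then apply \Cref{lem:view-safety} in each of the two cases to conclude $B_{k'} = B_k$ or $B_{k'} = B_{k+1}$. The only cosmetic difference is that you record both certificates up front whereas the paper introduces the second one only when handling the $v' = v+1$ case.
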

\begin{proof}
    If $v' = v$ then, by~\Cref{lem:view-safety}, $B_{k'} = B_k$ and thus, per the definition of block extension given in~\Cref{sec:preliminaries}, $B_{k'}$ extends $B_k$.
    Alternatively, if $v' = v+1$ then, by the direct commit rule, $\node{i}$ must have observed $\Cert_v(B_k)$ and $\Cert_{v+1}(B_{k+1})$ with $B_{k+1}$ extending $B_k$.
    Additionally, by~\Cref{lem:view-safety}, $\Cert_{v+1}(B_{k+1})$ is the only block certificate that can exist for $v+1$. 
    Thus, $B_{k'} = B_{k+1}$, so $B_{k'}$ extends $B_k$.
\end{proof}

\begin{lemma}[Unique Extensibility]\label{lem:uniq-xtn}
If an honest node, say $\node{i}$, directly commits a block $B_k$ that was certified for view $v$ and $\Cert_{v'}(B_{k'})$ exists such that $v' \ge v$, then $B_{k'}$ extends $B_k$.
\end{lemma}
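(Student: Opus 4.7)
The plan is to prove the lemma by strong induction on $v'$, using \Cref{lem:uniq-xtn-base} to cover the base cases $v' = v$ and $v' = v+1$. The inductive step will handle $v' > v+1$ under the assumption that every $\Cert_{\bar{v}}(B_{\bar{k}})$ with $v \le \bar{v} < v'$ has $B_{\bar{k}}$ extending $B_k$.

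The central observation is that the direct commit of $B_k$ by $\node{i}$ guarantees the existence of $\Cert_v(B_k)$ and $\Cert_{v+1}(B_{k+1})$ with $B_{k+1}$ extending $B_k$, so by quorum counting there exist at least $f+1$ honest nodes---call this set $H$---that voted for $B_{k+1}$ in view $v+1$. By inspecting both vote rules of \simple, each such node must have possessed $\Cert_v(B_k)$ at the moment it cast that vote (optimistic voting requires $\Lock_i = \Cert_v(B_k)$ directly, and normal voting requires receiving a proposal carrying a certificate ranked at least $\Lock_i$ that $B_{k+1}$ directly extends, which by \Cref{lem:view-safety} must be $\Cert_v(B_k)$). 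Since in \simple the lock is only updated at view-advancement time and always to the highest-ranked certificate seen so far, every node in $H$ has $\Lock \ge \Cert_v(B_k)$ from the moment it enters any view greater than $v+1$, for the entire duration of that view.

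For the inductive step, $\Cert_{v'}(B_{k'})$ implies that at least $f+1$ honest nodes voted for $B_{k'}$ in view $v'$; call this set $H'$. By \Cref{clm:honest-majority-intersection}, some honest $P_j \in H \cap H'$ exists. At the moment $P_j$ cast its vote in $v'$, it must have had $\Lock_j \ge \Cert_v(B_k)$ by the previous paragraph. I then split into the two vote rules. If $P_j$ voted optimistically, then $\Lock_j = \Cert_{v'-1}(B_{k'-1})$ where $B_{k'-1}$ is the parent of $B_{k'}$; this forces $v'-1 \ge v$, so applying the inductive hypothesis (or \Cref{lem:uniq-xtn-base} when $v'-1 \in \{v, v+1\}$) to $\Cert_{v'-1}(B_{k'-1})$ yields that $B_{k'-1}$ extends $B_k$, hence so does $B_{k'}$. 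If $P_j$ voted normally on $\sig{\Propose, B_{k'}, \Cert_{v''}(B_h), v'}$, then $\Cert_{v''}(B_h) \ge \Lock_j \ge \Cert_v(B_k)$ gives $v'' \ge v$, \Cref{clm:honest-normal-proposal-cert} gives $v'' < v'$, and the inductive hypothesis applied to $\Cert_{v''}(B_h)$ shows that $B_h$ extends $B_k$; since $B_{k'}$ extends $B_h$ by the vote rule, $B_{k'}$ extends $B_k$.

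The main obstacle is the bookkeeping around locks: I need to argue carefully that a node in $H$ retains a lock of rank at least $\Cert_v(B_k)$ throughout every later view, which hinges precisely on the \simple-specific property that $\Lock_i$ is frozen for the duration of a view and is refreshed to the maximum certificate seen at every view transition. A secondary subtlety is ruling out the degenerate case where $v'' = v'$ in the normal-vote branch, which is exactly the content of \Cref{clm:honest-normal-proposal-cert} and ensures the induction hypothesis is applicable.
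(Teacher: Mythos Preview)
Your proposal is correct and follows essentially the same approach as the paper: strong induction on $v'$ with the base cases delegated to \Cref{lem:uniq-xtn-base}, and the inductive step driven by intersecting the $f+1$ honest voters for $B_{k+1}$ in view $v+1$ with the $f+1$ honest voters for $B_{k'}$ in view $v'$, then using the lock bound together with \Cref{clm:honest-normal-proposal-cert}. The only organizational difference is that the paper first dispatches the case where \emph{any} honest voter for $B_{k'}$ used the optimistic rule (which directly yields $\Cert_{v'-1}$ for the parent and hence the induction hypothesis applies, since $v' > v+1$ already gives $v'-1 \ge v$ without invoking the intersection node), and only afterwards picks the intersection witness for the all-normal-vote case; you instead fix $P_j \in H \cap H'$ upfront and case-split on how $P_j$ voted, using the lock bound in both branches. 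Both organizations are valid and the substance is the same.
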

\begin{proof}
We complete this proof by strong induction on $v'$, however, \Cref{lem:uniq-xtn-base} covers the base cases ($v' = v$ and $v' = v+1$) so we proceed directly with the inductive step.

\paragraph{Inductive step: $v' > v+1$.} For our induction hypothesis, we assume that the lemma holds up to view $v'-1$. 
That is, we assume that every $\Cert_{v^*}(B_{k^*})$ with $v \le v^* < v'$ extends $B_k$.
We use this assumption to prove that it also holds for $v'$. 
We first observe that the existence of $\Cert_{v'}(B_{k'})$ implies that a set $H_1$ of at least $f+1$ honest nodes voted for $B_{k'}$ in view $v'$.
If any of these nodes voted using the optimistic vote rule, then they must have been locked on $\Cert_{v'-1}(B_{k'-1})$ and $B_{k'}$ must extend $B_{k'-1}$.
Therefore, since by the induction hypothesis $B_{k'-1}$ extends $B_k$, $B_{k'}$ also extends $B_k$.
Alternatively, if no honest node used the optimistic vote rule to vote for $B_{k'}$ then all members of $H_1$ must have used the normal vote rule to vote for $B_{k'}$. 
Therefore, they must have received $\sig{\Propose, B_{k'}, \Cert_{v''}(B_{k''}), v'}$ such that $B_{k'}$ extended $B_{k''}$ and $\Cert_{v''}(B_{k''})$ ranked equal to or higher than their respective locks.
Moreover, by~\Cref{clm:honest-normal-proposal-cert}, $v'' < v'$.
We now show that $v'' \ge v$.

Recall that we know from the commit rule that $\Cert_{v+1}(B_{k+1})$ exists and that $B_{k+1}$ extends $B_k$.
Therefore, a set, say $H_2$, of at least $f+1$ honest nodes must have voted for $B_{k+1}$ in view $v+1$.
Furthermore, by the vote rules, they must have done this after receiving $\Cert_v(B_k)$ and therefore would have locked $\Cert_v(B_k)$ or a higher ranked block certificate upon advancing to a new view.
By~\Cref{clm:honest-majority-intersection}, $H_1$ and $H_2$ must have at least one member, say $P_i$, in common.
Since the view advancement rule ensures that $P_i$ never decreases its local view, it must have voted for $B_{k'}$ (in $v'$) after $B_{k+1}$ (in $v+1$) and thus must have been locked on $\Cert_v(B_k)$ or higher upon doing so.
Hence, since the normal vote rule ensures that $\Cert_{v''}(B_{k''}) \ge \Lock_i$ for $\node{i}$, by the block certificate ranking rule, $v'' \ge v$.
Hence, since $v \le v'' < v'$, by the induction hypothesis, $B_{k''}$ extends $B_k$, so $B_{k'}$ also extends $B_k$.
\end{proof}

\begin{theorem}[Safety]\label{thm:safety}
Honest nodes do not commit different values at the same log position.
\end{theorem}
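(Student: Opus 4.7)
The plan is to reduce to the case of directly committed blocks and then invoke the Unique Extensibility lemma (\Cref{lem:uniq-xtn}). Suppose, for contradiction, that two honest nodes $\node{i}$ and $\node{j}$ commit distinct blocks $B$ and $B'$ at the same position $h$ in their respective local logs, where position here corresponds to block height. By the Direct and Indirect Commit rules, $B$ was either directly committed by $\node{i}$ or is an uncommitted ancestor of some block $B_k$ that $\node{i}$ directly committed after being certified in some view $v$. In either case, $B$ lies on the chain of ancestors of $B_k$ at height $h$. The same is true for $B'$: there is a block $B_l$ that $\node{j}$ directly committed after certification in some view $v'$, and $B'$ lies on the ancestor chain of $B_l$ at height $h$.

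Next, I would assume without loss of generality that $v \le v'$. Since $\node{i}$ directly committed $B_k$ certified for view $v$, and $\Cert_{v'}(B_l)$ exists with $v' \ge v$, \Cref{lem:uniq-xtn} implies that $B_l$ extends $B_k$. Consequently, $B_k$ is an ancestor of $B_l$, and so the chain of ancestors of $B_l$ passes through $B_k$ and agrees with the chain of ancestors of $B_k$ at every height up to and including the height of $B_k$. In particular, the blocks at height $h$ on these two ancestor chains must coincide (heights are uniquely determined along any chain since each block has exactly one parent). Hence $B = B'$, contradicting our assumption.

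The main obstacle I anticipate is the bookkeeping required to cleanly reduce the indirect-commit case to the direct-commit case, since the Commit rules permit a block to enter an honest node's log either through being directly committed or by being pulled in as an ancestor of a later directly committed block. I would handle this by explicitly invoking the Indirect Commit rule to associate every committed block with a witnessing directly committed descendant, and then noting that position $h$ in the local log corresponds to the unique ancestor at height $h$ along that descendant's chain. The remainder is a straightforward application of \Cref{lem:uniq-xtn} together with the fact that each block has exactly one parent, which makes ancestor chains unique at each height.
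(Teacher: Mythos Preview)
Your proposal is correct and follows essentially the same approach as the paper: reduce each committed block to a witnessing directly-committed descendant via the Indirect Commit rule, apply \Cref{lem:uniq-xtn} to force one directly-committed block to extend the other, and then use uniqueness of parents to conclude the ancestor chains coincide at every height. The paper's proof is slightly terser (it handles both orderings of $v,v'$ explicitly rather than invoking WLOG), but the structure and key lemma are identical.
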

\begin{proof}
We show that if two honest nodes $P_i$ and $P_j$ commit $B_k$ and $B'_k$, then $B_k = B'_k$. This fact together with the assumptions mentioned along with~\Cref{def:smr} in~\Cref{sec:preliminaries}, is sufficient to achieve safety.

Suppose, for the sake of contradiction, that $P_i$ and $P_j$ commit $B_k$ and $B'_k$ but $B_k \ne B'_k$.
By the indirect commit rule, $P_i$ and $P_j$ must do so as a result of respectively directly committing blocks $B_l$ and $B_m$ such that $B_l$ extends $B_k$ and $l \ge k$, and $B_m$ extends $B'_k$ and $m \ge k$.
Thus, by~\Cref{lem:uniq-xtn}, either $v \le v'$ and $B_m$ extends $B_l$, or $v \ge v'$ and $B_l$ extends $B_m$.
Therefore, since $B_l$ and $B_m$ are a part of the same chain and because each block in the chain has exactly one parent, $B_k = B'_k$.
\end{proof}

\begin{claim}\label{clm:all-enter} 
Let $t_g$ denote GST. If the first honest node to enter view $v$ does so at time $t$, then all honest nodes enter $v$ or higher by $\max(t_g, t) + \Delta$.
\end{claim}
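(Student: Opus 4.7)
The plan is to argue that the only mechanism by which an honest node can enter a new view $v$ is the \emph{Advance View} rule, which requires the entering node to multicast the triggering certificate \emph{before} changing its local view. I will then use the partial synchrony bound on $\Delta$ to propagate this certificate to all other honest nodes, and show that every honest node receiving it is forced to be in view $v$ or higher.

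First I would let $P_i$ be the first honest node to enter view $v$ at time $t$. Since an honest node's local view is monotonically non-decreasing and all honest nodes initialize at view $1$, we may assume $v \ge 2$ (the case $v=1$ is trivial since every honest node is in view $1$ at time $0$). By inspection of the protocol in~\Cref{fig:simple-moonshot}, the only action that changes an honest node's view is the Advance View rule, which is triggered by receipt of some certificate $\mathcal{X}_{v-1}$ that is either $\Cert_{v-1}(B_{h})$ or $\TimeoutCert_{v-1}$ (a higher-ranked certificate would cause $P_i$ to enter a view strictly greater than $v$, contradicting the choice of $v$ as the view that $P_i$ first enters). Moreover, step (i) of the Advance View rule requires $P_i$ to multicast $\mathcal{X}_{v-1}$ \emph{before} executing the remainder of the rule, so in particular before updating its local view. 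Hence $P_i$ multicasts $\mathcal{X}_{v-1}$ at some time $t^* \le t$.

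Next I would apply the partial synchrony assumption. Any message sent at time $t^*$ by an honest node is delivered to all its intended recipients by $\max(t_g, t^*) + \Delta \le \max(t_g, t) + \Delta$. Therefore, every honest node $P_j$ has received $\mathcal{X}_{v-1}$ by $\max(t_g, t) + \Delta$. If $P_j$ is already in some view $v'' \ge v$ at the time of receipt we are done, since local views never decrease. Otherwise $P_j$'s current view $v''$ satisfies $v'' < v$, so receipt of $\mathcal{X}_{v-1}$ triggers the Advance View rule with $v' = v > v''$, and step (iv) forces $P_j$ into view $v$ (or higher, if a higher certificate is processed in the same atomic step). Either way, $P_j$ is in view $v$ or higher by $\max(t_g, t) + \Delta$, which is exactly what the claim demands.

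The only subtlety, and the step to be careful with, is justifying that the multicast in step (i) of Advance View happens no later than the view transition itself: this is why the protocol specification explicitly orders ``multicast the certificate'' before ``enter $v'$''. Other than that, the argument is a direct application of partial synchrony together with monotonicity of local views, so I do not anticipate a significant technical obstacle.
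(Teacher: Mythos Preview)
Your proposal is correct and follows essentially the same approach as the paper's proof: both argue that the first honest node to enter $v$ must have done so via the Advance View rule triggered by $\Cert_{v-1}$ or $\TimeoutCert_{v-1}$, that this certificate is multicast, and that partial synchrony delivers it to all honest nodes by $\max(t_g,t)+\Delta$, forcing them into view $v$ or higher. Your version is slightly more detailed (e.g., explicitly noting the ordering of step (i) before step (iv) and handling the trivial $v=1$ case), but the underlying argument is identical.
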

\begin{proof}
Let $\node{i}$ be the first honest node to enter $v$. 
By the view advancement rule, it must have entered $v$ via either $\Cert_{v-1}$ or $\TimeoutCert_{v-1}$ and must have multicasted this certificate upon doing so. 
Therefore, since messages sent by honest nodes arrive within $\Delta$ time after GST, all honest nodes will receive this certificate by $\max(t_g, t) + \Delta$ and thus will enter $v$ if they have not already entered a higher view.
\end{proof}

\begin{lemma}\label{lem:view-progress}
All honest nodes keep entering increasing views.
\end{lemma}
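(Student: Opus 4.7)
The plan is to argue by contradiction. Suppose some honest node stops advancing, and let $v^*$ be the largest view that any honest node ever enters. I would then derive a contradiction by showing that the protocol rules force every honest node to eventually enter $v^* + 1$.

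First, I would invoke \Cref{clm:all-enter} to conclude that every honest node actually enters $v^*$: once the first honest node enters $v^*$ at some time $t$, the claim yields that every honest node enters $v^*$ or a higher view by $\max(t_g, t) + \Delta$, and by assumption none enters higher. Let $t^*$ denote the time at which the last honest node enters $v^*$.

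Next, I would observe that no honest node can ever receive $\Cert_{v^*}$ or $\TimeoutCert_{v^*}$, since by the Advance View rule it would then enter $v^* + 1$, contradicting the choice of $v^*$. Consequently, no honest node leaves $v^*$ before its $\ViewTimer$ of $5\Delta$ expires; by the Timeout rule, each of the at least $2f+1$ honest nodes multicasts $\sig{\Timeout, v^*}_i$. After GST these messages are delivered within $\Delta$, so every honest node eventually collects $2f+1$ distinct timeout messages for $v^*$, assembles $\TimeoutCert_{v^*}$, and by the Advance View rule enters $v^* + 1$---the desired contradiction.

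The main obstacle will be carefully handling asynchrony before GST and the timing discrepancies between honest nodes that enter $v^*$ at different points in time. In particular, the amplification clause of the Timeout rule (multicast upon seeing $f+1$ distinct timeout messages for $v$) is what I expect to need in order to handle the case in which a node whose own $5\Delta$ timer has not yet fired---or that entered $v^*$ only recently---must nevertheless contribute its timeout message quickly enough for every honest node to assemble $\TimeoutCert_{v^*}$. A careful accounting of these times, anchored at $\max(t_g, t^*)$, should yield a uniform upper bound by which every honest node has advanced past $v^*$, completing the contradiction.
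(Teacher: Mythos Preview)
Your proposal is correct and follows essentially the same approach as the paper. The paper's proof is terser: it fixes a stuck node in view $v$, lets $v'$ be the highest view any honest node ever reaches, dispatches the case $v'>v$ directly via \Cref{clm:all-enter}, and for $v'=v$ observes that all honest nodes enter $v$, eventually send $\TimeoutMessage_v$, build $\TimeoutCert_v$, and advance---exactly your argument with $v^*=v$. Your anticipated ``main obstacle'' (amplification and fine-grained timing) is not actually needed: by maximality of $v^*$ no honest node ever leaves $v^*$, so every honest timer expires there and the straightforward timeout-then-certificate step suffices.
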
    
\begin{proof}
    Suppose, for the sake of contradiction, that at least one honest node, say $\node{i}$, becomes stuck in view $v$ and let $v'$ be the highest view of any honest node at any time.
    If $v' > v$ then~\Cref{clm:all-enter} shows that $\node{i}$ will enter $v'$ or higher, contradicting the assumption that it becomes stuck in $v$.
    Otherwise, if $v' = v$ then since this implies that no honest node ever enters a view higher than $v$ and because~\Cref{clm:all-enter} shows that all honest nodes will enter $v$, they must all become stuck there.
    However, by the view advancement and timeout rules, these nodes will all eventually multicast $\TimeoutMessage_v$ and thus will all be able to construct $\TimeoutCert_v$ and enter $v+1$, contradicting the conclusion that they must become stuck in $v$.
\end{proof}

\begin{claim}
    \label{clm:sequential-progress}
    If an honest node enters view $v$ then at least $f+1$ honest nodes must have already entered $v-1$.
\end{claim}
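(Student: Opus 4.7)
The plan is to case-split on how the honest node that enters view $v$, say $P_j$, does so, following the Advance View rule in \Cref{fig:simple-moonshot}. There are only two options: either $P_j$ triggers the transition by receiving $\Cert_{v-1}(B_h)$ for some block $B_h$, or by receiving $\TimeoutCert_{v-1}$. In either case, the triggering certificate aggregates a quorum of $2f+1$ distinct signed messages produced for view $v-1$, so a standard honest-majority counting argument (as in \Cref{clm:quorum-intersection}) yields that at least $f+1$ of those messages were signed by honest nodes.

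The remaining task is then to show that each honest contributor was itself in view $v-1$ at the moment it signed. For this I would appeal to the scoping of \Cref{fig:simple-moonshot}, whose Vote and Timeout rules are listed under ``whilst in view $v$''. Hence an honest node only multicasts $\sig{\Vote, H(B), v-1}_i$ while in view $v-1$, and only multicasts $\sig{\Timeout, v-1}_i$ while in view $v-1$ (either by its view timer expiring there, or by receiving $f+1$ view $v-1$ timeouts while in view $v-1$). Either way, the honest signers of the messages in $\Cert_{v-1}$ or $\TimeoutCert_{v-1}$ must have entered view $v-1$ prior to signing.

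Combining these observations gives at least $f+1$ distinct honest nodes that entered $v-1$ before $P_j$ was able to assemble the certificate that caused it to enter $v$, which is exactly what the claim asserts. I do not anticipate any real obstacle: the only mild subtlety is confirming that the contextual scoping of the protocol rules really does preclude an honest node from signing a view $v-1$ vote or timeout from any other local view, which follows directly from the protocol specification and the fact that honest nodes never decrease their local view (per the Advance View rule).
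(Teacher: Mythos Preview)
Your proposal is correct and follows essentially the same approach as the paper: case-split on whether the honest node enters $v$ via $\Cert_{v-1}$ or $\TimeoutCert_{v-1}$, count at least $f+1$ honest signers in the underlying quorum, and appeal to the ``whilst in view $v$'' scoping of the Vote and Timeout rules in \Cref{fig:simple-moonshot} to conclude those signers were in $v-1$ when they signed. The paper's proof is terser but structurally identical; your added remark about the two timeout triggers and the monotonicity of local views is a harmless elaboration.
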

\begin{proof}
    The view advancement rule requires an honest node to observe either $\Cert_{v-1}$ or $\TimeoutCert_{v-1}$ in order to enter $v$.
    Therefore, at least $f+1$ honest nodes must send the corresponding messages.
    Moreover, by the vote and timeout rules, they must do so whilst in $v$.
    Thus, in either case, an honest node can only enter $v$ if at least $f+1$ honest nodes have already entered $v-1$.
\end{proof}

\begin{lemma}\label{lem:locking}
If the first honest node to enter view $v$ does so after GST and $L_v$ is honest, then all honest nodes receive $\Cert_v(B_k)$ for some block $B_k$ proposed by $L_v$, and at least $f+1$ of them lock this certificate while entering $v+1$.
\end{lemma}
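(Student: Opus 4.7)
The plan is to establish a post-GST timeline that secures three sub-goals in turn: $L_v$ produces a normal proposal whose embedded certificate dominates every honest lock; the proposed block collects every honest vote in time to form $\Cert_v(B_k)$; and $\Cert_v(B_k)$ is multicast widely enough that at least $f+1$ honest nodes hold it when they enter $v+1$. Using~\Cref{clm:all-enter}, $L_v$ enters $v$ at some $t_L \in [t, t + \Delta]$ where $t$ is the first honest entry into $v$, so the normal propose rule forces $L_v$ to multicast $\sig{\Propose, B_k, \Cert_{v'}(B_{k-1}), v}$ no later than $t_L + 2\Delta \leq t + 3\Delta$.

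The central technical step is to argue that $\Cert_{v'}(B_{k-1})$ ranks at least as high as $\Lock_i$ for every honest $\node{i}$. I would case-split on the view of $\Lock_i$. If this view is strictly less than $v-1$, then upon entering $v$ (by time $t + \Delta$) $\node{i}$ unicasts a $\Status$ message containing $\Lock_i$ to $L_v$, which arrives by $t + 2\Delta \leq t_L + 2\Delta$. Otherwise $\Lock_i = \Cert_{v-1}(B_{k-1})$ (unique by~\Cref{lem:view-safety}); because the Lock step fires only within the advance-view procedure and sets $\Lock_i$ to the highest certificate received so far, $\node{i}$ must have received $\Cert_{v-1}(B_{k-1})$ strictly before entering $v$. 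The higher priority of the advance-view rule over the timeout and vote rules then forces $\node{i}$ to have entered $v$ via $\Cert_{v-1}(B_{k-1})$ itself and multicast it by $t + \Delta$, so $L_v$ receives it by $t + 2\Delta$. In either case $L_v$'s proposal satisfies $\Cert_{v'}(B_{k-1}) \geq \Lock_i$, so every honest $\node{i}$ either votes for $B_k$ through the normal rule, or had already cast an optimistic vote for the same $B_k$ (when the optimistic proposal shares the parent $B_{k-1}$ the fixed-per-view payload rule forces the two blocks to coincide, and otherwise no honest node can trigger the optimistic vote rule).

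Since votes are multicast by $t + 4\Delta$ and delivered by $t + 5\Delta$, every honest node constructs $\Cert_v(B_k)$ by $t + 5\Delta$. For the locking conclusion I would observe that the first honest timeout message for $v$ cannot be sent before $t + 5\Delta$ (the earliest view-timer expiry) and that at most $f$ Byzantine timeouts together with at most one honest timeout never suffice to form $\TimeoutCert_v$; hence every honest $\node{i}$ eventually enters $v+1$ via $\Cert_v(B_k)$---which it already holds by $t + 5\Delta$---and the Lock step sets $\Lock_i = \Cert_v(B_k)$. In particular all $2f+1$ honest nodes lock the certificate, easily exceeding the $f+1$ requirement.

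The main obstacle is the second sub-case of the central step: the protocol deliberately suppresses $\Status$ messages when $\Lock_i$ has view $v-1$, so the argument must exploit the interplay between the Lock step's placement inside the advance-view procedure and the rule's priority over voting to conclude that any honest node locking $\Cert_{v-1}$ must itself have advanced through it, thereby multicasting it to $L_v$ in time.
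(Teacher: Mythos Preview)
Your decomposition---case-splitting on the view of each honest $\Lock_i$ rather than on how $L_v$ enters $v$---is a sensible alternative to the paper's, but two steps do not go through as written.

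First, the claim that $L_v$'s embedded certificate dominates every honest lock assumes the status messages (your first sub-case) reach $L_v$ \emph{before it proposes}. You establish only that they arrive by $t_L+2\Delta$, which is merely an upper bound on the propose time: by branch~(i) of the Propose rule, if $L_v$ receives $\Cert_{v-1}$ at any point before $t_L+2\Delta$---in particular if it enters $v$ via $\Cert_{v-1}$---it multicasts its normal proposal immediately, possibly before those status messages arrive. The inequality you want is still true in that scenario (the embedded certificate is then $\Cert_{v-1}$, which automatically dominates every lock of view $\le v-1$), but your argument does not supply it. The paper avoids the issue by splitting on whether $L_v$ enters $v$ via $\Cert_{v-1}$ or via $\TimeoutCert_{v-1}$; only the latter branch invokes the $2\Delta$ wait and the status messages.

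Second, the locking conclusion overclaims. You assert every honest node enters $v+1$ via $\Cert_v(B_k)$, but ruling out $\TimeoutCert_v$ is not enough: a node still in $v$ can receive some $\Cert_{v+1}$---assembled from the votes of faster honest peers together with Byzantine votes---and jump directly to $v+2$, never executing Advance View for $v+1$ and hence never locking $\Cert_v(B_k)$. The paper preempts this by first disposing of the case where any honest node enters a view beyond $v+1$ before $t+5\Delta$: by~\Cref{clm:sequential-progress} that forces at least $f+1$ honest nodes to have already entered $v+1$, necessarily via $\Cert_v(B_k)$ since no $\TimeoutCert_v$ yet exists, and those $f+1$ lock it. Only after discharging this case does the paper run the timeline argument you give, under the explicit assumption that no honest node has left $v$ prematurely.
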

\begin{proof}
By~\Cref{lem:view-safety}, only one block can become certified for a given view.
Thus, if $\Cert_v(B_k)$ exists then any node that receives a view $v$ block certificate must receive $\Cert_v(B_k)$.
We assume this fact in the remainder of the proof.

Let $t$ be the time when the first honest node enters view $v$.
Because honest nodes only send $\TimeoutMessage_v$ either after receiving $f+1$ such messages from unique senders, or upon their view timers expiring, no honest node will send $\TimeoutMessage_v$ until $t + 5\Delta$.
Moreover, since by~\Cref{clm:sequential-progress} no honest node can enter a view greater than $v$ until at least $f+1$ honest nodes enter $v$, neither can any honest node send a timeout message for a view greater than $v$ before this time.
Thus, no honest node can enter a view greater than $v$ via a timeout certificate before $t + 5\Delta$.
Consequently, since by the same claim no honest node can enter a view greater than $v+1$ unless at least $f+1$ honest nodes first enter $v+1$, if any honest node enters a view greater than $v+1$ before $t + 5\Delta$ then, by the view advancement rule, at least $f+1$ honest nodes must have locked and multicasted $\Cert_v(B_k)$ upon entering $v+1$.
Alternatively, if any honest node enters $v+1$ before $t + 4\Delta$ then, by the view advancement rule, it will multicast $\Cert_v$ upon doing so, which all nodes will receive before $t + 5\Delta$.
Therefore, either all honest nodes enter $v+1$ and lock $\Cert_v$ before $t + 5\Delta$, or some honest node enters a view greater than $v+1$ before $t + 5\Delta$.
In either case, the proof is complete.

Suppose, then, both that no honest node enters a view greater than $v+1$ before $t + 5\Delta$ and that no honest node enters $v+1$ before $t + 4\Delta$.
Therefore, by~\Cref{clm:all-enter}, all honest nodes will enter $v$ before $t + \Delta$.
If $L_v$ enters $v$ via $\Cert_{v-1}(B_h)$, then it will multicast $\sig{\Propose, B_{h+1}, \Cert_{v-1}(B_h), v}$ with $B_{h+1}$ extending $B_h$, which all honest nodes will receive before $t + 2\Delta$.
Therefore, if all honest nodes vote for $B_{h+1}$ no later than the time that they receive this proposal then they will all receive $\Cert_v(B_{h+1})$ before $t + 3\Delta$.
Thus, by~\Cref{clm:sequential-progress}, at least $f+1$ of them will enter $v+1$ via this certificate and will subsequently lock it, completing the proof.
Otherwise, some honest node, say $\node{j}$, must fail to vote for $B_{h+1}$ before $t + 2\Delta$.
However, since we have already considered the case where any honest node enters a view greater than $v$ before $t + 4\Delta$, $\node{j}$ cannot have $\Lock_i > \Cert_{v-1}(B_h)$ when it attempts to vote for $B_{h+1}$.
Therefore, since $L_v$ will ensure that $B_{h+1}$ extends $B_h$, $\node{j}$ can only have failed to vote for $B_{h+1}$ if it had already voted in view $v$.
However, since $L_v$ is honest it will only create a single normal proposal, so $\node{j}$ must have voted for an optimistic proposal containing some block $B_k$.
However, since~\Cref{lem:view-safety} shows that only one block can become certified for $v-1$, by the optimistic vote rule, $B_k$ extends $B_h$.
Moreover, since we have defined block payloads as being fixed for a given view, because $L_v$ is honest, $B_k$ must also have the same payload as $B_{h+1}$.
Thus, $B_k = B_{h+1}$, contradicting the conclusion that $\node{j}$ must have failed to vote for $B_k$ and completing the proof.

Otherwise, if $L_v$ enters $v$ via $\TimeoutCert_{v-1}$ then it will wait $2\Delta$ before proposing.
As before, this implies that all honest nodes enter $v$ before $t + \Delta$.
By the view advancement rule, any node that does so via $\TimeoutCert_{v-1}$ will unicast $\sig{\Status, v, \Lock_i}$ to $L_v$.
Similarly, any node that enters $v$ via $\Cert_{v-1}$ will multicast this certificate.
Consequently, $L_v$ will receive the highest ranked block certificate, say $\Cert_{v'}(B_h)$, known to any honest node before $t + 3\Delta$.
Thus, since $L_v$ is honest, when it proposes it will multicast a normal proposal containing a block that extends $B_h$; i.e., $\sig{\Propose, B_{h+1}, \Cert_{v'}(B_h), v}$.
All honest nodes will receive this proposal before $t + 4\Delta$.
Furthermore, if they all vote for $B_{h+1}$ before this time then they will all receive $\Cert_v(B_{h+1})$ before $t + 5\Delta$.
Thus, since we have already concluded that no honest node can enter $v+1$ or higher via a timeout certificate before this time, by~\Cref{clm:sequential-progress}, at least $f+1$ of them will lock $\Cert_v(B_{h+1})$ upon entering $v+1$.
Otherwise, some honest node, say $\node{j}$, must fail to vote for $B_{h+1}$ before $t + 4\Delta$.
However, we already know that all honest nodes will enter $v$ before $t + \Delta$ and will have $\Lock_i \le \Cert_{v'}(B_h)$ upon receiving $L_v$'s proposal, which will occur before $t + 4\Delta$ and thus before any of them can have sent $\TimeoutMessage_v$.
Moreover, as previously reasoned, $L_v$ will not create an equivocal proposal that $\node{j}$ can vote for.
Therefore, $\node{j}$ must vote for $B_{h+1}$ before $t + 4\Delta$. Thus, as before, all honest nodes will receive $\Cert_v(B_{h+1})$ before $t + 5\Delta$ and since no honest node can enter $v+1$ or higher via a timeout certificate before this time, by~\Cref{clm:sequential-progress}, at least $f+1$ of them will lock this certificate upon entering $v+1$, completing the proof.
\end{proof}

\begin{lemma}\label{lem:certified-successor}
    If the first honest node to enter view $v$ does so after GST, $L_v$ is honest and proposes a block $B_k$ that becomes certified, and $\Cert_{v+1}(B_l)$ exists, then $B_l$ directly extends $B_k$.
\end{lemma}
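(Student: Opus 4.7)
The plan is to use~\Cref{lem:locking} to exhibit a set $H_1$ of at least $f+1$ honest nodes that lock $\Cert_v(B_k)$ upon entering view $v+1$, and to combine this with the fact that $\Cert_{v+1}(B_l)$ implies a set $H_2$ of at least $f+1$ honest nodes that voted for $B_l$ in view $v+1$. By~\Cref{clm:honest-majority-intersection}, there is an honest $P_i \in H_1 \cap H_2$. I would then argue that when $P_i$ casts its view-$v+1$ vote for $B_l$, its lock is still exactly $\Cert_v(B_k)$: \simple only mutates $\Lock_i$ during view transitions, and the advance-view rule prevents a vote in $v+1$ from being issued after $P_i$ enters $v+2$; so $P_i$'s lock is the value set upon entering $v+1$, namely $\Cert_v(B_k)$.

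From here I would case-split on which of the two vote rules $P_i$ invoked. If it used the optimistic vote rule, then the triggering proposal is $\sig{\OptPropose, B_l, v+1}$ with $B_l$ extending some $B'$ such that $\Lock_i = \Cert_v(B')$; since $\Lock_i = \Cert_v(B_k)$ and~\Cref{lem:view-safety} shows that only one block can be certified for view $v$, $B' = B_k$, so $B_l$ directly extends $B_k$. If $P_i$ used the normal vote rule, then it received $\sig{\Propose, B_l, \Cert_{v''}(B_h), v+1}$ with $B_l$ directly extending $B_h$ and $\Cert_{v''}(B_h) \ge \Lock_i = \Cert_v(B_k)$, so $v'' \ge v$. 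But~\Cref{clm:honest-normal-proposal-cert} forces $v'' < v+1$, so $v'' = v$, and~\Cref{lem:view-safety} once again yields $B_h = B_k$, giving $B_l$ directly extends $B_k$.

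The main subtlety is the lock-stability step: I need to be sure that the common honest node really has $\Lock_i = \Cert_v(B_k)$ at the instant it votes for $B_l$, not merely at some unrelated earlier moment. This follows directly from the \simple-specific rule that locks are updated only at view-entry time, together with the fact that the advance-view rule takes priority over voting, so a vote in $v+1$ can only be cast while $P_i$ is actually in $v+1$ and its lock has not yet been rewritten. Everything else is a mechanical matching of the preconditions of the two vote rules against what~\Cref{lem:locking} and~\Cref{lem:view-safety} have already furnished.
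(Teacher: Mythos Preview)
Your proposal is correct and follows essentially the same approach as the paper's proof: obtain $H_1$ from \Cref{lem:locking}, intersect it with the honest voters $H_2$ for $\Cert_{v+1}(B_l)$ via \Cref{clm:honest-majority-intersection}, and then case-split on the two vote rules, using \Cref{lem:view-safety} and \Cref{clm:honest-normal-proposal-cert} to pin the parent to $B_k$. Your explicit justification of lock-stability (that $\Lock_i$ is only updated at view entry and the advance-view rule preempts voting) is a detail the paper leaves implicit, but otherwise the arguments coincide.
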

\begin{proof}
    By~\Cref{lem:locking}, a set $H_1$ of at least $f+1$ honest nodes lock $\Cert_v(B_k)$ while entering $v+1$. 
    Furthermore, $\Cert_{v+1}(B_l)$ can only exist if a set $H_2$ of at least $f+1$ honest nodes vote for $B_l$ in view $v+1$. 
    By~\Cref{clm:honest-majority-intersection}, $H_1$ and $H_2$ must have at least one node, say $\node{i}$, in common.
    Thus, since the optimistic vote rule requires $\node{i}$ to be locked on the parent of $B_l$, if $\node{i}$ votes for an optimistic proposal containing $B_l$ then $B_l$ must directly extend $B_k$.
    Alternatively, $\node{i}$ must vote for $\sig{\Propose, B_l, \Cert_{v'}(B_h), v+1}$.
    Thus, since by~\Cref{lem:view-safety} and~\Cref{clm:honest-normal-proposal-cert} $\Cert_{v'}(B_h) = \Cert_v(B_k)$, $B_l$ must directly extend $B_k$.
\end{proof}

\begin{theorem}[Liveness]\label{thm:liveness}
Each client request is eventually committed by all honest nodes.
\end{theorem}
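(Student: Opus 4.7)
The plan is to show that, after GST, \simple commits a new block whenever two consecutive honest leaders occur, and then combine this with the assumption (stated alongside~\Cref{def:smr} in~\Cref{sec:preliminaries}) that honest leaders continue to re-propose any transaction received from an honest client until it is included in a committed block. Together with the leader-election assumption that infinitely many pairs of consecutive honest leaders appear after GST, this yields that every honest client request is eventually committed by every honest node.

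First I would establish the key commit event: suppose $L_v$ and $L_{v+1}$ are both honest and that the first honest node to enter $v$ does so at some time $t \ge \text{GST}$ (the existence of such a pair follows from \Cref{lem:view-progress} together with the leader-election assumption, since honest nodes keep entering increasing views and the schedule contains a consecutive honest pair after GST). By \Cref{lem:locking} applied to $v$, there exists a block $B_k$ proposed by $L_v$ such that $\Cert_v(B_k)$ is received by every honest node, and a set $H$ of at least $f+1$ honest nodes lock $\Cert_v(B_k)$ upon entering $v+1$. In particular, $L_{v+1}$ is honest and so, by the analogous argument used in~\Cref{lem:locking} (combined with the fact that $L_{v+1} \in H$ implies its highest known certificate is $\Cert_v(B_k)$ or higher), $L_{v+1}$ proposes a block $B_{k+1}$ extending $B_k$ and some $\Cert_{v+1}(B_{k+1})$ is received by every honest node. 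By \Cref{lem:certified-successor}, $B_{k+1}$ directly extends $B_k$, so the direct-commit rule fires at every honest node and $B_k$ (together with all of its uncommitted ancestors) is committed by every honest node.

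Next I would use this to finish liveness. Because the leader schedule contains infinitely many consecutive honest pairs after GST (by the leader-election assumption), the previous step shows that new blocks are committed by every honest node infinitely often. Given any honest client request $r$, the SMR implementation assumption guarantees that honest leaders keep including $r$ (either directly or by reference) in their proposals until some block containing $r$ is committed. Because honest proposals from consecutive honest pairs after GST always become committed, $r$ is eventually included in a committed block, and by the indirect-commit rule and \Cref{thm:safety} every honest node commits it at a consistent position.

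The main obstacle I expect is not any single algebraic step but rather the careful handling of timing at the boundary case where $L_{v+1}$ is honest but may enter $v+1$ via different routes (via $\Cert_v(B_k)$ or via a timeout certificate that still carries a lock $\ge \Cert_v(B_k)$ from members of $H$); one must invoke \Cref{lem:locking} twice, first to conclude that $\Cert_v(B_k)$ is received by all honest nodes and locked by a majority $H$, then to justify that $L_{v+1}$'s own proposal extends $B_k$ (rather than an equivocating ancestor) so that \Cref{lem:certified-successor} applies and the direct-commit rule is triggered with $\Cert_v(B_k)$ and $\Cert_{v+1}(B_{k+1})$ at every honest node.
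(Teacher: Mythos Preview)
Your proposal is correct and follows essentially the same route as the paper: invoke \Cref{lem:view-progress} to reach a pair of consecutive honest leaders $L_v,L_{v+1}$ after GST, apply \Cref{lem:locking} to each of $v$ and $v+1$ to get $\Cert_v(B_k)$ and $\Cert_{v+1}(B_l)$ at all honest nodes, and use \Cref{lem:certified-successor} to conclude $B_l$ directly extends $B_k$ so the direct-commit rule fires everywhere. One small cleanup: your parenthetical assertion ``$L_{v+1}\in H$'' is neither guaranteed by \Cref{lem:locking} (which only promises \emph{some} $f{+}1$ honest lockers) nor needed---the paper simply reapplies \Cref{lem:locking} at $v{+}1$ to obtain certification of some block and then lets \Cref{lem:certified-successor} force that block to extend $B_k$, which is exactly the decomposition you identify in your ``obstacle'' paragraph.
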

\begin{proof}
We show that all honest nodes continue to commit new blocks to their local blockchains after GST, which, together with the assumptions mentioned along with~\Cref{def:smr} in~\Cref{sec:preliminaries}, is sufficient to achieve liveness.

By~\Cref{lem:view-progress}, all honest nodes continually enter higher views. Therefore, the protocol eventually reaches two consecutive views after GST, say $v$ and $v+1$, that have leaders $L_v$ and $L_{v+1}$ that are both honest.
By~\Cref{lem:locking}, all honest nodes will receive the same $\Cert_v$ and at least $f+1$ of them will lock it upon entering $v+1$.
Repeated application of this lemma for $L_{v+1}$ shows that all honest nodes will also receive the same $\Cert_{v+1}$.
Let the blocks certified by $\Cert_v$ and $\Cert_{v+1}$ be denoted $B_k$ and $B_l$ respectively.
\Cref{lem:certified-successor} shows that $B_l$ directly extends $B_k$.
Consequently, by the commit rule, all honest nodes will commit $B_k$ upon receiving both $\Cert_v(B_k)$ and $\Cert_{v+1}(B_l)$.
Thus, \simple commits a new block every time two consecutive, honest leaders are elected after GST.
\end{proof}

\begin{theorem}[Reorg resilience]\label{the:reorg-resilience}
If the first honest node to enter view $v$ does so after GST and $L_v$ is honest and proposes, then one of its proposed blocks, say $B_k$, becomes certified and for every $\Cert_{v'}(B_{k'})$ such that $v' \ge v$, $B_{k'}$ extends $B_k$.
\end{theorem}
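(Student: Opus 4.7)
The plan is to split the statement into its two parts and then adapt the machinery already developed for safety. For the first part (existence of a certified proposal by $L_v$), I would invoke \Cref{lem:locking} directly: under the hypotheses of the theorem it guarantees that some block $B_k$ proposed by $L_v$ is certified, and moreover that a set $H$ of at least $f+1$ honest nodes lock $\Cert_v(B_k)$ upon entering view $v+1$. This second consequence of \Cref{lem:locking} is what I will lean on for the extension claim; it plays the role that the existence of $\Cert_{v+1}(B_{k+1})$ played in the proof of \Cref{lem:uniq-xtn}.

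For the second part, I would proceed by strong induction on $v' \ge v$. The base case $v' = v$ is immediate from \Cref{lem:view-safety}, which forces $B_{k'} = B_k$. The case $v' = v+1$ is exactly \Cref{lem:certified-successor}. For the inductive step $v' > v+1$, I would essentially replay the argument of \Cref{lem:uniq-xtn}: write $\Cert_{v'}(B_{k'})$ as arising from $f+1$ honest voters, split on the vote rule used. If any voter used the optimistic rule, it must have been locked on the certificate of $B_{k'-1}$, whose view is $v'-1$; the induction hypothesis then gives that $B_{k'-1}$ extends $B_k$, and hence so does $B_{k'}$. Otherwise all relevant voters used the normal rule on $\sig{\Propose, B_{k'}, \Cert_{v''}(B_{k''}), v'}$ with $v'' < v'$ (by \Cref{clm:honest-normal-proposal-cert}), and $B_{k'}$ extends $B_{k''}$.

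The key step, and the place where the hypothesis of the theorem matters, is showing $v'' \ge v$ in this normal-vote case so that the induction hypothesis applies to $B_{k''}$. Here I would use \Cref{clm:honest-majority-intersection} on the honest voting set and the set $H$ of $f+1$ honest nodes that lock $\Cert_v(B_k)$ while entering $v+1$. Their common member $\node{i}$ is locked at view $\ge v$ from the moment it enters $v+1$, and since locks are monotone with respect to view advancement and $\node{i}$ never decreases its local view, $\node{i}$ still has $\Lock_i \ge \Cert_v(B_k)$ when voting in $v'$. The normal vote rule then forces $\Cert_{v''}(B_{k''}) \ge \Lock_i$, so by the ranking rule $v'' \ge v$, closing the induction.

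\paragraph{Anticipated obstacle.}
The main subtlety is that \Cref{lem:uniq-xtn} was stated assuming a \emph{direct commit} of $B_k$, which provided both $\Cert_v(B_k)$ and $\Cert_{v+1}(B_{k+1})$ ``for free'' and automatically placed $\Cert_v(B_k)$ into the locks of at least $f+1$ honest nodes (those who voted in $v+1$). Here we do not have a commit; we only have an honest leader that proposes after GST. The work is to recognize that \Cref{lem:locking} supplies the exact same locking consequence, after which the inductive argument transfers essentially verbatim. Once this identification is made, the rest is a careful but routine adaptation of the safety proof.
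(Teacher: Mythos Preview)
Your proposal is correct and follows essentially the same approach as the paper's own proof: invoke \Cref{lem:locking} to obtain both the certified block $B_k$ and the set of $f{+}1$ honest nodes locked on $\Cert_v(B_k)$, handle $v'=v$ via \Cref{lem:view-safety} and $v'=v{+}1$ via \Cref{lem:certified-successor}, and then run the strong induction from \Cref{lem:uniq-xtn} with the locking set from \Cref{lem:locking} replacing the one derived from the commit. Your identification of the ``anticipated obstacle'' is exactly the point of the proof, and you resolve it just as the paper does.
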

\begin{proof}
By~\Cref{lem:locking}, $L_v$ produces a certified block. 
Let this block be denoted $B_k$.
We now show that for every $\Cert_v'(B_{k'})$, $B_{k'}$ extends $B_k$.

If $v' = v$ then, by~\Cref{lem:view-safety}, $B_{k'} = B_k$ and thus, per the definition of block extension given in~\Cref{sec:preliminaries}, $B_{k'}$ extends $B_k$. We now complete the proof for $v' > v$ by strong induction on $v'$, however, since~\Cref{lem:certified-successor} covers the base case ($v' = v+1$), we proceed directly with the inductive step.

\paragraph{Inductive step: $v' > v+1$.} For our induction hypothesis, we assume that the theorem holds up to view $v'-1$. 
That is, we assume that every $\Cert_{v^*}(B_{k^*})$ with $v \le v^* < v'$ extends $B_k$.
We use this assumption to prove that it also holds for $v'$. 
If any honest node votes for an optimistic proposal containing $B_{k'}$ then, by the optimistic vote rule, it must be locked on $\Cert_{v'-1}(B_{k'-1})$ such that $B_k$ extends $B_{k'-1}$.
Therefore, since by the induction hypothesis $B_{k'-1}$ extends $B_k$, $B_{k'}$ also extends $B_k$.
Otherwise, a set $H_1$ of at least $f+1$ honest nodes vote for $B_{k'}$ via the normal vote rule.
By~\Cref{lem:locking}, a set $H_2$ of at least $f+1$ honest nodes lock $\Cert_v(B_k)$ while entering $v+1$. 
By~\Cref{clm:honest-majority-intersection}, $H_1$ and $H_2$ must have at least one node, say $\node{i}$, in common.
By the normal vote rule, $\node{i}$ will only vote for $\sig{\Propose, B_{k'}, \Cert_{v''}(B_h), v'}$ when $\Cert_{v''}(B_h) \ge \Lock_i$ and $B_{k'}$ extends $B_h$.
Moreover, by~\Cref{clm:honest-normal-proposal-cert}, $v'' < v'$.
Additionally, since $\node{i}$ locks $C_v(B_k)$ upon entering $v+1$ and thus before entering $v'$, $v'' \ge v$.
Therefore, since $v \le v'' < v'$, by the induction hypothesis, $B_{k
'}$ extends $B_k$.
\end{proof}

\section{\pipelined Security Analysis}\label{sec:pipelined-proof}
We now present formal proofs that \pipelined satisfies the safety and liveness properties of SMR, and reorg resilience.

\begin{claim}\label{clm:oc-no-tc}
If $\Cert^o_{v}(B_k)$ exists then $\TimeoutCert_{v-1}$ does not exist, and vice-versa.
\end{claim}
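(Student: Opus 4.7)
The plan is to prove both directions of the biconditional via a single contradiction argument. Assume both $\Cert^o_v(B_k)$ and $\TimeoutCert_{v-1}$ exist. Each aggregates $2f+1$ distinct signatures, so each is backed by at least $f+1$ honest contributors. Letting $H_o$ and $H_t$ denote the honest contributors to $\Cert^o_v(B_k)$ and $\TimeoutCert_{v-1}$ respectively, \Cref{clm:honest-majority-intersection} supplies an honest $\node{i} \in H_o \cap H_t$. The remainder of the proof shows that this single node cannot consistently have sent both messages.

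The argument hinges on tracking $\TimeoutView_i$ across the execution. Every update is of the form $\TimeoutView_i \leftarrow \max(\TimeoutView_i, \cdot)$, so $\TimeoutView_i$ is monotonically non-decreasing. The optimistic vote rule requires $\TimeoutView_i < v-1$ at the instant $\node{i}$ emits its $\OptVote$ for $B_k$ in view $v$, whereas the Timeout rule guarantees $\TimeoutView_i \ge v-1$ immediately after $\node{i}$ emits $\TimeoutMessage_{v-1}$. By monotonicity, the optimistic vote must therefore have been sent strictly before the timeout message. To close the argument, I would observe that in order to send $\OptVote$ in view $v$ the node $\node{i}$ must currently be in view $v$, and the Advance View rule never decreases a local view. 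Since both branches of the Timeout rule only fire on the current view or higher (the $\ViewTimer_i$ branch operates on the current view; the amplification branch explicitly requires $v' \ge v$ relative to the current view), after reaching view $v$ the node $\node{i}$ can never send $\TimeoutMessage_{v-1}$, contradicting the ordering just established.

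The main subtlety, and the obstacle I expect to spend the most care on, is verifying that the amplification branch of the Timeout rule genuinely forbids an honest node from timing out a view strictly below its current one; were this not the case, $\node{i}$ could optimistic-vote in view $v$ and then, upon receiving $f+1$ stale $\TimeoutMessage_{v-1}$ messages, retroactively contribute to $\TimeoutCert_{v-1}$. The ``$v' \ge v$'' condition in the amplification rule is precisely what rules this out, and it must be cited explicitly. Both directions of the biconditional follow from the same symmetric argument, so no separate case split is needed.
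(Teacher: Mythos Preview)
Your proposal is correct and follows essentially the same approach as the paper's proof: both assume the two certificates coexist, extract a common honest contributor via quorum/honest-majority intersection, use the $\TimeoutView_i < v-1$ precondition of the optimistic vote rule to order the optimistic vote before the timeout message, and then derive a contradiction from the fact that a node already in view $v$ cannot send $\TimeoutMessage_{v-1}$. Your explicit treatment of the amplification branch's $v' \ge v$ guard is a useful elaboration of a point the paper handles in one clause.
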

\begin{proof}
Suppose for the sake of contradiction, that both certificates exist.
Therefore, by~\Cref{clm:quorum-intersection} and~\Cref{clm:honest-majority-intersection}, at least one honest node, say $\node{i}$, must have both sent $\sig{\OptVote, H(B_k), v}_i$ and $\sig{\Timeout, v-1, \Lock_i}_{i}$.
Furthermore, by the optimistic vote rule, $\node{i}$ must have had $\TimeoutView < v-1$ upon voting for $B_k$ and thus must have sent its timeout message for $v-1$ after its optimistic vote for $B_k$.
However, by the same rule, $\node{i}$ must have been in view $v$ when it voted for $B_k$ and hence, by the timeout rule, would have been unable to multicast $\TimeoutMessage$ messages for view $v-1$ or lower after doing so, contradicting the earlier conclusion that it must have sent $\sig{\Timeout, v-1, \Lock_i}_{i}$.
\end{proof}

\begin{claim}[Optimistic Equivalence]
    \label{clm:optimistic-equivalence}
    If $\Cert^o_{v}(B_k)$ and $\Cert^n_{v}(B_l)$ exist then $B_k = B_l$.
\end{claim}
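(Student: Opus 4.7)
The plan is to derive the equality $B_k = B_l$ from the interaction between the normal-vote rule's condition forbidding optimistic votes for equivocating blocks and the existence of both certificates. The key observation is that the claim is essentially a tautology given how the voting rules are set up, provided one correctly locates a single honest node that cast both kinds of vote in view $v$.

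First, I would invoke \Cref{clm:quorum-intersection}: since $\Cert^o_v(B_k)$ is an aggregation of $2f+1$ distinct optimistic votes and $\Cert^n_v(B_l)$ is an aggregation of $2f+1$ distinct normal votes, the two underlying quorums intersect in at least one honest node $\node{i}$. So $\node{i}$ sent both $\sig{\OptVote, H(B_k), v}_i$ and $\sig{\Vote, H(B_l), v}_i$.

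Next, I would apply the normal vote rule to $\node{i}$'s normal vote for $B_l$. Condition (iii) of that rule explicitly forbids $\node{i}$ from sending a normal vote in $v$ for a block $B_l$ if it has already sent an optimistic vote in $v$ for a block equivocating $B_l$. Since $\node{i}$ did send an optimistic vote for $B_k$ in $v$, it must be that $B_k$ does not equivocate $B_l$. Appealing to the definition of equivocation from~\Cref{sec:preliminaries}, this means $B_k$ and $B_l$ share both a parent and a payload; since blocks are determined by parent and payload (for a given view), $B_k = B_l$.

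I do not anticipate a real obstacle here; the only thing to be careful about is the temporal ordering between the two votes. Formally, the optimistic-vote rule requires $\node{i}$ to be in view $v$ and to not yet have voted in $v$, so it must have preceded the normal vote; hence at the moment $\node{i}$ sends its normal vote, the earlier optimistic vote for $B_k$ is already on record and the equivocation check applies as stated. This reasoning is entirely local to a single honest node's actions within view $v$ and does not require induction or any global lemma beyond quorum intersection.
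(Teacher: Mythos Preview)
Your proposal is correct and follows essentially the same approach as the paper: locate a common honest voter via quorum intersection, use the optimistic-vote rule to establish that the optimistic vote preceded the normal vote, and then invoke condition (iii) of the normal-vote rule together with the definition of equivocation to conclude $B_k = B_l$. The paper additionally cites \Cref{clm:honest-majority-intersection}, but \Cref{clm:quorum-intersection} alone suffices, as you observe.
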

\begin{proof}
    By~\Cref{clm:quorum-intersection}, \Cref{clm:honest-majority-intersection} and the requirement that block certificates be constructed from a quorum of votes of the same type for the same block, at least one honest node, say $\node{i}$, must have voted for both $B_k$ and $B_l$.
    By the optimistic vote rule, $\node{i}$ can only have voted for $B_k$ if it had not already voted in $v$ and thus must have voted for $B_l$ after $B_k$.
    Therefore, since the normal vote rule only allows $\node{i}$ to vote for $B_l$ if it has not already sent an optimistic vote for an equivocating block, by the definition of equivocation given in~\Cref{sec:preliminaries}$, \node{i}$ can only have voted for $B_l$ after $B_k$ if $B_l = B_k$. 
    Thus, since $\node{i}$ must have voted for both blocks, $B_l = B_k$.
\end{proof}

\begin{lemma}
    \label{lem:view-safety-2}
    If $\Cert_v(B_k)$ and $\Cert_v(B_l)$ exist then $B_k = B_l$.
\end{lemma}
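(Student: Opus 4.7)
The plan is to prove Lemma 2 by case analysis on the types of the two certificates. Since each of $\Cert_v(B_k)$ and $\Cert_v(B_l)$ must be one of $\Cert^o_v$, $\Cert^n_v$, or $\Cert^f_v$, there are (up to symmetry) six cases: three same-type cases and three mixed-type cases. I would begin by observing that each block certificate of a given type is built from a quorum of that same type of vote message, so whenever both certificates have the same type a standard quorum-intersection argument (\Cref{clm:quorum-intersection}) yields at least one honest node that multicasted two such vote messages for view $v$; this directly contradicts the relevant vote rule, each of which explicitly permits at most one vote of that type per view.

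Next I would dispatch the mixed-type cases. The optimistic-plus-normal case follows immediately from \Cref{clm:optimistic-equivalence}, which already gives $B_k = B_l$. For the optimistic-plus-fallback case, I would argue that the existence of $\Cert^f_v(B_l)$ requires some honest node to have received a well-formed fallback proposal and thus a valid $\TimeoutCert_{v-1}$ (since fallback proposals embed one). Hence $\TimeoutCert_{v-1}$ exists, and combined with $\Cert^o_v(B_k)$ this directly contradicts \Cref{clm:oc-no-tc}.

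The remaining case, normal-plus-fallback, is where I expect the main subtlety to lie, since a Byzantine leader may legitimately emit both a normal proposal (entering via $\Cert_{v-1}$) and a fallback proposal (entering via $\TimeoutCert_{v-1}$), so we cannot rule it out at the proposer level. Here the key observation is structural in the voting rule of \Cref{fig:pipelined-moonshot}: rule 2(b)i (Normal Vote) and rule 2(b)ii (Fallback Vote) are both inside the outer "vote once" clause, so an honest node votes at most once across these two rules within a view. Applying \Cref{clm:honest-majority-intersection} to the sets of $f+1$ honest normal-voters for $B_k$ and $f+1$ honest fallback-voters for $B_l$ yields at least one honest node that used both rules in view $v$, contradicting this "vote once" restriction.

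Having covered all six combinations, in every case either we reach a contradiction with the assumption that the two certificates certify different blocks, or we directly conclude $B_k = B_l$, completing the proof. The main obstacle will be carefully justifying the normal-plus-fallback case, since it relies on a precise reading of how rules 2(b)i and 2(b)ii are nested inside the "vote once" clause of \Cref{fig:pipelined-moonshot} rather than on an external locking or certificate-existence argument.
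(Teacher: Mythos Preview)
Your proposal is correct and follows essentially the same approach as the paper: the same case split on certificate types, the same use of \Cref{clm:optimistic-equivalence} for the optimistic/normal case, \Cref{clm:oc-no-tc} for the optimistic/fallback case, and the ``vote once across 2(b)i and 2(b)ii'' restriction for the normal/fallback case. The only structural difference is that the paper first fixes a single common honest voter via \Cref{clm:quorum-intersection} and \Cref{clm:honest-majority-intersection} and then runs the case analysis on that node, whereas you invoke the intersection argument separately within each case; this is cosmetic.
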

\begin{proof}
    By~\Cref{clm:quorum-intersection} and \Cref{clm:honest-majority-intersection}, at least one honest node, say $\node{i}$, must have voted towards both certificates.
    There are four cases to consider:
    \begin{enumerate}
        \item When both certificates have the same type.
        \item When $\Cert_v(B_k)$ is $\Cert^n_v(B_k)$ and $\Cert_v(B_l)$ is $\Cert^f_v(B_l)$, or vice-versa.
        \item When $\Cert_v(B_k)$ is $\Cert^o_v(B_k)$ and $\Cert_v(B_l)$ is $\Cert^f_v(B_l)$, or vice-versa.
        \item When $\Cert_v(B_k)$ is $\Cert^o_v(B_k)$ and $\Cert_v(B_l)$ is $\Cert^n_v(B_l)$, or vice-versa.
    \end{enumerate}
    In the first case, since each vote rule may be triggered at most once in a given view, $\node{i}$ can only have voted towards both certificates if $B_k = B_l$.
    In the second case, because the respective vote rules prevent a node from voting if it has already voted for a proposal of the other type, $\node{i}$ cannot have voted towards both certificates, contradicting the earlier conclusion that it must have done so.
    In the third case, by the fallback vote rule, $\node{i}$ can only have voted for $B_l$ if it were justified by $\TimeoutCert_{v-1}$.
    Therefore, by Lemma~\ref{clm:oc-no-tc}, $\Cert^o_v(B_k)$ cannot exist, contradicting the assumption that it does.
    Finally, \Cref{clm:optimistic-equivalence} covers the last case.
    Thus, $B_k = B_l$.
\end{proof}

\Cref{fact:timeout-qc} follows from the lock rule, which requires a node to update $\Lock_i$ to the highest ranked QC that it has received.
\begin{fact}
    \label{fact:timeout-qc}
    If an honest node receives $\Cert_v$ then every $\TimeoutMessage$ message that it multicasts after doing so contains a block certificate with a rank $v'$ such that $v' \ge v$.
\end{fact}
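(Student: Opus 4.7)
The plan is to derive the fact directly from the two relevant protocol rules stated in~\Cref{fig:pipelined-moonshot}, namely the \textbf{Lock} action and the \textbf{Timeout} action, which together make the argument essentially immediate.

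First I would invoke the \textbf{Lock} rule: upon receiving $\Cert_v(B_k)$, honest node $\node{i}$ compares the rank $v$ of this certificate with the rank of its current $\Lock_i$. If the current lock has rank strictly less than $v$, the rule fires and $\Lock_i$ is updated to $\Cert_v(B_k)$; otherwise, $\Lock_i$ already has rank at least $v$. In either case, immediately after receiving $\Cert_v$, node $\node{i}$ holds a $\Lock_i$ of rank at least $v$.

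Next I would observe a simple monotonicity property of $\Lock_i$: the \textbf{Lock} rule only replaces $\Lock_i$ with a strictly higher-ranked block certificate, and no other rule in~\Cref{fig:pipelined-moonshot} modifies $\Lock_i$. Hence once $\Lock_i$ has rank at least $v$ it remains so for the remainder of the execution.

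Finally, the \textbf{Timeout} rule dictates that whenever $\node{i}$ multicasts a timeout message, it does so in the form $\sig{\Timeout, v'', \Lock_i}_i$ with the current value of $\Lock_i$ embedded verbatim. Combining this with the monotonicity observation, every such message sent after the receipt of $\Cert_v$ embeds a block certificate of rank at least $v$, which is exactly the claim. There is no real obstacle here: the proof is a direct one-line consequence of the update rule for $\Lock_i$ together with the format of timeout messages.
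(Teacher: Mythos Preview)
Your proposal is correct and follows essentially the same approach as the paper, which simply states that the fact follows from the lock rule requiring a node to update $\Lock_i$ to the highest-ranked QC it has received. Your version spells out the monotonicity of $\Lock_i$ and the format of timeout messages explicitly, but the underlying argument is identical.
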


\begin{claim}\label{clm:honest-normal-proposal-cert-2}
    If an honest node, say $\node{i}$, votes for $\sig{\FallbackPropose, B_{k}, \Cert_{v'}(B_h), \TimeoutCert_{v-1}, v}$, then $v' < v$.
\end{claim}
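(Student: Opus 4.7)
The plan is to mirror the proof of Claim~\ref{clm:honest-normal-proposal-cert} for the fallback case. The argument is structural: the \textbf{Advance View} rule takes priority over voting, and in particular the fallback vote rule is explicitly prefaced by ``After executing \emph{Advance View} and \emph{Lock} with all embedded certificates.'' So I would begin by observing that before $\node{i}$ may fire the fallback vote rule on $\sig{\FallbackPropose, B_{k}, \Cert_{v'}(B_h), \TimeoutCert_{v-1}, v}$, it first applies the view-advancement machinery to the embedded $\Cert_{v'}(B_h)$.

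Next I would invoke the Advance View rule directly: upon receiving $\Cert_{v'}(B_h)$, an honest node enters view $v'+1$ (or some larger view it is already in). Suppose for contradiction that $v' \ge v$. Then $\node{i}$ enters view $v'+1 \ge v+1 > v$ before the fallback vote condition can be evaluated.

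Finally I would close the argument by appealing to the preamble of the protocol description, ``A \pipelined node $\node{i}$ runs the following protocol whilst in view $v$'': all per-view rules, including the fallback vote rule for view $v$, apply only while $\node{i}$'s local view equals $v$. Since $\node{i}$ has already advanced past $v$, it is no longer eligible to multicast $\sig{\FallbackVote, H(B_k), v}_i$, contradicting the hypothesis that it voted for this fallback proposal. Hence $v' < v$.

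The step that needs the most care, though it is essentially a single bookkeeping observation, is making sure that the ``embedded certificates'' mentioned in the vote rule really do include $\Cert_{v'}(B_h)$ (not merely the $\TimeoutCert_{v-1}$, which would only advance $\node{i}$ to view $v$). This is what forces the priority of Advance View over the fallback vote and is what allows us to rule out $v' \ge v$; no further quorum-intersection or locking reasoning is required.
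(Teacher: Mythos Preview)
Your proposal is correct and takes essentially the same approach as the paper: both argue by contradiction that if $v' \ge v$ then processing the embedded $\Cert_{v'}(B_h)$ via the Advance View rule (which has priority over voting) forces $\node{i}$ into view $v'+1 > v$, rendering it ineligible to cast a view-$v$ fallback vote. Your version is slightly more explicit in identifying \emph{which} embedded certificate triggers the advancement, but the reasoning is the same.
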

\begin{proof}
    Since the view advancement rule takes priority over the voting rule, if $v' \ge v$ then $\node{i}$ would have entered $v' + 1 > v$ before voting for $\sig{\FallbackPropose, B_{k}, \Cert_{v'}(B_h), \TimeoutCert_{v-1}, v}$, making it ineligible to vote for this proposal.
\end{proof}

\begin{lemma}[Unique Extensibility]\label{lem:uniq-xtn-pipelined}
    If an honest node, say $\node{i}$, directly commits a block $B_k$ that was certified for view $v$ and $\Cert_{v'}(B_{k'})$ exists such that $v' \ge v$, then $B_{k'}$ extends $B_k$.
\end{lemma}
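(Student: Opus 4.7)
The plan is to mirror the strong-induction structure of the Simple Moonshot proof (Lemma uniq-xtn), extending it to accommodate the three vote types of Pipelined Moonshot. I will induct on $v'$ with a base case that covers $v' = v$ and $v' = v+1$, then handle the inductive step $v' > v+1$ by case-analysis on the type of $\Cert_{v'}(B_{k'})$ (optimistic, normal, or fallback).

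For the base case: when $v' = v$, Lemma view-safety-2 gives $B_{k'} = B_k$, so $B_{k'}$ trivially extends $B_k$. When $v' = v+1$, the Direct Commit rule guarantees that $\node{i}$ received $\Cert_v(B_k)$ and some $\Cert_{v+1}(B_{k+1})$ with $B_{k+1}$ extending $B_k$, and Lemma view-safety-2 forces $B_{k'} = B_{k+1}$. For the inductive step, assume the claim holds for all views in $[v, v')$. If $\Cert_{v'}(B_{k'})$ is optimistic, some honest voter is locked on $\Cert_{v'-1}(B_{k'-1})$ with $B_{k'}$ extending $B_{k'-1}$; since $v'-1 \in [v, v')$, the induction hypothesis gives $B_{k'-1}$ extends $B_k$, hence so does $B_{k'}$. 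If $\Cert_{v'}(B_{k'})$ is normal, then $L_{v'}$'s normal proposal must include $\Cert_{v'-1}(B_{k'-1})$ specifically (by the structure of the Normal Propose rule), with $B_{k'}$ directly extending $B_{k'-1}$, and the same induction hypothesis finishes this case.

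The main obstacle is the fallback case, since a fallback proposal may carry a certificate whose view is much smaller than $v'-1$, and we must show this view is still $\geq v$. My plan is the following. Since $\node{i}$ directly commits $B_k$, there exists $\Cert_{v+1}(B_{k+1})$ with $B_{k+1}$ extending $B_k$, so a set $H_1$ of at least $f+1$ honest nodes voted for $B_{k+1}$ in $v+1$; a brief inspection of the three vote rules shows each such voter must have possessed $\Cert_v(B_k)$ at the moment it voted (for the optimistic rule as its lock, for the normal rule inside the proposal, and for the fallback rule as the highest-ranked certificate in the accompanying $\TimeoutCert_v$, using view-safety-2 to identify it as $\Cert_v(B_k)$). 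On the other side, the fallback proposal for $v'$ embeds $\TimeoutCert_{v'-1}$, which aggregates timeout messages from a set $H_2$ of at least $f+1$ honest nodes. By Claim honest-majority-intersection, pick $\node{j} \in H_1 \cap H_2$; by the Lock rule, $\Lock_j$ monotonically increases, so by Fact timeout-qc the certificate $\node{j}$ attached to $\TimeoutMessage_{v'-1}$ has view $\geq v$. Hence the highest-ranked certificate $\Cert_{v''}(B_h)$ inside $\TimeoutCert_{v'-1}$ also satisfies $v'' \geq v$, and Claim honest-normal-proposal-cert-2 gives $v'' < v'$.

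Applying the induction hypothesis at $v''$ yields that $B_h$ extends $B_k$, and since the Fallback Vote rule forces $B_{k'}$ to directly extend $B_h$, we conclude that $B_{k'}$ extends $B_k$, completing the inductive step. The subtle point to get right in the write-up is the fallback sub-case of the argument that every $H_1$ voter sees $\Cert_v(B_k)$: one must appeal both to view-safety-2 (to identify the highest-ranked certificate inside any $\TimeoutCert_v$ whose attached block is the parent of $B_{k+1}$ as exactly $\Cert_v(B_k)$) and to Claim oc-no-tc to handle the interaction between optimistic certification and timeout certificates cleanly.
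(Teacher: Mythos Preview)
Your proposal is correct and follows essentially the same approach as the paper's proof: strong induction on $v'$, with the base cases $v'\in\{v,v+1\}$ handled via \Cref{lem:view-safety-2} and the Direct Commit rule, and the inductive step handled by case analysis on the vote type behind $\Cert_{v'}(B_{k'})$, using \Cref{fact:timeout-qc} together with honest-majority intersection to establish $v''\ge v$ in the fallback case. Your choice to intersect the honest voters for $B_{k+1}$ with the honest \emph{contributors to} $\TimeoutCert_{v'-1}$ is, if anything, cleaner than the paper's phrasing, which nominally intersects with the fallback-voters for $B_{k'}$.
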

\begin{proof}
    As in~\Cref{lem:uniq-xtn}, we complete this proof by strong induction on $v'$. 
    As before, \Cref{lem:uniq-xtn-base} covers the base cases ($v' = v$ and $v' = v+1$), except that~\Cref{lem:view-safety-2} needs to be invoked instead of~\Cref{lem:view-safety}.
    Accordingly, we proceed directly with the inductive step.

    \paragraph{Inductive step: $v' > v+1$.} For our induction hypothesis, we assume that the lemma holds up to view $v'-1$. 
    That is, we assume that every $\Cert_{v^*}(B_{k^*})$ with $v \le v^* < v'$ extends $B_k$.
    We use this assumption to prove that it also holds for $v'$. 
    We first observe that the existence of $\Cert_{v'}(B_{k'})$ implies that a set $H_1$ of at least $f+1$ honest nodes voted for $B_{k'}$ in view $v'$.
    If any of these nodes voted using the optimistic or normal vote rules then they must have received $\Cert_{v'-1}(B_{k'-1})$ and $B_{k'}$ must extend $B_{k'-1}$.
    Therefore, since by the induction hypothesis $B_{k'-1}$ extends $B_k$, $B_{k'}$ also extends $B_k$.
    Alternatively, if no honest node used either of these rules to vote for $B_{k'}$ then all members of $H_1$ must have used the fallback vote rule to vote for $B_{k'}$. 
    Therefore, they must have received $\sig{\FallbackPropose, B_{k'}, \Cert_{v''}(B_{k''}), \TimeoutCert_{v'-1}, v'}$ such that $\Cert_{v''}(B_h)$ had a rank greater than or equal to that of the highest ranked block certificate in $\TimeoutCert_{v'-1}$ and $B_{k'}$ directly extended $B_{k''}$, before multicasting a $\TimeoutMessage$ message for $v'$ or any higher view.
    Moreover, by~\Cref{clm:honest-normal-proposal-cert-2}, $v'' < v'$.
    We now show that $v'' \ge v$.
    
    Recall that we know from the commit rule that $\Cert_{v+1}(B_{k+1})$ exists and that $B_{k+1}$ extends $B_k$.
    Therefore, a set $H_2$ of at least $f+1$ honest nodes must have voted for $B_{k+1}$ in view $v+1$. 
    By the vote rules and the lock rule, these nodes must have received $\Cert_v(B_k)$ and must not have sent $\TimeoutMessage_{v^*}$ with $v^* \ge v+1$ before doing so.
    Thus, by~\Cref{fact:timeout-qc}, every $\TimeoutMessage_{v^*}$ message sent by $H_2$ will necessarily contain $\Cert_v(B_k)$ or higher.
    Therefore, because $v'-1 \ge v+1$ and since by~\Cref{clm:honest-majority-intersection} $H_1$ and $H_2$ must have at least one node in common, the highest ranked block certificate of $\TimeoutCert_{v'-1}$ must have a rank of at least as great as $\Cert_v(B_k)$.
    Therefore, by extension and the definition of the rank of a block certificate given in~\Cref{sec:preliminaries}, $v'' \ge v$.
    Therefore, since $v \le v'' < v'$, by the induction hypothesis, $B_{k''}$ extends $B_k$, so $B_{k'}$ also extends $B_k$.
\end{proof}

\begin{theorem}[Safety]\label{thm:safety-2}
Honest nodes do not commit different values at the same log position.
\end{theorem}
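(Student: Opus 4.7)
The plan is to mirror the structure of the safety proof for \simple (Theorem 1), leveraging the Unique Extensibility lemma for \pipelined (\Cref{lem:uniq-xtn-pipelined}) in place of \Cref{lem:uniq-xtn}. The overall idea is that if two honest nodes commit blocks at the same log position, the chain-based nature of the commit rules, combined with unique extensibility, forces those blocks to be identical.

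First, I would set up the argument by contradiction: suppose honest nodes $P_i$ and $P_j$ commit blocks $B_k$ and $B'_k$ at the same height $k$ but $B_k \ne B'_k$. Then, by the indirect commit rule in~\Cref{fig:pipelined-moonshot}, each of these commits must be traceable to a \emph{direct} commit of some descendant: $P_i$ directly commits some $B_l$ (certified for some view $v$) that extends $B_k$ with $l \ge k$, and $P_j$ directly commits some $B_m$ (certified for some view $v'$) that extends $B'_k$ with $m \ge k$. Without loss of generality, assume $v \le v'$.

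Next, I would apply \Cref{lem:uniq-xtn-pipelined} to $B_l$ (the directly committed block of $P_i$) and $\Cert_{v'}(B_m)$: since $v' \ge v$, the lemma yields that $B_m$ extends $B_l$. Combining this with the fact that $B_l$ extends $B_k$ and $B_m$ extends $B'_k$, and using that each block has exactly one parent (so the ancestor chain of any block is uniquely determined), the ancestor of $B_m$ at height $k$ is simultaneously $B_k$ and $B'_k$. Hence $B_k = B'_k$, contradicting the hypothesis. Together with the implementation-level assumptions noted alongside~\Cref{def:smr} (namely that transactions are deterministically extracted from the committed chain), this is sufficient to conclude SMR safety.

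The main obstacle is essentially hidden inside \Cref{lem:uniq-xtn-pipelined}, which has already been proven in the excerpt; once that lemma is in hand, the theorem itself is a short chase through the commit rules. The only subtlety worth being careful about is the case $v = v'$: here unique extensibility gives that both $B_l$ and $B_m$ extend $B_k$ (resp.\ $B'_k$), and by~\Cref{lem:view-safety-2} only one block can be certified in a given view, so $B_l = B_m$ and the same ancestor argument closes the case. Because this theorem is essentially a textual copy of~\Cref{thm:safety} with \Cref{lem:uniq-xtn} replaced by \Cref{lem:uniq-xtn-pipelined}, I expect the final write-up to be only a few lines long.
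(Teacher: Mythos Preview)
Your proposal is correct and matches the paper's approach exactly: the paper simply states that the proof of \Cref{thm:safety-2} is identical to that of \Cref{thm:safety}, with \Cref{lem:uniq-xtn-pipelined} invoked in place of \Cref{lem:uniq-xtn}. Your handling of the $v = v'$ case via \Cref{lem:view-safety-2} is a minor elaboration but entirely consistent with (and already implicit in) the paper's argument.
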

The proof for~\Cref{thm:safety-2} remains the same as that given in~\Cref{thm:safety}, except that~\Cref{lem:uniq-xtn-pipelined} needs to be invoked instead of~\Cref{lem:uniq-xtn}.

\begin{claim}
    \label{clm:sequential-progress-2}
    If an honest node enters view $v$ then at least one honest node must have already entered $v-1$.
\end{claim}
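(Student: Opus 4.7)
The plan is to case-split on how the honest node enters view $v$ under the Advance View rule of~\Cref{fig:pipelined-moonshot}, which admits exactly two possibilities: receipt of $\Cert_{v-1}(B_h)$ or receipt of $\TimeoutCert_{v-1}$.

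In the first case, regardless of whether the block certificate is optimistic, normal, or fallback, it aggregates $2f+1$ distinct signed view-$v-1$ votes. By \Cref{clm:quorum-intersection}, at least $f+1$ of these votes come from honest nodes. Since the voting rules of~\Cref{fig:pipelined-moonshot} are described as running whilst the node is in view $v$ (the view of the proposal being voted on), any such honest voter must have already entered view $v-1$ before casting its vote, which suffices.

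In the second case, $\TimeoutCert_{v-1}$ contains $2f+1$ distinct $\TimeoutMessage_{v-1}$ messages, at least $f+1$ of which come from honest nodes. I would then let $P_j$ be the first honest node in the entire execution to send $\TimeoutMessage_{v-1}$. By the timeout rule, $P_j$ sent this message either (a) because its view timer expired whilst it was in view $v-1$, or (b) because it received $f+1$ distinct $\TimeoutMessage_{v-1}$ messages or $\TimeoutCert_{v-1}$ from other nodes. In sub-case (b), among the $f+1$ (resp.\ $2f+1$) required messages, at least one (resp.\ $f+1$) must come from an honest node other than $P_j$, since at most $f$ Byzantine nodes exist; that other honest node would then have sent $\TimeoutMessage_{v-1}$ strictly before $P_j$, contradicting the minimality of $P_j$. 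Hence only sub-case (a) is possible, so $P_j$ had entered view $v-1$.

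The main obstacle is precisely this ``first honest sender'' step in Case 2: pinning down a specific honest node whose timeout was not the result of the amplification rule, via a counting argument against the $f$-fault budget. This subtlety is what forces the Pipelined Moonshot claim to be weaker than its Simple Moonshot analogue \Cref{clm:sequential-progress}, which guaranteed that $f+1$ honest nodes had entered $v-1$: because the Pipelined Moonshot timeout rule lets an honest node echo $\TimeoutMessage_{v-1}$ from an arbitrary current view, only the triggering honest node can be shown to have entered $v-1$ directly.
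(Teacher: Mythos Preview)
Your proposal is correct and follows essentially the same approach as the paper: case-split on whether view~$v$ is entered via $\Cert_{v-1}$ or $\TimeoutCert_{v-1}$, invoke the vote rules in the first case, and trace back to a timer expiry in~$v-1$ in the second. Your ``first honest sender'' minimality argument in Case~2 simply makes explicit what the paper asserts in one line (``at least one honest node must have its view timer expire whilst in $v-1$ before any honest node can multicast $\TimeoutMessage_{v-1}$''), but the underlying idea is identical.
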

\begin{proof}
    The view advancement rule requires an honest node to receive either $\Cert_{v-1}$ or $\TimeoutCert_{v-1}$ in order to enter $v$.
    Therefore, at least $f+1$ honest nodes must multicast the corresponding constituent messages.
    In the case of $\Cert_{v-1}$, the vote rules require these nodes to be in $v-1$ when they do so.
    In the case of $\TimeoutCert_{v-1}$, at least one honest node must have its view timer expire whilst in $v-1$ before any honest node can multicast $\TimeoutMessage_{v-1}$.
    Thus, in either case, an honest node can only enter $v$ if at least one honest node has already entered $v-1$.
\end{proof}

\begin{claim}
    \label{clm:timeout-bound}
    If the first honest node enters view $v$ at time $t$ then no honest node multicasts $\TimeoutMessage_{v'}$ for $v' \ge v$ before $t + 3\Delta$.
\end{claim}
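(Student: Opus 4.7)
The plan is to proceed by contradiction: assume some honest node multicasts $\TimeoutMessage_{v'}$ with $v' \ge v$ before time $t + 3\Delta$, and let $P_j$ be the first such honest node to do so, at time $t^{\star} < t + 3\Delta$. By the timeout rule in~\Cref{fig:pipelined-moonshot}, $P_j$'s action must be triggered in exactly one of three ways: (a) $P_j$'s $\ViewTimer$ expires while it is in view $v'$; (b) $P_j$ receives $f+1$ distinct $\TimeoutMessage_{v'}$ messages; or (c) $P_j$ receives $\TimeoutCert_{v'}$. I would derive a contradiction in each case.

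Cases (b) and (c) both follow from a pigeonhole argument on the honest majority. In case (b), at least one of the $f+1$ distinct $\TimeoutMessage_{v'}$ messages must come from an honest node, since at most $f$ nodes are Byzantine. In case (c), the quorum of $2f+1$ timeout messages comprising $\TimeoutCert_{v'}$ necessarily includes at least $f+1$ honest senders. Either way, some honest node must have multicast $\TimeoutMessage_{v'}$ with $v' \ge v$ strictly before $t^{\star}$ (since message delivery takes positive time), contradicting the choice of $P_j$ as the earliest such honest sender.

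Case (a) is the main obstacle and is where the specific value $3\Delta$ enters. I would exploit the fact that the $\ViewTimer$ is always reset to $3\Delta$ each time a node enters a new view, so in case (a) $P_j$ must have entered view $v'$ at time $t^{\star} - 3\Delta < t$. If $v' = v$, this immediately contradicts the definition of $t$ as the first time any honest node enters $v$. Otherwise $v' > v$, and I would apply~\Cref{clm:sequential-progress-2} inductively from $v'$ down to $v$: each application of the claim produces a strictly earlier honest entry into the next-lower view, so after finitely many steps we obtain an honest node that entered view $v$ at a time strictly before $P_j$ entered $v'$, and hence strictly before $t$. This again contradicts the definition of $t$ and completes the proof.
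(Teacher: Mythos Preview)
Your proof is correct and follows essentially the same approach as the paper's: both arguments hinge on the $3\Delta$ view timer, the $f{+}1$ threshold forcing at least one honest sender among any triggering set of timeout messages, and \Cref{clm:sequential-progress-2} to rule out honest presence in views $\ge v$ before time $t$. The only cosmetic difference is that you organize the argument as a single minimal-counterexample case split over the three timeout triggers, whereas the paper first handles view $v$ and then invokes ``the same argument'' for each $v'' > v$; the content is the same.
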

\begin{proof}
    Since $t$ is defined as the time that the first honest node enters $v$, by the timeout rule, no honest node can have its view timer expire in $v$ before $t + 3\Delta$.
    Consequently, since $\network$ contains $f$ Byzantine nodes, at most $f$ $\TimeoutMessage_v$ messages can exist before this time.
    Therefore, since the timeout rule requires that a node either have its view timer expire whilst in $v$, or that it observe at least $f+1$ $\TimeoutMessage_v$ messages (since timeout certificates must be constructed from $2f+1$ such messages) before it may send $\TimeoutMessage_v$ itself, no honest node can send $\TimeoutMessage_v$ before $t + 3\Delta$.
    Moreover, by~\Cref{clm:sequential-progress-2}, no honest node can have entered $v'' > v$ before $t$.
    Thus, by the same argument, neither can any honest node send $\TimeoutMessage_{v''}$ before this time.
\end{proof}

\Cref{cor:tc-bound} follows from~\Cref{clm:timeout-bound} and the requirement that timeout certificates be constructed from $2f+1$ of timeout messages for the same view. 
\begin{corollary}
    \label{cor:tc-bound}
    If the first honest node enters view $v$ at time $t$ then $\TimeoutCert_{v'}$ cannot exist for $v' \ge v$ before $t + 3\Delta$.
\end{corollary}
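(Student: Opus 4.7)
The plan is to derive the corollary as a direct consequence of \Cref{clm:timeout-bound} combined with the definition of a timeout certificate. Recall that $\TimeoutCert_{v'}$ is defined to consist of $2f+1$ distinct signed $\TimeoutMessage_{v'}$ messages. Since the adversary controls only $f$ nodes, any such certificate must aggregate signed timeout messages from at least $f+1$ honest nodes; equivalently, if strictly fewer than $f+1$ honest $\TimeoutMessage_{v'}$ messages have been sent then no $\TimeoutCert_{v'}$ can be assembled.

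Next I would invoke \Cref{clm:timeout-bound} under the hypothesis of the corollary: if the first honest node enters view $v$ at time $t$, no honest node multicasts $\TimeoutMessage_{v'}$ for any $v' \ge v$ before $t + 3\Delta$. Consequently, before $t + 3\Delta$ the only $\TimeoutMessage_{v'}$ messages for $v' \ge v$ that can exist in the system are those sent by Byzantine nodes, of which there are at most $f$. This count is strictly below the $2f+1$-signature threshold needed to form any $\TimeoutCert_{v'}$ with $v' \ge v$, yielding the conclusion.

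There is no substantive obstacle here: the work has essentially been done in \Cref{clm:timeout-bound}, which already rules out early honest timeout messages by accounting for both triggering mechanisms (local $\ViewTimer$ expiration after $3\Delta$ and the $f{+}1$-amplification rule, the latter of which requires at least one honest $\TimeoutMessage_{v'}$ to already exist). The corollary just pushes this honest-message unavailability through the quorum arithmetic of the certificate definition, and the proof can be written as a single short paragraph.
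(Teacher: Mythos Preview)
Your proposal is correct and takes essentially the same approach as the paper, which simply states that the corollary follows from \Cref{clm:timeout-bound} together with the requirement that timeout certificates be constructed from $2f+1$ timeout messages for the same view. Your argument spells out the quorum arithmetic (at most $f$ Byzantine timeout messages before $t+3\Delta$, hence no $2f+1$-quorum) that the paper leaves implicit.
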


\begin{lemma}
    \label{lem:view-sync-gst}
    Let $t_g$ denote GST. If the first honest node to enter view $v$, say $\node{i}$, does so at time $t$ such that $t \ge t_g$, then every honest node enters $v$ or higher before $t + 2\Delta$. 
\end{lemma}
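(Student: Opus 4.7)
The plan is to case split on how $\node{i}$ entered view $v$ via the Advance View rule. In the first case, $\node{i}$ entered via some $\Cert_{v-1}$; the rule requires it to multicast this certificate at time $t \ge t_g$, so by partial synchrony every honest node receives this certificate by $t + \Delta < t + 2\Delta$ and enters $v$ (or a higher view, if it is already past $v$).

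The harder case is when $\node{i}$ entered $v$ via $\TimeoutCert_{v-1}$, because (unlike in \simple) $\node{i}$ only unicasts this certificate to $L_v$, so we cannot simply rely on certificate forwarding. To compensate, I would exploit the Bracha-style amplification that was deliberately built into the Timeout rule. Since $\node{i}$ holds $\TimeoutCert_{v-1}$ at time $t$, it has observed $2f+1$ distinct $\sig{\Timeout, v-1, \_}_*$ messages, at least $f+1$ of which were multicast by honest nodes at times $s \le t$. Standard partial synchrony (together with $t \ge t_g$) then implies that every honest node has received these same $f+1$ honest timeout messages by time $t + \Delta$.

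Now consider any honest node $\node{j}$ at time $t + \Delta$. If $\node{j}$'s current view is already $\ge v$, we are done. Otherwise its current view is $\le v-1$, so applying the Timeout amplification rule with $v' = v-1$ (which satisfies $v' \ge$ current view), $\node{j}$ multicasts its own $\TimeoutMessage_{v-1}$ by time $t + \Delta$ if it has not already done so. Hence at least $2f+1$ honest $\TimeoutMessage_{v-1}$ messages are in flight by $t + \Delta$. Invoking partial synchrony a second time, every honest node receives all of these by $t + 2\Delta$, locally constructs $\TimeoutCert_{v-1}$, and advances to $v$ via the Advance View rule.

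The main subtlety will be carefully justifying the first $\Delta$-budget: although the $f+1$ honest timeout messages composing $\TimeoutCert_{v-1}$ may have been generated at arbitrary times $s \le t$ (possibly before GST), once $\node{i}$ has actually received them by $t \ge t_g$ the standard interpretation of partial synchrony ensures that every other honest node also receives each of them by $t + \Delta$. This is what permits the two-round amplification to fit within the $2\Delta$ budget and is the key difference from \simple's corresponding argument, which could instead appeal directly to $\TimeoutCert$ multicasting.
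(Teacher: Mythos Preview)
Your overall strategy---case split on entry method, then Bracha-style amplification in the timeout case---is the same as the paper's, but your case split is weaker and this leaves a real gap. You split on how $\node{i}$, the \emph{first} honest node to reach $v$, entered it. In your second case (entry via $\TimeoutCert_{v-1}$) nothing prevents \emph{other} honest nodes from entering $v$ (or higher) via a block certificate during $(t,\,t+\Delta]$. Such a node $\node{j}$ need not have sent $\TimeoutMessage_{v-1}$ at all, yet you dismiss it with ``we are done'' and then immediately assert ``Hence at least $2f+1$ honest $\TimeoutMessage_{v-1}$ messages are in flight by $t+\Delta$.'' That inference fails: the nodes you marked ``done'' may be absent from the count, and the remaining honest timeouts can number as few as $f+1$, which does not suffice to build $\TimeoutCert_{v-1}$. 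The lemma still holds in that scenario---$\node{j}$ multicasts its $\Cert_{v-1}$ and everyone enters $v$ by $t+2\Delta$---but your argument does not invoke that mechanism.

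The paper avoids this by splitting on whether \emph{any} honest node enters some $v' \ge v$ via $\Cert_{v'-1}$ before $t+\Delta$. If so, that certificate is multicast and all honest nodes enter $v'$ by $t+2\Delta$. Only in the complementary case does the amplification argument run, and there the paper can legitimately conclude that every honest node already in $v$ by $t+\Delta$ must have arrived via $\TimeoutCert_{v-1}$ and therefore \emph{did} send $\TimeoutMessage_{v-1}$ (it also appeals to \Cref{cor:tc-bound} to rule out honest nodes being strictly above $v$). Your proof is easily repaired by adopting this sharper case split, or equivalently by analysing, for each ``done'' node, how it reached view $\ge v$.
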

\begin{proof}
    If any honest node enters view $v'$ such that $v' \ge v$ via $\Cert_{v'-1}$ before $t + \Delta$ then, by the view advancement rules, it will multicast $\Cert_{v'-1}$ and thus all honest nodes will enter $v'$ or higher before $t + 2\Delta$.
    Suppose, then, that no honest node enters $v'$ via $\Cert_{v'-1}$ before $t + \Delta$.
    Therefore, $\node{i}$ must enter $v$ via $\TimeoutCert_{v-1}$.
    Hence, at least $f+1$ honest nodes must multicast $\TimeoutMessage_{v-1}$ before $t$ and thus all will receive these messages before $t + \Delta$.
    Furthermore, since $t$ is defined as the time that the first honest node enters $v$, by~\Cref{cor:tc-bound}, $\TimeoutCert_{v'}$ cannot exist before $t + 3\Delta$ and thus all honest nodes must be in view $v$ or lower when they receive the aforementioned $\TimeoutMessage_{v-1}$ messages.
    Hence, by the timeout rule, all honest nodes in $v-1$ or lower that have not already multicasted $\TimeoutMessage_{v-1}$ will do so before $t + \Delta$.
    Moreover, every honest node that enters $v$ before this time must do so via $\TimeoutCert_{v-1}$ and thus, by the timeout rule, will also multicast $\TimeoutMessage_{v-1}$ before $t + \Delta$.
    Consequently, all honest nodes will multicast $\TimeoutMessage_{v-1}$ before this time, so they will all be able to construct $\TimeoutCert_{v-1}$ before $t + 2\Delta$.
    Thus, by the view transition rules, every honest node will enter $v$ or higher before $t + 2\Delta$. 
\end{proof}

\begin{lemma}
    \label{lem:view-sync}
    Let $t_g$ denote GST. If the first honest node to enter view $v$, say $\node{i}$, does so at time $t$, then every honest node enters $v$ or higher before $\max(t_g, t) + 3\Delta$. 
\end{lemma}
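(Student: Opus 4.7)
The plan is to dispatch \Cref{lem:view-sync} by case analysis on whether the first entry into view $v$ occurs before or after GST, reducing the asynchronous prefix to a bounded delivery-latency shift. If $t \ge t_g$, then $\max(t_g, t) = t$ and \Cref{lem:view-sync-gst} directly yields the stronger bound $t + 2\Delta < t + 3\Delta$; the substantive case is $t < t_g$, where I must show that every honest node enters $v$ or higher before $t_g + 3\Delta$.

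For $t < t_g$, I would mirror the two-subcase structure used in the proof of \Cref{lem:view-sync-gst}, but with $t_g$ playing the role of the reference time $t$ and exploiting the partial-synchrony guarantee that every honest multicast sent before $t_g$ is delivered to all honest recipients by $t_g + \Delta$. If \node{i} entered $v$ via $\Cert_{v-1}$, then the view-advancement rule forces it to multicast $\Cert_{v-1}$ at time $t$, so this certificate reaches every honest node by $t_g + \Delta$ and they all enter $v$ or higher before $t_g + 3\Delta$. Otherwise, \node{i} entered $v$ via $\TimeoutCert_{v-1}$, which aggregates $\TimeoutMessage_{v-1}$ from at least $f+1$ honest senders; since these messages were all multicast before $t$, they are delivered to every honest node by $t_g + \Delta$, triggering the amplification clause of the timeout rule at every honest node still in view $\le v-1$ at that moment (nodes already in view $\ge v$ satisfy the conclusion trivially). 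The resulting amplification multicasts are delivered by $t_g + 2\Delta$, and the cascade argument from \Cref{lem:view-sync-gst} then shows that every lagging honest node either assembles its own $\TimeoutCert_{v-1}$ or receives a multicasted $\Cert_{v-1}$ (should any honest node certify a block in the interim), in either event entering $v$ or higher by $t_g + 3\Delta$.

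The hard part is closing the $\TimeoutCert_{v-1}$ subcase cleanly, because timeout certificates are only unicast to the next leader in \pipelined, so lagging honest nodes cannot rely on forwarded timeout certificates and must instead reassemble $\TimeoutCert_{v-1}$ locally from amplified $\TimeoutMessage_{v-1}$ messages. The extra $\Delta$ beyond the $2\Delta$ bound of \Cref{lem:view-sync-gst} is precisely what accommodates the initial message flush between $t_g$ and $t_g + \Delta$, after which the argument replays the \Cref{lem:view-sync-gst} cascade essentially verbatim.
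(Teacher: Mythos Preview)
Your case $t \ge t_g$ and your $\Cert_{v-1}$ subcase for $t < t_g$ are fine. The $\TimeoutCert_{v-1}$ subcase, however, has a genuine gap. You argue that the $f+1$ honest $\TimeoutMessage_{v-1}$ multicasts (sent before $t$) reach everyone by $t_g + \Delta$ and trigger amplification at every honest node still in view $\le v-1$, and then invoke the cascade from \Cref{lem:view-sync-gst}. But that cascade relies on \Cref{cor:tc-bound} to guarantee that \emph{no} honest node is above view $v$ during the amplification window; here $t < t_g$, so that bound is unavailable, and the adversary may well have pushed some honest nodes into views strictly greater than $v$ before $t_g$. Concretely, an honest node in view $w < v-1$ can receive $f+1$ $\TimeoutMessage_{v'}$ messages for some $v' \ge v$ (one honest timer-expired sender plus $f$ Byzantine), amplify $\TimeoutMessage_{v'}$, later receive $\TimeoutCert_{v'}$ from the adversary, and jump to $v'+1 > v$ --- all without ever sending $\TimeoutMessage_{v-1}$ and without multicasting any block certificate. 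Such nodes contribute to neither arm of your dichotomy (assembling $\TimeoutCert_{v-1}$ versus receiving a multicast $\Cert_{v-1}$), so a lagging node may see only $2f$ honest $\TimeoutMessage_{v-1}$ messages and never form $\TimeoutCert_{v-1}$. The lemma is still true in this scenario, but via the $\TimeoutMessage_{v'}$ messages for the higher view, which your argument does not track.

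The paper sidesteps this by \emph{not} anchoring the $t < t_g$ analysis at view $v$. Instead it lets $v''$ be the highest honest view at $t_g$ (so $v'' \ge v$), and then either some honest node enters a view above $v''$ in $[t_g, t_g+\Delta]$ --- in which case \Cref{lem:view-sync-gst} applies to that fresh post-GST entry --- or no one does, and the amplification argument for $\TimeoutMessage_{v''-1}$ goes through precisely because no honest node is above $v''$. Anchoring at $v''$ rather than at $v$ is the missing idea: it restores the ``no one is above the reference view'' invariant that makes the cascade from \Cref{lem:view-sync-gst} sound.
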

\begin{proof}
    If $t \ge t_g$ then~\Cref{lem:view-sync-gst} shows that all honest nodes will enter $v$ or higher before $\max(t_g, t) + 2\Delta$.
    Consider the case when $t < t_g$.
    Let $v''$ be the highest view of any honest node at $t_g$ and let $\node{j}$ be a node in $v''$ at this time.
    Observe that $v'' \ge v$.
    If any honest node enters a view higher than $v''$, say $v^*$, between $t_g$ and $t_g + \Delta$, then by~\Cref{lem:view-sync-gst} all honest nodes will enter $v^* > v$ or higher before $t_g + 3\Delta = \max(t_g, t) + 3\Delta$. 
    Otherwise, no honest node enters a view higher than $v''$ before $t_g + \Delta$. 
    In this case, if any honest node enters $v''$ via $\Cert_{v''-1}$ before $t_g + \Delta$ then all honest nodes will enter $v''$ before $t_g + 2\Delta < \max(t_g, t) + 3\Delta$.
    Otherwise, $\node{j}$ (and all other nodes in $v''$ at $t_g$) must have entered $v''$ via $\TimeoutCert_{v''-1}$ and hence would have multicasted $\TimeoutMessage_{v''-1}$ no later than the time that they did so.
    Moreover, at least $f+1$ honest nodes must have multicasted $\TimeoutMessage_{v''-1}$ before $t_g$ and thus honest nodes in views less than $v''$ will receive these messages before $t_g + \Delta$ and, by the timeout rule, will multicast $\TimeoutMessage_{v''-1}$ messages if they have not already done so.
    Thus, all honest nodes will multicast $\TimeoutMessage_{v''-1}$ before $t_g + \Delta$, so they will all be able to construct $\TimeoutCert_{v''-1}$ and enter $v''$ before $t_g + 2\Delta < \max(t_g, t) + 3\Delta$.
    Hence, in all cases, every honest node enters $v$ or higher before $\max(t_g, t) + 3\Delta$.
\end{proof}

\begin{lemma}\label{lem:view-progress-2}
    All honest nodes keep entering increasing views. 
\end{lemma}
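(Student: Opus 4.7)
The plan is to mirror the structure of the analogous result for \simple (\Cref{lem:view-progress}), arguing by contradiction: suppose some honest node becomes stuck in a view $v$, and let $v^{\star}$ denote the highest view entered by any honest node at any time (this exists by the contradiction hypothesis, and satisfies $v^{\star} \ge v$). I would then split into two cases depending on whether $v^{\star} > v$ or $v^{\star} = v$, and derive a contradiction in each.

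For the case $v^{\star} > v$, the key is to invoke \Cref{lem:view-sync}: some honest node enters $v^{\star}$ at some time $t$, so every honest node must enter $v^{\star}$ or higher by $\max(t_g,t)+3\Delta$. In particular, the purportedly stuck honest node must eventually leave $v$, contradicting the assumption.

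For the case $v^{\star} = v$, I would first apply \Cref{lem:view-sync} to conclude that \emph{every} honest node eventually enters $v$ and then, by hypothesis, remains there forever. Once all $2f+1$ honest nodes are permanently in $v$, their \ViewTimer{}s must eventually expire (since the only way $\ViewTimer_i$ is reset is when a node enters a new view), after which the \textbf{Timeout} rule forces each of them to multicast $\sig{\Timeout, v, \Lock_i}_{i}$. Collecting $2f+1$ such messages yields $\TimeoutCert_v$ at every honest node, and the \textbf{Advance View} rule then makes all honest nodes enter $v+1$, contradicting $v^{\star} = v$.

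The main obstacle is ensuring that in the second case the Timeout rule actually fires at every honest node. Because \pipelined has nodes re-send $\TimeoutMessage$ only upon seeing evidence of a view change (i.e., $f+1$ timeout messages or a timeout certificate for $v' \ge v$), I need to observe that the first honest node whose timer expires in $v$ produces the seed $\TimeoutMessage_v$ that, together with the subsequent timer expirations of the others, unavoidably leads to $\TimeoutCert_v$; no honest node can be ``silenced'' by the $\TimeoutView_i$ bookkeeping because $\TimeoutView_i < v$ prior to this point (otherwise that node would already have sent $\TimeoutMessage_v$). Apart from this bookkeeping check, the argument is essentially identical to the one used for \simple.
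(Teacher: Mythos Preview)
Your proposal is correct and follows essentially the same approach as the paper: the paper states that the proof of \Cref{lem:view-progress-2} is identical to that of \Cref{lem:view-progress}, except that \Cref{lem:view-sync} is invoked in place of \Cref{clm:all-enter}, which is exactly the substitution you make. Your additional discussion of the $\TimeoutView_i$ bookkeeping is more detail than the paper provides, but it does not change the structure of the argument.
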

The proof for~\Cref{lem:view-progress-2} remains the same as for~\Cref{lem:view-progress}, except that~\Cref{lem:view-sync} needs to be invoked instead of~\Cref{clm:all-enter}.

\begin{claim}
    \label{clm:honest-propose}
    If the first honest node to enter view $v$ does so at time $t$ after GST and $L_v$ is honest, then $L_v$ proposes before $t + \Delta$ and all honest nodes receive its proposal before $t + 2\Delta$.
\end{claim}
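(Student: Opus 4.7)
The plan is to reason directly from the Advance View and Propose rules of \pipelined, showing that the mechanism by which the first honest node enters view $v$ forces $L_v$ to enter $v$ within $\Delta$ time, after which it proposes immediately.

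First I would observe, by inspection of the Advance View rule, that the first honest node to enter $v$, say $\node{j}$, must do so either via $\Cert_{v-1}$ or via $\TimeoutCert_{v-1}$. In the former case, $\node{j}$ multicasts $\Cert_{v-1}$ at time $t$, so $L_v$ receives it by $t + \Delta$ (using $t \ge t_g$ and the partial synchrony assumption). In the latter case, $\node{j}$ unicasts $\TimeoutCert_{v-1}$ to $L_v$ at time $t$, so $L_v$ again receives the certificate by $t + \Delta$. If $L_v$ is itself the first honest node to enter $v$, then trivially it has the required certificate at time $t$. Hence in all cases $L_v$ enters view $v$ at some time $t' \le t + \Delta$.

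Next I would invoke the Propose rule: upon entering $v$, $L_v$ immediately multicasts either $\sig{\Propose, B_k, \Cert_{v-1}(B_{k-1}), v}$ or $\sig{\FallbackPropose, B_k, \Cert_{v'}(B_{k-1}), \TimeoutCert_{v-1}, v}$, depending on which certificate it used to enter $v$. Therefore $L_v$ sends a proposal by $t + \Delta$, and since $t \ge t_g$, all honest nodes receive it within $\Delta$ further time, i.e., by $t + 2\Delta$.

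The argument is essentially a straightforward chase through the transition and proposal rules; the only subtlety is noting that the two Advance View sub-rules each guarantee, via multicast or targeted unicast respectively, that $L_v$ learns of the triggering certificate within $\Delta$ of the first honest node doing so. I do not anticipate a significant obstacle here, as \pipelined (unlike \simple) drops the $2\Delta$ wait before proposing, so no additional reasoning about waiting times or lock synchronization is required at this step.
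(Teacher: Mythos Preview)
Your proposal is correct and follows essentially the same approach as the paper's own proof: both case-split on whether the first honest node enters $v$ via $\Cert_{v-1}$ (multicast) or $\TimeoutCert_{v-1}$ (unicast to $L_v$), deduce that $L_v$ obtains a view-entering certificate within $\Delta$, and conclude by applying the Propose rule. Your write-up is slightly more explicit in naming the two proposal types and in handling the trivial case where $L_v$ is itself the first honest node, but the underlying argument is identical.
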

\begin{proof}
    Let $\node{i}$ be the first honest node to enter $v$.
    By the view advancement rule, $\node{i}$ may have entered $v$ via either $\Cert_{v-1}$ or $\TimeoutCert_{v-1}$.
    In the former case, it would have multicasted this certificate, and in the latter it would have unicasted it to $L_v$.
    Therefore, in either case, by the view advancement and proposal rules, $L_v$ will receive a certificate that will allow it to enter $v$ and propose before $t + \Delta$.
    Moreover, since $L_v$ is honest, it will multicast its proposal, so all honest nodes will receive it before $t + 2\Delta$.
\end{proof}

\begin{lemma}
    \label{lem:certification}
    If the first honest node to enter view $v$ does so at time $t$ after GST and $L_v$ is honest, then all honest nodes receive $\Cert_v(B_k)$ for some block $B_k$ proposed by $L_v$, before $t + 3\Delta$.
\end{lemma}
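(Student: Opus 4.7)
The plan is to combine Claim~\ref{clm:honest-propose}, Lemma~\ref{lem:view-sync-gst}, and Corollary~\ref{cor:tc-bound} to place every honest node in view $v$ with $\TimeoutView_j < v$ by the time it receives $L_v$'s proposal, and then to verify that every honest node casts a matching vote of uniform type, so that $2f+1$ such votes aggregate into $\Cert_v(B_k)$ at every honest node by $t + 3\Delta$.

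First I would apply Claim~\ref{clm:honest-propose} to conclude that $L_v$ multicasts its proposal by $t + \Delta$, so every honest node receives it by $t + 2\Delta$. Simultaneously, Lemma~\ref{lem:view-sync-gst} guarantees that every honest node has entered view $v$ or higher by $t + 2\Delta$, while Claim~\ref{clm:timeout-bound} and Corollary~\ref{cor:tc-bound} prevent any honest node from sending $\TimeoutMessage_{v'}$, and any $\TimeoutCert_{v'}$ from existing, for $v' \ge v$ before $t + 3\Delta$. Consequently, any honest node that has advanced beyond $v$ by $t + 2\Delta$ can only have done so via some $\Cert_v$ and, per the view advancement rule, would have multicast that certificate upon entering $v+1$; in that case every honest node receives $\Cert_v$ by $t + 3\Delta$ and we are done. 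Otherwise, every honest node is in view $v$ with $\TimeoutView_j < v$ when it processes $L_v$'s proposal, and it only remains to show that each such node casts a vote.

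Next I would split on proposal type. If $L_v$ entered $v$ via $\Cert_{v-1}(B_{k-1})$, it issues $\sig{\Propose, B_k, \Cert_{v-1}(B_{k-1}), v}$. Conditions (i) and (ii) of the normal vote rule hold because $\TimeoutView_j < v$ and $L_v$ is honest; for condition (iii), if $\node{j}$ previously sent some $\sig{\OptVote, H(B'_k), v}_j$, then the optimistic vote rule pinned $\Lock_j$ to a certificate $\Cert_{v-1}(B'_{k-1})$ whose existence, together with that of $\Cert_{v-1}(B_{k-1})$, forces $B'_{k-1} = B_{k-1}$ by Lemma~\ref{lem:view-safety-2}, and since only $L_v$ proposes for view $v$ and the payload of $v$ is fixed, $B'_k = B_k$, ruling out equivocation. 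Every honest $\node{j}$ therefore multicasts $\sig{\Vote, H(B_k), v}_j$. If $L_v$ instead entered $v$ via $\TimeoutCert_{v-1}$, the fallback vote rule has no anti-equivocation clause and its two conditions hold by the same reasoning, so every honest $\node{j}$ multicasts $\sig{\FallbackVote, H(B_k), v}_j$. In both cases, at least $2f+1$ matching vote messages of uniform type are multicast by $t + 2\Delta$ and reach every honest node by $t + 3\Delta$, yielding $\Cert_v(B_k)$ everywhere.

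The main obstacle is case (iii) of the normal vote rule: justifying that any optimistic vote previously cast in view $v$ by an honest node was necessarily for exactly the block $L_v$ includes in the normal proposal. That argument stitches together Lemma~\ref{lem:view-safety-2} (to collapse the two view-$v{-}1$ certificates to a common parent), the exclusive-proposer status of $L_v$ in view $v$, and the fixed-payload convention from Section~\ref{sec:preliminaries}; the surrounding $\Delta$-arithmetic and case split on proposal type are routine.
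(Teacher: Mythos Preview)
Your proposal is correct and mirrors the paper's proof: use Claim~\ref{clm:honest-propose}, Lemma~\ref{lem:view-sync-gst}, and Corollary~\ref{cor:tc-bound} to reduce to the case where every honest node is in $v$ with $\TimeoutView_j < v$ when $L_v$'s proposal arrives, then verify the vote-rule conditions. The only nuance is that in your ``advanced beyond $v$'' branch you should invoke Claim~\ref{clm:sequential-progress-2} (as the paper does) to conclude that some honest node specifically entered $v+1$ and multicast $\Cert_v$, since a node can skip views via a higher-ranked certificate; otherwise your argument---including the voter-side treatment of the anti-equivocation clause via Lemma~\ref{lem:view-safety-2} and the fixed-payload convention---matches the paper's.
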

\begin{proof}
    By~\Cref{lem:view-safety-2}, only one block can become certified for a given view.
    Thus, if $\Cert_v(B_k)$ exists then any node that receives a view $v$ block certificate must receive $\Cert_v(B_k)$.
    Additionally, by~\Cref{lem:view-sync-gst} and~\Cref{clm:honest-propose}, all honest nodes enter $v$ or higher and receive a proposal from $L_v$ before $t + 2\Delta$.
    Moreover, since $t$ is defined as the time that the first honest node enters $v$, no honest node will multicast $\TimeoutMessage_v$ before $t + 3\Delta$.
    Therefore, if any honest node enters a view greater than $v$ before $t + 2\Delta$ then, by~\Cref{clm:sequential-progress-2}, at least one honest node must have already entered $v+1$ via $\Cert_v(B_k)$.
    By the view advancement rule, this node would have multicasted $\Cert_v(B_k)$, so all nodes will receive this certificate before $t + 3\Delta$, completing the proof.
    Alternatively, if no honest node receives $\Cert_v(B_k)$ before $t + 2\Delta$, then all honest nodes will enter $v$ before $t + 2\Delta$.
    Moreover, since $L_v$ is honest, it will ensure that its proposal is well-formed: i.e., if it is a normal proposal then the proposed block will extend the block certified by the included block certificate.
    Otherwise, if it is a fallback proposal then the proposed block will extend the block certified by $L_v$'s $\Lock_i$, which, by the lock rule, will have a rank at least as great as that of the block certificate with the highest rank in the included timeout certificate.
    Additionally, since $L_v$ is honest, it will create only one normal proposal or one fallback proposal. 
    Moreover, if it creates a normal proposal then any equivocal optimistic proposal that it may have created will necessarily have a different parent than the normal proposal because honest leaders propose fixed block payloads for a given view, per the definition of a block given in~\Cref{sec:preliminaries}.
    Consequently, by~\Cref{lem:view-safety-2}, the parent of the equivocal optimistic block proposal cannot be certified, so, by the optimistic vote rule, no honest node will be able to vote for this proposal.
    Finally, since all honest nodes will receive $L_v$'s proposal before $t + 2\Delta$, they cannot have $\TimeoutView_i \ge v$ by this time. Thus, by the vote rules, they will all vote for the included block, so all honest nodes will be able to construct $\Cert_v(B_k)$ before $t + 3\Delta$.
\end{proof}

\begin{lemma}
    \label{lem:locking-2}
    If the first honest node to enter view $v$ does so at time $t$ after GST and $L_v$ is honest, then at least $f+1$ honest nodes lock $\Cert_v(B_k)$ upon entering $v+1$, and enter $v+1$ without multicasting $\TimeoutMessage_{v'}$ for $v' \ge v$.
\end{lemma}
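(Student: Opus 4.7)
The plan is to combine the guarantees of Lemma~\ref{lem:certification}, Claim~\ref{clm:timeout-bound}, and Corollary~\ref{cor:tc-bound} to pin down the state of every honest node within the window $[t, t+3\Delta)$. By Lemma~\ref{lem:certification}, every honest node receives $\Cert_v(B_k)$ before $t+3\Delta$, and by Lemma~\ref{lem:view-safety-2}, $B_k$ is the unique block certified in view $v$. By Claim~\ref{clm:timeout-bound} and Corollary~\ref{cor:tc-bound}, no honest node multicasts $\TimeoutMessage_{v'}$ for any $v' \ge v$ before $t+3\Delta$, and no $\TimeoutCert_{v'}$ with $v' \ge v$ exists in this window. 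Hence any honest node that advances to a view $\ge v+1$ before $t+3\Delta$ must do so via a block certificate.

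I would then split on whether the protocol jumps past $v+1$ within this window. If no honest node ever enters a view strictly greater than $v+1$ before $t+3\Delta$, then every honest node that enters $v+1$ does so via $\Cert_v(B_k)$; at the moment of entry, no certificate of higher rank can have reached it (such a certificate would, by the view-advancement rule, have already advanced it past $v+1$), so by the lock rule its $\Lock_i$ becomes exactly $\Cert_v(B_k)$. Since there are $2f+1$ honest nodes, far more than $f+1$ of them lock $\Cert_v(B_k)$ upon entering $v+1$.

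Otherwise, some honest node enters a view $v'' > v+1$ before $t+3\Delta$, which, by the advancement rule together with Corollary~\ref{cor:tc-bound}, requires $\Cert_{v+1}(B_{k+1})$ to have formed for some block $B_{k+1}$. During this window, fallback votes for view $v+1$ are impossible because they would require $\TimeoutCert_v$, so the $2f+1$ votes behind $\Cert_{v+1}$ must be optimistic or normal; each such vote requires the voter to be in view $v+1$ at the time of voting, and the only route into $v+1$ before $t+3\Delta$ is via $\Cert_v(B_k)$. Hence at least $f+1$ honest contributors to $\Cert_{v+1}$ must have entered $v+1$ via $\Cert_v(B_k)$ and thus locked it at that moment.

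In both cases, at least $f+1$ honest nodes lock $\Cert_v(B_k)$ upon entering $v+1$, and because each such entry occurs before $t+3\Delta$, Claim~\ref{clm:timeout-bound} guarantees that none of them has multicast $\TimeoutMessage_{v'}$ for any $v' \ge v$ beforehand, yielding the second clause. I expect the main obstacle to be the second case: carefully ruling out fallback-vote pathways for $\Cert_{v+1}$ within $[t, t+3\Delta)$ and arguing that its honest contributors must have first entered $v+1$ specifically via $\Cert_v(B_k)$, rather than by any other means, so that their lock at the moment of entry is exactly $\Cert_v(B_k)$.
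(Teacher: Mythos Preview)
Your proposal is correct and follows essentially the same approach as the paper: invoke Lemma~\ref{lem:certification} to guarantee all honest nodes receive $\Cert_v(B_k)$ before $t+3\Delta$, use Corollary~\ref{cor:tc-bound} to rule out timeout-certificate-based view changes in that window, and then split into two cases depending on whether any honest node gets past $v+1$ within the window. The paper's case split is phrased slightly differently (whether all honest nodes receive $\Cert_v(B_k)$ before any $\Cert_{v''}$ with $v'' > v$), but this is equivalent to yours, and both proofs contain the same mildly implicit recursive step in the second case (the paper writes ``Implicitly then'' where you assert that entering $v'' > v+1$ forces $\Cert_{v+1}$ to exist); making that induction explicit via repeated application of Claim~\ref{clm:sequential-progress-2} would tighten both arguments.
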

\begin{proof}
    By~\Cref{lem:certification}, all honest nodes will receive $\Cert_v(B_k)$ by $t + 3\Delta$. 
    Moreover, by~\Cref{cor:tc-bound}, $\TimeoutCert_{v'}$ cannot exist before $t + 3\Delta$. 
    Therefore, the only way for any honest node to exit view $v$ is via some $\Cert_{v'}$.
    We consider two cases: (i) when all honest nodes receive $\Cert_v(B_k)$ before $\Cert_{v''}$, where $v'' > v$, and; (ii) when at least one honest node does not. 
    In the first case, by the lock rule and~\Cref{lem:view-safety-2}, all honest nodes will have $\Lock_i < \Cert_v(B_k)$ before receiving $\Cert_v(B_k)$ and will therefore lock $\Cert_v(B_k)$ when they receive it.
    Moreover, since they cannot have received a certificate for $v'$ before $\Cert_v(B_k)$, they must be in view $v$ or lower when they receive this certificate and thus, by the view advancement rules, will enter $v+1$.
    Otherwise, at least one honest node must receive $\Cert_{v''}$ before $\Cert_v(B_k)$.
    In this case, by the definition of a block certificate, a set $H_1$ of at least $f+1$ honest nodes must vote towards $\Cert_{v''}$ after entering $v''$, which, as previously concluded, they must do via $\Cert_{v''-1}$.
    Implicitly then, a set $H_2$ of at least $f+1$ honest nodes must enter $v+1$ via $\Cert_v(B_k)$, and, since the view advancement rules therefore ensure that they cannot have received a certificate for a higher view before they do so, by the lock rule and the block certificate ranking rule, they will lock $\Cert_v(B_k)$.
    Finally, in both cases, because all honest nodes must receive $\Cert_v(B_k)$ before $t + 3\Delta$, by~\Cref{clm:timeout-bound}, no honest node can have multicasted $\TimeoutMessage_{v'}$ before exiting $v$.
\end{proof}

\begin{lemma}\label{lem:certified-successor-2}
    If the first honest node to enter view $v$ does so after GST, $L_v$ is honest and proposes a block $B_k$ that becomes certified, and $\Cert_{v+1}(B_{k'})$ exists, then $B_{k'}$ directly extends $B_k$.
\end{lemma}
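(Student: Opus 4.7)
The plan is to mirror the structure of \Cref{lem:certified-successor} from \simple, but adapted to handle the three vote types in \pipelined (optimistic, normal, and fallback). I will begin by invoking \Cref{lem:locking-2} to obtain a set $H_1$ of at least $f+1$ honest nodes that lock $\Cert_v(B_k)$ upon entering $v+1$ and that do so without ever multicasting $\TimeoutMessage_{v'}$ for any $v' \ge v$. By the definition of a block certificate, the existence of $\Cert_{v+1}(B_{k'})$ implies the existence of a set $H_2$ of at least $f+1$ honest voters for $B_{k'}$ in view $v+1$, all using the same vote rule. By \Cref{clm:honest-majority-intersection}, $H_1$ and $H_2$ share at least one honest node $\node{i}$, and I will analyze the proof case-by-case on which vote rule produced $\Cert_{v+1}(B_{k'})$.

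In the fallback case, $\Cert^f_{v+1}(B_{k'})$ would require the fallback proposal to carry $\TimeoutCert_v$, which in turn requires at least $2f+1$ distinct $\TimeoutMessage_v$ messages. However, the $f+1$ honest nodes of $H_1$ never multicast $\TimeoutMessage_{v}$, leaving at most $n - (f+1) = 2f$ potential senders, so $\TimeoutCert_v$ cannot exist and this case is vacuous. In the normal case, $\node{i} \in H_2$ must have received a valid normal proposal $\sig{\Propose, B_{k'}, \Cert_v(B_h), v+1}$ where $B_{k'}$ directly extends $B_h$; applying \Cref{lem:view-safety-2} to $\Cert_v(B_h)$ and $\Cert_v(B_k)$ yields $B_h = B_k$, hence $B_{k'}$ directly extends $B_k$. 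The optimistic case is analogous: by the optimistic vote rule, $\node{i}$ must have had $\Lock_i = \Cert_v(B_{k'-1})$ at the moment it voted, and since $\node{i} \in H_1$ also locked $\Cert_v(B_k)$, \Cref{lem:view-safety-2} gives $B_{k'-1} = B_k$.

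The main obstacle I anticipate is the optimistic case, specifically convincing the reader that $\node{i}$'s lock at the moment of the optimistic vote in $v+1$ is exactly $\Cert_v(B_k)$ and not some higher-ranked certificate. The lock rule in \pipelined permits monotonic updates at any time, so in principle $\Lock_i$ could have advanced past view $v$; however, the optimistic vote rule syntactically requires $\Lock_i$ to be a view-$v$ certificate, which forces the equality and delivers the conclusion via \Cref{lem:view-safety-2}. Once this subtlety is handled, all three cases collapse cleanly and the lemma follows.
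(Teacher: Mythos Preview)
Your proposal is correct and follows essentially the same approach as the paper: invoke \Cref{lem:locking-2} to obtain $f+1$ honest lockers who never send $\TimeoutMessage_v$, use this to rule out the fallback-certificate case, and then close the optimistic and normal cases via \Cref{lem:view-safety-2}. The only cosmetic difference is that in the optimistic case the paper argues the common node's lock is exactly $\Cert_v(B_k)$ via the view-advancement rule, whereas you more directly observe that the optimistic vote rule forces a view-$v$ lock and then apply \Cref{lem:view-safety-2}; both arguments are valid and equivalent.
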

\begin{proof}
    By~\Cref{lem:certification} and~\Cref{lem:locking-2}, all honest nodes will receive $\Cert_v(B_k)$ and a set $H$ of at least $f+1$ of them will lock it upon entering $v+1$ without multicasting $\TimeoutMessage_v$.
    Therefore, by~\Cref{clm:quorum-intersection} and~\Cref{clm:honest-majority-intersection}, $\TimeoutCert_v$ cannot exist, so $\Cert_{v+1}(B_{k'})$ must be $\Cert^o_{v+1}(B_{k'})$ or $\Cert^n_{v+1}(B_{k'})$.
    Additionally, by the same claims, at least one honest node, say $\node{i}$, must both lock $\Cert_v(B_k)$ and vote for $B_{k'}$.
    By the view advancement rule, $\node{i}$ would have entered a higher view than $v+1$ if it had received a block certificate for a higher view than $v$ before voting for $B_{k'}$.
    Thus, since the vote rules only allow $\node{i}$ to vote towards $\Cert_{v+1}(B_{k'})$ whilst in $v+1$ and because it must lock $\Cert_v(B_k)$ upon entering $v+1$, it must have been locked on $\Cert_v(B_k)$ when it voted for $B_{k'}$.
    Furthermore, since the optimistic vote rule requires $\node{i}$ to be locked on the parent of $B_{k'}$, if $\node{i}$ votes for an optimistic proposal containing $B_{k'}$ then $B_{k'}$ must directly extend $B_k$.
    Similarly, the normal vote rule requires the proposal containing $B_{k'}$ to be justified by some $\Cert_v$ that certifies the parent of $B_{k'}$.
    By~\Cref{lem:view-safety-2}, $\Cert_v = \Cert_v(B_k)$, so $B_{k'}$ must directly extend $B_k$.
\end{proof}

\begin{theorem}[Liveness]\label{the:liveness-2}
Each client request is eventually committed by all honest nodes.
\end{theorem}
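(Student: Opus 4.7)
The plan is to mirror the liveness proof for \simple (\Cref{thm:liveness}), adapting each invocation to the corresponding \pipelined-specific lemma. As before, it suffices to show that all honest nodes continue to commit new blocks to their local blockchains after GST; this together with the implementation-level assumptions listed alongside~\Cref{def:smr} yields liveness.

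First, I would invoke~\Cref{lem:view-progress-2} to argue that every honest node enters arbitrarily high views. Since the leader election function is assumed to continually produce sequences containing at least two consecutive honest leaders after GST, there must exist two consecutive views $v$ and $v+1$, both of whose leaders $L_v$ and $L_{v+1}$ are honest, such that the first honest node enters $v$ at some time $t \ge \text{GST}$.

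Next, I would apply~\Cref{lem:certification} to view $v$ to conclude that there exists a block $B_k$ proposed by $L_v$ such that every honest node receives $\Cert_v(B_k)$ by $t + 3\Delta$. Combined with~\Cref{lem:locking-2}, at least $f+1$ honest nodes enter $v+1$ without multicasting $\TimeoutMessage_{v'}$ for any $v' \ge v$, and they do so by locking $\Cert_v(B_k)$. Since all honest nodes receive $\Cert_v(B_k)$ and at least one honest node enters $v+1$, the first honest node enters $v+1$ at some time $t' \ge t$ (still after GST). I would then reapply~\Cref{lem:certification} to view $v+1$ with $L_{v+1}$ honest to deduce the existence of some $\Cert_{v+1}(B_{k'})$ received by every honest node. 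By~\Cref{lem:certified-successor-2}, $B_{k'}$ directly extends $B_k$. The Direct Commit rule (\Cref{step:direct-commit-2}) then fires at every honest node, committing $B_k$ and, via~\Cref{step:indirect-commit-2}, all of its uncommitted ancestors.

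The only subtle point---and what I would flag as the main obstacle---is verifying that the second invocation of~\Cref{lem:certification} is legitimate: one must check that the hypothesis ``the first honest node to enter $v+1$ does so after GST'' holds. This follows because by~\Cref{lem:locking-2} no honest node exits $v$ via a timeout certificate (indeed $\TimeoutCert_v$ cannot exist), so every honest node that enters $v+1$ does so via $\Cert_v(B_k)$, which, by~\Cref{lem:certification}, is received no earlier than time $t \ge \text{GST}$. Hence the hypotheses of all invoked lemmas are met, and the argument shows that \pipelined commits a fresh block whenever two consecutive honest leaders are elected after GST, which, by~\Cref{lem:view-progress-2} and the leader election assumption, happens infinitely often. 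This suffices for liveness.
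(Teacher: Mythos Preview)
Your proposal is correct and follows essentially the same approach as the paper's proof: invoke \Cref{lem:view-progress-2} for view progress, then apply \Cref{lem:certification}, \Cref{lem:locking-2}, and \Cref{lem:certified-successor-2} to two consecutive views with honest leaders after GST to trigger the Direct Commit rule. Your extra care in verifying the hypothesis of the second application of \Cref{lem:certification} is more explicit than the paper, which simply reapplies the lemmas without comment.
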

\begin{proof}
As in~\Cref{thm:liveness}, we show that all honest nodes continue to commit new blocks to their local blockchains after GST, which, together with the assumptions mentioned along with~\Cref{def:smr} in~\Cref{sec:preliminaries}, is sufficient to achieve liveness.

By~\Cref{lem:view-progress-2}, all honest nodes continually enter higher views.
Therefore, the protocol eventually reaches two consecutive views after GST, say $v$ and $v+1$, that have leaders $L_v$ and $L_{v+1}$ that are both honest.
By~\Cref{lem:certification} and~\Cref{lem:locking-2}, all honest nodes will receive $\Cert_v(B_k)$ and at least $f+1$ of them will lock it.
By the same lemmas, the same is also true for $\Cert_{v+1}(B_{k'})$.
Moreover, by~\Cref{lem:certified-successor-2}, $B_{k'}$ directly extends $B_k$; i.e., $k' = k+1$.
Consequently, by the commit rule, all honest nodes will commit $B_k$ upon receiving both $\Cert_v(B_k)$ and $\Cert_{v+1}(B_{k+1})$.
Hence, \pipelined commits a new block every time two consecutive, honest leaders are elected after GST.
\end{proof}

\begin{theorem}[Reorg resilience]\label{the:reorg-resilience-2}
If the first honest node to enter view $v$ does so after GST and $L_v$ is honest and proposes, then one of its proposed blocks, say $B_k$, becomes certified and for every $\Cert_{v'}(B_{k'})$ such that $v' \ge v$, $B_{k'}$ extends $B_k$.
\end{theorem}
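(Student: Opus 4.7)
The plan is to mirror the structure of the reorg resilience proof for \simple (\Cref{the:reorg-resilience}), but carefully accounting for the three vote types and the weaker locking guarantee that fallback proposals exploit. First I would invoke \Cref{lem:certification} to conclude that some block $B_k$ proposed by $L_v$ becomes certified, i.e. $\Cert_v(B_k)$ exists. The base case $v' = v$ follows immediately from \Cref{lem:view-safety-2} (uniqueness of certified blocks per view), and the base case $v' = v+1$ follows from \Cref{lem:certified-successor-2}. For the inductive step $v' > v+1$ I would assume the claim holds for every view in the range $[v, v'-1]$ and argue about $\Cert_{v'}(B_{k'})$ by case analysis on its type.

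If $\Cert_{v'}(B_{k'})$ is an optimistic or normal certificate, then at least one honest node voted for $B_{k'}$ under the optimistic or normal vote rule, which requires $B_{k'}$ to directly extend a block $B_{k'-1}$ certified by some $\Cert_{v'-1}(B_{k'-1})$: in the optimistic case the voter's lock is $\Cert_{v'-1}(B_{k'-1})$, and in the normal case the proposal embeds $\Cert_{v'-1}(B_{k'-1})$. In either situation, by the induction hypothesis applied to $\Cert_{v'-1}(B_{k'-1})$, $B_{k'-1}$ extends $B_k$, hence so does $B_{k'}$.

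The interesting case, and the one I expect to be the main obstacle, is when $\Cert_{v'}(B_{k'})$ is a fallback certificate $\Cert^f_{v'}(B_{k'})$. Then all contributing honest voters must have processed a fallback proposal $\sig{\FallbackPropose, B_{k'}, \Cert_{v''}(B_h), \TimeoutCert_{v'-1}, v'}$ in which $\Cert_{v''}(B_h)$ is the highest-ranked block certificate inside $\TimeoutCert_{v'-1}$ and $B_{k'}$ directly extends $B_h$. By \Cref{clm:honest-normal-proposal-cert-2} (or the analogous argument for fallback proposals) we have $v'' < v'$. The key step is to show $v'' \ge v$: by \Cref{lem:locking-2}, a set $H$ of at least $f+1$ honest nodes lock $\Cert_v(B_k)$ upon entering $v+1$ without ever sending $\TimeoutMessage_{v^*}$ for any $v^* \ge v$. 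Since $\TimeoutCert_{v'-1}$ aggregates $2f+1$ timeout messages for $v'-1 \ge v$, by \Cref{clm:honest-majority-intersection} it must contain a $\TimeoutMessage_{v'-1}$ from some member of $H$, and by \Cref{fact:timeout-qc} that message embeds a block certificate of rank at least $v$. Consequently the highest-ranked certificate in $\TimeoutCert_{v'-1}$ satisfies $v'' \ge v$, so $v \le v'' < v'$ and the induction hypothesis gives that $B_h$ extends $B_k$; since $B_{k'}$ directly extends $B_h$, $B_{k'}$ extends $B_k$.

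The only subtlety in the fallback case is ensuring that the honest nodes identified by \Cref{lem:locking-2} really must appear among the signers of $\TimeoutCert_{v'-1}$, which is exactly what \Cref{clm:honest-majority-intersection} provides, combined with the fact that these $f+1$ nodes do not time out at any view $\ge v$ (so any timeout they later send for $v'-1$ carries a lock at least as high as $\Cert_v(B_k)$, by \Cref{fact:timeout-qc} and the lock rule). Once this is established, the induction closes uniformly across all three certificate types, completing the proof.
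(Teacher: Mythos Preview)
Your proposal is correct and follows essentially the same approach as the paper's proof: invoke \Cref{lem:certification} for the existence of $\Cert_v(B_k)$, handle $v'=v$ via \Cref{lem:view-safety-2} and $v'=v+1$ via \Cref{lem:certified-successor-2}, and in the inductive step split on whether the honest voters for $B_{k'}$ used the optimistic/normal rules (immediate from the hypothesis applied to $\Cert_{v'-1}$) or the fallback rule (use \Cref{lem:locking-2} plus \Cref{clm:honest-majority-intersection} and \Cref{fact:timeout-qc} to force $v''\ge v$). One small wording fix: \Cref{lem:locking-2} does \emph{not} say the $f+1$ honest nodes never send $\TimeoutMessage_{v^*}$ for $v^*\ge v$; it says they enter $v+1$ without having done so, which is exactly what you need so that any $\TimeoutMessage_{v'-1}$ they contribute is sent after they locked $\Cert_v(B_k)$---your parenthetical already captures this correctly.
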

\begin{proof}
By~\Cref{lem:certification}, $L_v$ produces a certified block. 
Let this block be denoted $B_k$.
We now show that for every $\Cert_{v'}(B_{k'})$, $B_{k'}$ extends $B_k$.

If $v' = v$ then, by~\Cref{lem:view-safety-2}, $B_{k'} = B_k$ and thus, per the definition of block extension given in~\Cref{sec:preliminaries}, $B_{k'}$ extends $B_k$. We now complete the proof for $v' > v$ by strong induction on $v'$, however, since~\Cref{lem:certified-successor-2} covers the base case ($v' = v+1$), we proceed directly with the inductive step.

\paragraph{Inductive step: $v' > v+1$.} For our induction hypothesis, we assume that the theorem holds up to view $v'-1$. 
That is, we assume that every $\Cert_{v^*}(B_{k^*})$ with $v \le v^* < v'$ extends $B_k$.
We use this assumption to prove that it also holds for $v'$.
We first observe that the existence of $\Cert_{v'}(B_{k'})$ implies that a set $H_1$ of at least $f+1$ honest nodes voted for $B_{k'}$ in view $v'$.
If any of these nodes voted using the optimistic or normal vote rules then they must have received $\Cert_{v'-1}(B_{k'-1})$ and $B_{k'}$ must extend $B_{k'-1}$.
Therefore, since by the induction hypothesis $B_{k'-1}$ extends $B_k$, $B_{k'}$ also extends $B_k$.
Alternatively, if no honest node used either of these rules to vote for $B_{k'}$ then all members of $H_1$ must have used the fallback vote rule to vote for $B_{k'}$. 
Therefore, they must have received $\sig{\FallbackPropose, B_{k'}, \Cert_{v''}(B_{k''}), \TimeoutCert_{v'-1}, v'}$ such that $\Cert_{v''}(B_h)$ had a rank greater than or equal to that of the highest ranked block certificate in $\TimeoutCert_{v'-1}$ and $B_{k'}$ directly extended $B_{k''}$, before multicasting a $\TimeoutMessage$ message for $v'$ or any higher view.
Moreover, by~\Cref{clm:honest-normal-proposal-cert-2}, $v'' < v'$.
We now show that $v'' \ge v$.

By the fallback vote rule, $H_1$ will not vote for a fallback proposal for $v'$ unless it contains a valid $\TimeoutCert_{v'-1}$.
Hence, a set $H_2$ of at least $f+1$ honest nodes must multicast $\TimeoutMessage_{v'-1}$.
Moreover, by~\Cref{lem:locking-2}, a set $H_3$ of at least $f+1$ honest nodes must lock $\Cert_v(B_k)$ while entering $v+1$ and must do so without having multicasted $\TimeoutMessage_{v'}$ for $v' \ge v$.
By~\Cref{clm:honest-majority-intersection}, $H_2$ and $H_3$ must have at least one node, say $\node{i}$, in common.
Therefore, since $\node{i}$ must lock $\Cert_v(B_k)$ before sending $\TimeoutMessage_{v'-1}$, by the lock rule and the block certificate ranking rule, every valid $\TimeoutCert_{v'-1}$ must contain a block certificate with a rank of at least $v$.
Thus, $\Cert_{v''}(B_{k''}) \ge \Cert_v(B_k)$, so $v'' \ge v$.
Therefore, since $v \le v'' < v'$, by the induction hypothesis, $B_{k'}$ extends $B_k$.
\end{proof}

\section{\commit Security Analysis}
\label{sec:cm_proof}
As previously observed, because \commit retains the vote and commit rules of \pipelined, the same liveness argument that can be made for the latter also applies to the former. The same is also true for the reorg resilience of the protocol. However, \commit's new commit path requires additional justification. We prove the safety of the additional rules below, and observe that~\Cref{thm:safety-2} holds for \commit when~\Cref{lem:uniq-xtn-commit} is invoked along with~\Cref{lem:uniq-xtn-pipelined}. We subsequently prove that \commit requires only a single honest leader to commit a new block after GST.

\begin{claim}\label{clm:commit-qc-no-tc}
    If $2f+1$ distinct nodes send $\sig{\Commit, H(B_k), v}_*$ then $\TimeoutCert_v$ cannot exist.
\end{claim}
\begin{proof}
    Let $H$ denote the set of $f+1$ honest nodes that sent $\sig{\Commit, H(B_k), v}_*$.
    By the pre-commit rules, any member of $H$ that sent this message must have received $\Cert_v(B_k)$ whilst having $\TimeoutView_* < v$.
    Therefore, by the view advancement rule, all members of $H$ must have entered $v+1$ before sending $\Timeout_v$ and thus, by the timeout rule, will never send $\Timeout_v$.
    Therefore, since this implies that at most $2f$ distinct $\Timeout_v$ messages will ever exist, $\TimeoutCert_v$ cannot exist.
\end{proof}

\begin{claim}\label{clm:commit-qc-future-tcs}
    If $2f+1$ distinct nodes send $\sig{\Commit, H(B_k), v}_*$ then every $\TimeoutCert_{v'}$ for view $v' > v$ must contain a block certificate for $v$ or higher.
\end{claim}
\begin{proof}
    Let $H$ denote the set of $f+1$ honest nodes that sent $\sig{\Commit, H(B_k), v}_*$.
    By the lock rule, the members of $H$ that voted to commit $B_k$ via the direct pre-commit rule must have locked $\Cert_v(B_k)$ upon receiving it.
    Comparatively, those that voted to commit $B_k$ via the indirect pre-commit rule must have previously sent $\sig{\Commit, H(B_l), v''}_*$ for some descendant $B_l$ of $B_k$ for some $v'' > v$.
    Thus, by the pre-commit and lock rules, these nodes must have received and locked $\Cert_{v''}(B_l)$ or some higher ranked block certificate before sending $\sig{\Commit, H(B_k), v}_*$.
    In either case, the members of $H$ must have locked a block certificate for $v$ or higher whilst having $\TimeoutView_* < v$, so all timeout messages sent by these nodes for views greater than $v$ will necessarily contain $\Cert_v(B_k)$ or higher.
    Thus, since every $\TimeoutCert_{v'}$ for view $v' > v$ must contain a timeout message from at least one member of $H$, every such certificate must contain a block certificate for $v$ or higher.
\end{proof}

\begin{lemma}[Unique Extensibility Continued]\label{lem:uniq-xtn-commit}
    If an honest node, say $\node{i}$, directly commits a block $B_k$ via the alternative direct commit rule that was certified for view $v$ and $\Cert_{v'}(B_{k'})$ exists such that $v' \ge v$, then $B_{k'}$ extends $B_k$.
\end{lemma}
\begin{proof}
    If $v' = v$ then, by~\Cref{lem:view-safety-2}, $B_{k'} = B_k$ and thus, per the definition of block extension given in~\Cref{sec:preliminaries}, $B_{k'}$ extends $B_k$.
    We now complete the proof for $v' > v$ by strong induction on $v'$.

    \paragraph{Base case: $v' = v+1$.} By~\Cref{clm:commit-qc-no-tc}, $\TimeoutCert_v$ cannot exist.
    Therefore, $\Cert_{v'}(B_{k'})$ must be either $\Cert^o_{v+1}(B_{k'})$ or $\Cert^n_{v+1}(B_{k'})$.
    In either case, by the respective vote rules, $B_{k'}$ extends $B_k$.

    \paragraph{Inductive step: $v' > v+1$.} For our induction hypothesis, we assume that the lemma holds up to view $v'-1$. 
    That is, we assume that every $\Cert_{v^*}(B_{k^*})$ with $v \le v^* < v'$ extends $B_k$.
    We use this assumption to prove that it also holds for $v'$. 
    We first observe that the existence of $\Cert_{v'}(B_{k'})$ implies that a set $H_1$ of at least $f+1$ honest nodes voted for $B_{k'}$ in view $v'$.
    If any of these nodes voted using the optimistic or normal vote rules then they must have observed $\Cert_{v'-1}(B_{k'-1})$ and $B_{k'}$ must extend $B_{k'-1}$.
    Therefore, since by the induction hypothesis $B_{k'-1}$ extends $B_k$, $B_{k'}$ also extends $B_k$.
    Alternatively, if no honest node used either of these rules to vote for $B_{k'}$ then all members of $H_1$ must have used the fallback vote rule to vote for $B_{k'}$. 
    Therefore, they must have received $\sig{\FallbackPropose, B_{k'}, \Cert_{v''}(B_{k''}), \TimeoutCert_{v'-1}, v'}$ such that $\Cert_{v''}(B_{k''})$ had a rank at least as great as that of the highest ranked block certificate in $\TimeoutCert_{v'-1}$ and $B_{k'}$ directly extended $B_{k''}$, before multicasting $\TimeoutMessage$ for $v'$ or any higher view.
    Moreover, by~\Cref{clm:honest-normal-proposal-cert-2}, $v'' < v'$.
    We now show that $v'' \ge v$.
    
    By the alternative direct commit rule, $\node{i}$ must have received $2f+1$ distinct $\sig{\Commit, H(B_k), v}_*$, at least $f+1$ of which must have been sent by a set $H_2$ of distinct honest nodes.
    Therefore, by~\Cref{clm:commit-qc-future-tcs}, every timeout certificate for a view greater than $v$ must contain a block certificate for $v$ or higher.
    Thus, because $v'-1 \ge v+1$ and since by~\Cref{clm:honest-majority-intersection} $H_1$ and $H_2$ must have at least one node in common, the highest ranked block certificate of $\TimeoutCert_{v'-1}$ must have a rank of at least as great as $\Cert_v(B_k)$.
    Hence, $v'' \ge v$.
    Therefore, since $v \le v'' < v'$, by the induction hypothesis, $B_{k''}$ extends $B_k$, so $B_{k'}$ also extends $B_k$.
\end{proof}

\begin{claim}[Single Leader Commit]\label{clm:single-leader-commit}
    If the first honest node to enter view $v$ does so at time $t$ after GST and $L_v$ is honest, then all honest nodes commit one of its proposals before $t + 4\Delta$.
\end{claim}
\begin{proof}
    By~\Cref{lem:certification}, all honest nodes receive $\Cert_v(B_k)$ for some block $B_k$ proposed by $L_v$, before $t + 3\Delta$.
    Therefore, by the alternative direct commit rule, if they all multicast $\sig{\Commit, H(B_k), v}_*$ upon receiving this certificate, then all honest nodes will commit $B_k$ before $t + 4\Delta$.
    Otherwise, at least one honest node, say $\node{i}$, must fail to send $\sig{\Commit, H(B_k), v}_*$ before $t + 3\Delta$.
    However, by~\Cref{clm:timeout-bound}, no honest node can have its view timer expire before this time, so $\node{i}$ cannot have $\TimeoutView_i \ge v$ upon receiving $\Cert_v(B_k)$.
    Thus, by the pre-commit rules, $\node{i}$ must neither be in view $v$ or lower, nor have multicasted a commit vote for any descendant of $B_k$, upon receiving $\Cert_v(B_k)$.
    Let $v'$ denote the view that $\node{i}$ was in upon receiving this certificate.
    Since, by~\Cref{cor:tc-bound}, no timeout certificate can exist for $v$ or higher before $t + 3\Delta$, $\node{i}$ must have entered $v'$ via $\Cert_{v'-1}(B_l)$.
    Moreover, by the direct pre-commit rule, it would have multicasted $\sig{\Commit, H(B_l), v'-1}_j$ upon receiving this certificate.
    Therefore, $B_l$ must not be a descendant of $B_k$.
    However, it must also be true that $v' > v + 1$ and that all block certificates for the views between $v'$ and $v$ must be either $\Cert^o$ or $\Cert^n$.
    Therefore, by the corresponding vote rules, $B_l$ must be a descendant of $B_k$, contradicting the former conclusion that it must not be.
    Therefore, all honest nodes will multicast $\sig{\Commit, H(B_k), v}_*$ before $t + 3\Delta$, so all honest nodes will commit $B_k$ before $t + 4\Delta$.
\end{proof}

\section{Further Optimizing View Length}\label{sec:view_length_optimiztion}
The view lengths of \pipelined and \commit can be further reduced in views without valid proposals by applying the modifications presented in~\Cref{fig:view-length-optimization}. These modifications leverage the fact that these protocols guarantee that all nodes will receive a valid proposal from an honest $L_v$ within $2\Delta$ of the first honest node entering $v$ after GST (see~\Cref{clm:honest-propose}). Consequently, if a node has to wait more than $2\Delta$ to receive a valid proposal, it can be confident either that the network is asynchronous or that the leader is Byzantine and therefore has reason to begin the view change process. Similarly, if it votes for such a proposal then it should expect to construct a certificate for the included block within $3\Delta$ of the first honest node entering the view (see~\Cref{lem:certification}) and thus should increase its view timer by $\Delta$ upon voting to allow sufficient time for the votes of its peers to arrive. While this latter modification preserves liveness, it means that \pipelined and \commit will only exhibit a view length of $2\Delta$ in views in which no honest nodes vote for any block. Accordingly, it is trivial for Byzantine leaders to ensure that their views retain the original view length of $3\Delta$. Even so, this modification represents a meaningful optimization in the crash fault tolerant setting.

\begin{figure}[!t]
\small
    \begin{boxedminipage}[ht]{\linewidth}
    The \vl of \pipelined and \commit can be further reduced by modifying the protocol for for $\node{i}$ presented in~\Cref{fig:pipelined-moonshot} as follows:
    \begin{enumerate}[leftmargin=*]
        \item \textbf{Reset Timer.} Upon entering $v$, reset $\ViewTimer_i$ to $2\Delta$ and start counting down. This replaces the corresponding logic given in the \textbf{Advance View} rule.
        \item \textbf{Increase Timer.} Upon voting in $v$, increase $\ViewTimer_i$ by $\Delta$.
    \end{enumerate}
\end{boxedminipage}
\caption{Further Optimizing View Length}
\label{fig:view-length-optimization}
\end{figure}

\end{document}